 \newcommand{\bs}{\bigskip}
 \newcommand{\ms}{\medskip}
 \newcommand{\n}{\noindent}
 \newcommand{\s}{\smallskip}
 \newcommand{\hs}[1]{\hspace*{ #1 mm}}
 \newcommand{\vs}[1]{\vspace*{ #1 mm}}
 \newcommand{\setempty}{\varnothing}
 \newcommand{\nat}{\mathbb{N}}
 \newcommand{\integer}{\mathbb{Z}}
 \newcommand{\complex}{\mathbb{C}}
 \newcommand{\co}{\mathrm{co}\mbox{-}}
 \newcommand{\AAA}{{\cal A}}
 \newcommand{\BB}{{\cal B}}
 \newcommand{\CC}{{\cal C}}
 \newcommand{\FF}{{\cal F}}
 \newcommand{\HH}{{\cal H}}
 \newcommand{\LL}{{\cal L}}
 \newcommand{\NN}{{\cal N}}
 \newcommand{\MM}{{\cal M}}
 \newcommand{\SSS}{{\cal S}}
 \newcommand{\dl}{\mathrm{L}}
 \newcommand{\nl}{\mathrm{NL}}
 \newcommand{\pp}{\mathrm{PP}}
 \newcommand{\poly}{\mathrm{poly}}
\theoremstyle{plain}
 \newtheorem{theorem}{Theorem}[section]
 \newtheorem{lemma}[theorem]{Lemma}
\newtheorem{proposition}[theorem]{{\bf Proposition}}
 \newtheorem{corollary}[theorem]{Corollary}
 \newtheorem{claim}{Claim}
 \newenvironment{proofof}[1]{\vspace*{5mm} \par \noindent
         {\bf Proof of #1.\hs{2}}}{\hfill$\Box$ \vspace*{3mm}}
 \newenvironment{proof}{\par \noindent
            {\bf Proof. \hs{2}}}{\hfill$\Box$ \vspace*{3mm}}
 \newcommand{\ceilings}[1]{\lceil #1 \rceil}
 \newcommand{\pair}[1]{\langle #1 \rangle}
 \newcommand{\qubit}[1]{| #1 \rangle}
 \newcommand{\bra}[1]{\langle #1 |}
 \newcommand{\measure}[2]{\langle #1 | #2 \rangle}
\newcommand{\ignore}[1]{}
\newcommand{\cent}{|\!\! \mathrm{c}}
\newcommand{\dollar}{\$}
 \newcommand{\psublin}{\mathrm{PsubLIN}}
 \newcommand{\dstcon}{\mathrm{DSTCON}}
 \newcommand{\oned}{1\mathrm{D}}
 \newcommand{\oneq}{1\mathrm{Q}}
 \newcommand{\onebq}{1\mathrm{BQ}}
 \newcommand{\onep}{1\mathrm{P}}
 \newcommand{\onebp}{1\mathrm{BP}}
 \newcommand{\onen}{1\mathrm{N}}
 \newcommand{\twod}{2\mathrm{D}}
 \newcommand{\twon}{2\mathrm{N}}
 \newcommand{\twobq}{2\mathrm{BQ}}
 \newcommand{\twoq}{2\mathrm{Q}}
 \newcommand{\bql}{\mathrm{BQL}}
 \newcommand{\bpl}{\mathrm{BPL}}
 \newcommand{\qpoly}{\mathrm{Qpoly}}
 \newcommand{\twobp}{2\mathrm{BP}}
 \newcommand{\twop}{2\mathrm{P}}
 \newcommand{\para}{\mathrm{para}\mbox{-}}
 \newcommand{\twosbq}{2\mathrm{sBQ}}
 \newcommand{\onenq}{1\mathrm{NQ}}
 \newcommand{\phsp}{\mathrm{PHSP}}
 \newcommand{\NEQ}{{\cal NEQ}}
 \newcommand{\pt}{\mathrm{ptime}\mbox{-}}
 \newcommand{\density}[2]{| #1 \rangle\!\langle #2 |}
\begin{document}

\begin{center}
{\Large {\bf Nonuniform Families of Polynomial-Size Quantum Finite Automata and Quantum Logarithmic-Space Computation with Polynomial-Size Advice}}\footnote{A preliminary version appeared in the Proceedings of the 13th International Conference on Language and Automata Theory and Applications (LATA 2019), Saint Petersburg, Russia, March 26--29, 2019,
Lecture Notes in Computer Science, Springer, vol. 11417, pp. 134--145, 2019. This current paper corrects and extends the preliminary version.} \bs\ms\\

{\sc Tomoyuki Yamakami}\footnote{Present Affiliation: Faculty of Engineering, University of Fukui, 3-9-1 Bunkyo, Fukui 910-8507, Japan} \bs\\
\end{center}

\begin{abstract}
The state complexity of a finite(-state) automaton intuitively measures the size of the description of the automaton. Sakoda and Sipser [STOC 1972, pp. 275--286]   were concerned with nonuniform families of finite automata and they discussed the behaviors of the nonuniform complexity classes defined by such families of finite automata having polynomial-size state complexity.
In a similar fashion, we introduce nonuniform state complexity classes using nonuniform families of quantum finite automata empowered by the flexible use of garbage tapes.
We first present general inclusion and separation relationships among nonuniform state complexity classes of various one-way finite automata, including deterministic, nondeterministic, probabilistic, and quantum  finite automata having polynomially many inner states.
For two-way quantum finite automata equipped with flexible garbage tapes,
we show a close relationship between the nonuniform state complexity of the family of such polynomial-size quantum finite automata and the parameterized complexity class induced by logarithmic-space quantum computation assisted by  polynomial-size advice.
We further establish a direct connection between space-bounded quantum computation with quantum advice and quantum finite automata whose transitions are dictated by superpositions of transition tables.

\s
\n{\bf keywords:}
 nonuniform state complexity,
 quantum finite automata,
 flexible garbage tape,
 quantum Turing machine,
 quantum advice,
 super quantum finite automata
\end{abstract}


\sloppy
\section{Prelude: Quick Overview}\label{sec:introduction}

This exposition reports a collection of fundamental results obtained by our initial study on the state complexity of nonuniform families of various finite automata. Such a complexity measure is briefly referred to as the \emph{nonuniform state complexity} throughout this exposition.

\subsection{Nonuniform State Complexity of Finite Automata Families}\label{sec:nonuniform-sc}

Each finite(-state) automaton is completely described in terms of a set of transitions of its  inner states because there is no memory device, such as a stack, a work tape, etc., other than a read-only input tape.
The number of such inner states is thus crucial to measure the ``descriptional size'' of the automaton in question and it works as a general complexity measure, known as the \emph{state complexity} of the automaton. This state complexity therefore naturally serves as a clear indicator to scale the computational power of the automaton.
Instead of taking a single automaton, in this exposition, we consider a ``family'' of finite automata in a way similar to taking a family of Boolean circuits. Such a family of finite automata may or may not be generated individually by a certain fixed deterministic procedure in a certain uniform manner.
Unlike Boolean circuits, nevertheless, inputs of automata are \emph{not limited} to fixed lengths and this situation provides an additional consideration to the simulation of automata in general. In this exposition, for clarity, we intend to use the term ``uniform state complexity'' to refer  to the state complexity of a uniform family of finite automata. Opposed to this uniform state complexity, we attempt to study its ``nonuniform'' counterpart, which we intend to call \emph{nonuniform state complexity}.
This nonuniform complexity measure has turned out to be closely related to a nonuniform model of Turing-machine computation.

In the past literature, nonuniform state complexity has played various roles in automata theory. An early discussion that attempted to relate certain state complexity issues to the collapses of known space-bounded complexity classes dates back to late 1970s. Sakoda and Sipser \cite{SS78}, following Berman and Lingas \cite{BL77}, argued the state complexity of transforming one family of 2-way nondeterministic finite automata (or 2nfa's, for short) into another family of 2-way deterministic finite automata (or 2dfa's).
From their work, we have come to know that the state complexity of a family of automata is closely related to the work-tape space usage of a Turing machine.
In this line of study, after a long recess, Kapoutsis \cite{Kap14} and Kapoutsis and Pighizzini \cite{KP15} lately revitalized a discussion on the relationships between logarithmic-space (or log-space, for short) complexity classes and state complexity classes in connection to the $\dl=\nl$ question (in fact, the $\nl\subseteq \dl/\poly$ question, where $\dl/\poly$ is the nonuniform analogue of $\dl$).

Taking a complexity-theoretic approach, Kapoutsis \cite{Kap09,Kap12} earlier studied relationships among the nonuniform state complexity classes $\oned$ (one-way deterministic), $\onen$ (one-way nondeterministic), $\twod$ (two-way deterministic), and $\twon$ (two-way nondeterministic). These classes constitute families of ``promise'' decision problems, each family of which is solved by an appropriately chosen nonuniform family of deterministic or nondeterministic finite automata of polynomially many inner states. In contrast, two more nonuniform state complexity classes
$2^{\oned}$ and $2^{\onen}$ were introduced by the use of finite automata  having exponentially many inner states (see Section \ref{sec:complexity-class} for their precise definitions).
There are only a few known separations on those nonuniform state complexity classes. For instance, Kapoutsis \cite{Kap09,Kap12} demonstrated that  $\oned\subsetneqq \onen \subsetneqq 2^{\oned}$,  $\onen\neq \co\onen$, and  $\oned\subsetneqq \twod\subseteq \twon \subsetneqq 2^{\oned}$.
Along the same line of study, Yamakami \cite{Yam18} recently gave a precise  characterization of the polynomial-time sub-linear-space parameterized   complexity class $\psublin$ and an $\nl$-complete problem $3\dstcon$ parameterized by the number of vertices of an input graph (which is generally referred to as a \emph{size parameter}) in terms of the uniform state complexities of restricted 2nfa's and narrow 2-way alternating finite automata (abbreviated as 2afa's).

An important discovery of \cite{Yam18}, nevertheless, is the fact that a nonuniform family of promise decision problems is more closely related to parameterized decision problems than to ``standard'' decision problems (whose complexities are measured by the ``binary'' encoding size of inputs).
A decision problem (identified freely with a language) $L$ over alphabet $\Sigma$ together with a reasonable size parameter $m$, which is a map from $\Sigma^*$ to the set $\nat$ of all natural numbers, forms a \emph{parameterized decision problem} $(L,m)$ \cite{Yam17a}. We can naturally translate such a parameterized decision problem $(L,m)$ into a uniform family $\{(L_n^{(+)},L_n^{(-)})\}_{n\in\nat}$ of promise decision problems satisfying $L_n^{(+)}\cup L_n^{(-)}=\Sigma_n$, where $\Sigma_n$ is a set of input strings having ``size'' $n$, and  we can also translate $\{(L_n^{(+)},L_n^{(-)})\}_{n\in\nat}$ back into another parameterized decision problem $(K,m')$, which is ``almost'' the same as $(L,m)$.
These translations between parameterized decision problems and families of promise decision problems play an essential role in this exposition.
For notational readability, following \cite{Yam19a}, we use the special prefix ``para-'' and write, for example, $\para\dl$ and $\para\nl$ to denote respectively the parameterized analogues of $\dl$ and $\nl$. See Section \ref{sec:parameter-promise} for the precise definitions and descriptions of the aforementioned concepts.

After the study of state complexity classes was initiated in \cite{SS78}, a   further elaboration has been long anticipated; however, there has been little research on how to expand the scope of these classes. Our primary purpose of this exposition is to enrich the world of nonuniform state complexity classes and to lead our study to a whole new direction.

\subsection{An Extension to Quantum Finite Automata}

A surge of papers has already dealt with ``standard'' state complexity issues in an emerging field of \emph{quantum finite automata} \cite{AN09,BMP14,YS10,ZQG+13,ZGQ14}.
We intend to further expand the scope of \emph{nonuniform state complexity theory} to this new field. The behaviors of quantum finite automata, viewed as a natural extension of probabilistic finite automata, are fundamentally governed by \emph{quantum physics}. Moore and Crutchfield \cite{MC00} and Kondacs and Watrous \cite{KW97} modeled the quantization of finite automata in two quite different ways. Lately, these definitions have been considered insufficient for implementation and advantages over classical finite automata and, for this reason,  numerous generalizations have been proposed (see, e.g., a survey \cite{AY15} for references therein).
Here, we intend to use two distinct models:  \emph{measure-many 1-way\footnote{We use this term ``1-way'' \emph{in a strict sense} that a tape head always moves to the right and is not allowed at any time to stay still on the same cell. This term is also called ``real time'' in certain literature.} quantum finite automata with garbage tapes} (or 1qfa's, for short) and \emph{measure-many 2-way quantum finite automata with garbage tapes} (or 2qfa's), where garbage tapes are write-once\footnote{A tape is \emph{write-once} if its tape head never moves to the left and, whenever it writes a non-blank symbol, it must move to the right.} tapes used to discard unwanted information, which is thought to be released into an external environment surrounding the target quantum finite automata. For  early use of extra tape tracks to discard the unnecessary information, see, e.g.,  \cite[Section 5.2]{Yam14b}.
With the ``flexible'' use of such a garbage tape, a quantum finite automaton can freely choose the timing of dumping unwanted information onto the garbage tape. This flexible-garbage-tape model is simple to describe with no additional use of mixed states, superoperators, classical inner states, etc., and even the inflexible use of garbage tapes (called \emph{rigid  garbage tapes}) makes 1qfa's as powerful as generalized models cited in \cite{Hir10,YS11} (see Section \ref{sec:machine-models}).

\subsection{Overview of Main Contributions}\label{sec:main-contribution}

In this exposition, slightly deviating from the past major literature but consistent with \cite{Yam18}, we \emph{do not} allow underlying input alphabets to vary over promise decision problems in  each fixed nonuniform family. Refer to Section \ref{sec:complexity-class} for a precise definition.
In analogy to Kapoutsis's $\oned$ and $\twod$, we introduce their probabilistic and quantum variants in the following manner.
The nonuniform state complexity class $\oneq$ (one-way unbounded-error quantum) is the collection of nonuniform families $\{(L_n^{(+)},L_n^{(-)})\}_{n\in\nat}$, where each promise decision problem $(L_n^{(+)},L_n^{(-)})$ can be solved with unbounded-error probability by a certain 1qfa $M_n$ having  polynomially many inner states and a polynomially-bounded garbage alphabet used for a flexible garbage tape.
If we relax the unbounded-error requirement to the bounded-error requirement
(i.e., error probability is bounded from above by a certain constant in $[0,1/2)$), we write $\onebq$ (one-way bounded-error quantum) in place of $\oneq$.
The nonuniform state complexity classes  $\twobq$ (two-bounded-error quantum) and $\twoq$ (two-way unbounded-error quantum) are introduced respectively in a way similar to $\onebq$ and $\oneq$ but using bounded-error and unbounded-error 2qfa's instead of bounded-error and unbounded-error 1qfa's.
When nondeterministic quantum computation is used instead,
we further obtain $\onenq$ (one-way nondeterministic quantum).
Moreover, if we simply substitute probabilistic finite automata for quantum finite automata, then we also obtain $\onebp$ (one-way bounded-error probabilistic), $\onep$ (one-way unbounded-error probabilistic), $\twobp$ (two-way bounded-error probabilistic), and $\twop$ (two-way unbounded-error probabilistic) from $\onebq$,  $\oneq$,  $\twobq$, and $\twoq$, respectively.
Probabilistic finite automata are allowed to take \emph{arbitrary} real transition probabilities.
For $\twobp$, for example, the classical results of Dwork and Stockmeyer \cite{DS90} derive that $\twobp \subseteq 2^{\oned}$  and $\twobp \nsubseteq\twon$ (Lemma \ref{Dwork-Stockmeyer}).

The first part of our main result is nicely summarized in Figure \ref{fig:relationship}. The newly added inclusion and separation relationships in the figure will be proven in Theorems \ref{main-theorem} and Lemmas \ref{Dwork-Stockmeyer} and \ref{first-claim}.
With the use of rigid garbage tapes, we further define $\oneq^{(+)}$  similarly to $\oneq$.
Lemma \ref{rigid-garbage-tape} additionally proves two relationships regarding $\oneq^{(+)}$ in Figure \ref{fig:relationship}.
\begin{figure}[t]
\centering
\includegraphics*[width=8.5cm]{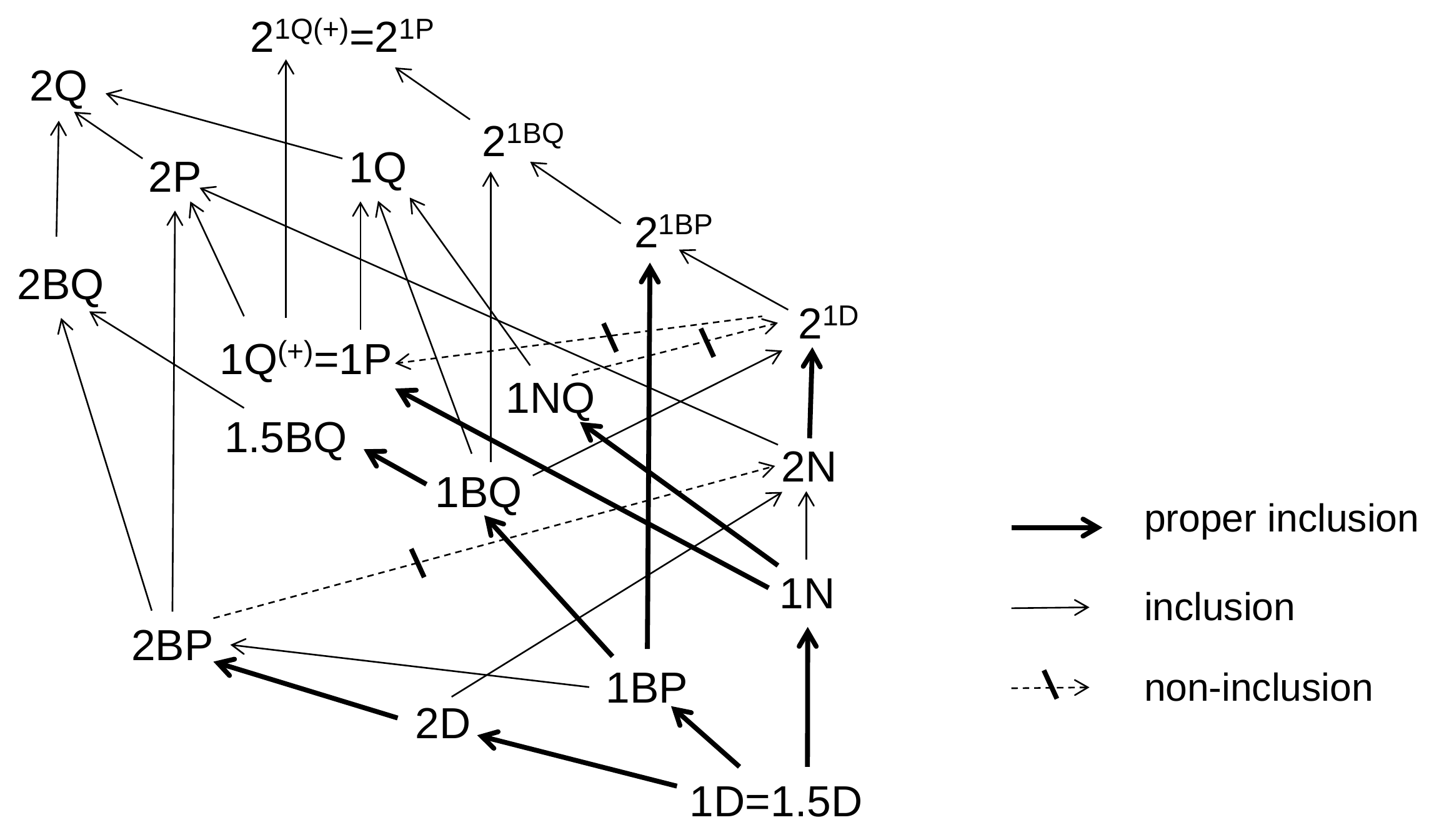}
\caption{Inclusion/separation relationships among nonuniform state complexity classes}\label{fig:relationship}
\end{figure}

Secondly, we are focused on a restricted form of 2qfa's.
Similarly to Boolean circuits, we can limit the length of input strings fed to given finite automata.
In particular, when the input size $|x|$ of each string $x$ in $L_n^{(+)}\cup L_n^{(-)}$ is limited to at most $p(n)$ for a certain absolute  polynomial $p$, we write $\twon/\poly$ and $\twobq/\poly$ respectively instead of $\twon$ and $\twobq$.

For any 2-way (non)deterministic finite automaton (as well as any Turing machine), the \emph{runtime} of such a machine is customarily set to be the length of the shortest accepting path if any, and the length of the shortest rejecting path otherwise. With the use of this runtime definition, 2-way (non)deterministic finite automata all terminate in linear time.
When we handle probabilistic and quantum finite automata, however, it is of significant importance to consider the \emph{average runtime} over all possible computation paths of these machines.
From a concern for execution efficiency, it is reasonable to concentrate on 2qfa's that run in \emph{expected polynomial time} (or \emph{average polynomial time}) rather than 2qfa's with no time bound.
We say that a family $\{M_n\}_{n\in\nat}$ of 2qfa's runs in expected polynomial time if the average runtime of $M_n$ is restricted to a certain fixed polynomial in $(n,|x|)$, where $x$ is any input given to $M_n$.
To emphasize the expected polynomial runtime, we append the prefix ``ptime-'' as in $\pt\twobq$ and $\pt\twobq/\poly$.

To introduce the nonuniformity notion into a model of quantum Turing machine (or QTM, for short), we equip such a QTM with the Karp-Lipton style \emph{advice} as supplemental external information to empower the  QTM (see, e.g., \cite{NY04}). We use the notation $\bql/\poly$ to express the complexity class composed of decision problems that are solvable with bounded-error probability by QTMs using (worst-case) logarithmic space.
For restricted QTMs running in expected polynomial time, we also emphasize this runtime bound with the prefix ``ptime-'' and write $\pt\bpl$ and $\pt\bql$ instead of $\bpl$ and $\bql$.

We wish to present in Theorem \ref{general-theorem} a close connection between advised complexity classes and nonuniform state complexity classes. By our convention of the runtime of (non)deterministic automata and Turing machines, $\pt\twod$, $\pt\twon$, $\pt\dl$, and $\pt\nl$ appearing in the theorem should be understood respectively as $\twod$, $\twon$, $\dl$, and $\nl$.
Concerning quantum and probabilistic computations, for technical reason, we slightly restrict $\twobq$ and $\twobp$ and write $\twobq^{\dagger}$ and $\twobp^{\dagger}$ for these restrictions (see Section \ref{sec:advised-QTM} for the reasoning).

\begin{theorem}\label{general-theorem}
Let $(\AAA,\BB)\in\{(\mathrm{N},\mathrm{D}), (\mathrm{N},\mathrm{BP}^{\dagger}), (\mathrm{N},\mathrm{BQ}^{\dagger}), (\mathrm{BQ}^{\dagger},\mathrm{BP}^{\dagger})\}$.
It then follows that
$\pt2\AAA/\poly \subseteq \pt2\BB$ {iff}
$\pt\AAA\mathrm{L} \subseteq \pt\BB\mathrm{L}/\poly$,
where, when $\AAA=\mathrm{D}$, ``$\mathrm{DL}$'' is understood as ``$\mathrm{L}$''.
\end{theorem}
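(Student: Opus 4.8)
The plan is to prove the two implications separately, in each case passing between the world of nonuniform families of two-way automata and the world of expected-polynomial-time logarithmic-space machines by the device that underlies all such characterizations: a two-way $\CC$-automaton with a polynomial number $\mathrm{poly}(n)$ of inner states is interchangeable with a $\CC$-type machine that uses $O(\log n)$ work space (hence has $\mathrm{poly}(n)$ work-tape configurations) together with a read-only two-way input head, the input-head position being carried for free by the simulating device and never charged to either the state count or the work space. Under this device the nonuniformity of a family becomes exactly polynomial-size advice, and, invoking the convention that the size parameter is reasonable so that $|x|\le\mathrm{poly}(m(x))$, an $\AAA$-type logarithmic-space machine unrolls into a family in $\pt2\AAA/\poly$ while a family in $\pt2\BB$ compiles into a machine witnessing $\pt\BB\mathrm{L}/\poly$. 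I take these two translations, which preserve the mode $\CC$, its error regime, and up to a polynomial its expected running time, as the established machinery; for the pair $(\mathrm{N},\mathrm{D})$ the $\pt$ prefixes are vacuous, while for the remaining pairs the runtime must be tracked explicitly.

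For the forward implication I would assume $\pt2\AAA/\poly\subseteq\pt2\BB$ and take a problem in $\pt\AAA\mathrm{L}$, witnessed by a uniform $O(\log n)$-space $\AAA$-machine $N_A$. First I unroll $N_A$: for each value $n$ of the size parameter I collapse its finite control, its $O(\log n)$-bit work tape, and its work-head position into the $\mathrm{poly}(n)$ inner states of a two-way $\AAA$-automaton $M_n$ that tracks $N_A$'s input head with its own, so that $\{M_n\}_{n\in\nat}$ solves the associated promise family and, by the length convention, lies in $\pt2\AAA/\poly$. Applying the hypothesis yields a generally nonuniform family $\{M_n'\}_{n\in\nat}$ of polynomial-size two-way $\BB$-automata for the same family. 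Finally I reverse the device: a $\BB$-machine $N_B$ reads the description $\pair{M_n'}$, of length $\mathrm{poly}(n)$, as advice indexed by the parameter and simulates $M_n'$ in $O(\log n)$ space, placing the problem in $\pt\BB\mathrm{L}/\poly$; the advice appears on the $\BB$ side precisely because the output family need not be uniform.

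For the backward implication I would assume $\pt\AAA\mathrm{L}\subseteq\pt\BB\mathrm{L}/\poly$. The difficulty here is that an arbitrary nonuniform $\AAA$-family cannot be rewritten as an advice-free $\AAA$-machine, so I route through a universal problem: let $U_\AAA$ consist of all pairs $\pair{\pair{M},x}$ for which the two-way $\AAA$-automaton $M$ accepts $x$, with size parameter equal to the number of inner states of $M$. A uniform $\AAA$-machine decides $U_\AAA$ in space logarithmic in that parameter by reading $M$'s transition table directly off its input and simulating it on $x$, so $U_\AAA\in\pt\AAA\mathrm{L}$ with no advice. By hypothesis some $\BB$-machine $N_B$ with polynomial advice decides $U_\AAA$ in $\pt\BB\mathrm{L}/\poly$. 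Now, given any $\{M_n\}_{n\in\nat}\in\pt2\AAA/\poly$, the parameter of $\pair{\pair{M_n},x}$ is $\mathrm{poly}(n)$; I hardwire the fixed strings $\pair{M_n}$ and the matching advice into $N_B$, let its input head scan only the $x$-part, and collapse its $O(\log n)$-space configurations into the $\mathrm{poly}(n)$ inner states of a two-way $\BB$-automaton $M_n'$ that accepts $x$ iff $M_n$ does, so that $\{M_n'\}_{n\in\nat}$ witnesses $\pt2\BB$.

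The step I expect to be hardest is making both translations faithful in the quantum and probabilistic modes, which is exactly the reason for the restrictions $\mathrm{BQ}^{\dagger}$ and $\mathrm{BP}^{\dagger}$. Unrolling and re-compiling must send a valid unitary (respectively stochastic) bounded-error device to another such device while preserving the gap from $1/2$ and the expected polynomial runtime; in the backward direction one must in addition build a \emph{universal} bounded-error logarithmic-space quantum machine that applies an arbitrary two-way quantum finite automaton's transition tables, read symbol by symbol off its own input, as genuine unitary operations on the work configuration, all the while keeping the global evolution unitary and the garbage tape write-once. The dagger conventions, which fix the measurement schedule, the halting-and-acceptance flag, and the flushing of garbage, are what guarantee that these simulations neither destroy the quantum/stochastic structure nor inflate the expected running time, and carrying this out mode by mode for each of the four pairs $(\AAA,\BB)$ is the technical heart of the argument.
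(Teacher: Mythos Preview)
Your overall two-direction strategy is sound and, for the three pairs with $\AAA=\mathrm{N}$, is essentially the classical Kapoutsis argument via a universal $2\mathrm{nfa}$-acceptance problem. The paper takes a different route: it never constructs a complete or universal problem but instead passes through the \emph{parameterized} classes $\para\AAA\mathrm{L}/\poly\cap\phsp$ and $\para\BB\mathrm{L}/\poly$, proving first the exact characterization $\LL\in\pt2\CC/\poly\iff(K,m)\in\para\pt\CC\mathrm{L}/\poly\cap\phsp$ (Proposition~5.1), and then two purely Turing-machine lemmas (Lemmas~5.5 and~5.6) that collapse the advice on the $\AAA$ side and strip the parameterization. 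This buys uniformity across all four pairs $(\AAA,\BB)$ without needing any completeness statement for $\pt\bql$.

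There are two genuine gaps in your proposal. First, you have misidentified the role of the dagger. In the paper, $\pt2\mathrm{BQ}^{\dagger}$ and $\pt2\mathrm{BP}^{\dagger}$ are \emph{not} defined by any restriction on measurement schedule, halting flags, or garbage flushing; rather, $\twobq^{\dagger}$ is the largest subclass of $\twobq$ closed under $\dl$-good extensions of promise families. The dagger is there because a bounded-error automaton need only behave correctly on the promise set $L_n^{(+)}\cup L_n^{(-)}$, so the passage from a family $\LL$ to an induced parameterized decision problem $(K,m)$ and back requires an $\dl$-good extension, and closure under such extensions is exactly what $\twobq^{\dagger}$ provides. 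Your final paragraph therefore diagnoses the wrong obstacle.

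Second, and relatedly, your backward direction breaks for the pair $(\mathrm{BQ}^{\dagger},\mathrm{BP}^{\dagger})$. Your universal problem $U_{\mathrm{BQ}}$, consisting of pairs $\pair{\pair{M},x}$ with $M$ a bounded-error $2\mathrm{qfa}$ accepting $x$, is not a language but a \emph{promise} problem: an arbitrary transition-table encoding need not describe a machine that makes bounded error on every input, and an arbitrary $2\mathrm{qfa}$ has complex amplitudes that admit no finite encoding at all without the Solovay--Kitaev approximation of Section~2.3. Consequently you cannot place $U_{\mathrm{BQ}}$ in $\pt\bql$ (a class of languages) and then invoke the hypothesis $\pt\bql\subseteq\pt\bpl/\poly$, which speaks only of languages. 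The paper sidesteps this by never needing a universal or complete problem: Lemmas~5.5 and~5.6 move between $\bql$, $\bql/\poly$, and $\para\bql/\poly$ using only total languages of the form $\{(x,s):N\text{ accepts }(x,s)\}$ and $\{(x,1^t):x\in L,\ m(x)\le t\}$.
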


\begin{corollary}\label{NL-equivalence}
\renewcommand{\labelitemi}{$\circ$}
\begin{enumerate}
  \setlength{\topsep}{-2mm}%
  \setlength{\itemsep}{1mm}%
  \setlength{\parskip}{0cm}%

\item $\twon/\poly\subseteq \pt\twobq^{\dagger}$ iff $\nl\subseteq \pt\bql/\poly$.

\item $\pt\twobq^{\dagger}/\poly \subseteq  \pt\twobp^{\dagger}$ iff $\pt\bql \subseteq \pt\bpl/\poly$.
\end{enumerate}
\end{corollary}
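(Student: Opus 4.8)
The plan is to obtain both equivalences as direct instantiations of Theorem \ref{general-theorem}: the corollary merely specializes the abstract pair $(\AAA,\BB)$ to two of the four admissible choices and then unwinds the notational conventions stated just before the theorem. No new argument is needed beyond a careful substitution.

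For the first item I would take $(\AAA,\BB)=(\mathrm{N},\mathrm{BQ}^{\dagger})$, which belongs to the admissible set $\{(\mathrm{N},\mathrm{D}),(\mathrm{N},\mathrm{BP}^{\dagger}),(\mathrm{N},\mathrm{BQ}^{\dagger}),(\mathrm{BQ}^{\dagger},\mathrm{BP}^{\dagger})\}$. Theorem \ref{general-theorem} then reads $\pt2\mathrm{N}/\poly\subseteq\pt2\mathrm{BQ}^{\dagger}$ iff $\pt\mathrm{NL}\subseteq\pt\mathrm{BQ}^{\dagger}\mathrm{L}/\poly$. Applying the runtime conventions recalled before the theorem, the prefix ``$\pt$'' is vacuous on the nondeterministic objects, so $\pt2\mathrm{N}=\twon$ and $\pt\mathrm{NL}=\nl$; on the quantum side $\pt2\mathrm{BQ}^{\dagger}=\pt\twobq^{\dagger}$ and the log-space class $\pt\mathrm{BQ}^{\dagger}\mathrm{L}$ is $\pt\bql$, the ``$\dagger$'' being a restriction imposed on two-way finite automata rather than on log-space QTMs. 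Substituting these identifications turns the equivalence into $\twon/\poly\subseteq\pt\twobq^{\dagger}$ iff $\nl\subseteq\pt\bql/\poly$, which is exactly the first claim.

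For the second item I would instead take $(\AAA,\BB)=(\mathrm{BQ}^{\dagger},\mathrm{BP}^{\dagger})$, again admissible. Theorem \ref{general-theorem} now gives $\pt2\mathrm{BQ}^{\dagger}/\poly\subseteq\pt2\mathrm{BP}^{\dagger}$ iff $\pt\mathrm{BQ}^{\dagger}\mathrm{L}\subseteq\pt\mathrm{BP}^{\dagger}\mathrm{L}/\poly$. Here no prefix collapses, so I would only rewrite the automata-side classes as $\pt\twobq^{\dagger}$ and $\pt\twobp^{\dagger}$ and the Turing-machine-side classes as $\pt\bql$ and $\pt\bpl$, yielding $\pt\twobq^{\dagger}/\poly\subseteq\pt\twobp^{\dagger}$ iff $\pt\bql\subseteq\pt\bpl/\poly$, as required.

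Because every step is a substitution, there is no genuine mathematical difficulty; the one point deserving care is the bookkeeping of the ``$\pt$'' prefix together with the ``$\dagger$'' superscript. In particular, I would verify, via the definitions in Section \ref{sec:advised-QTM}, that the dagger is introduced purely as a two-way-finite-automaton convention and hence leaves no residual restriction on the log-space Turing-machine classes, so that $\pt\mathrm{BQ}^{\dagger}\mathrm{L}$ and $\pt\mathrm{BP}^{\dagger}\mathrm{L}$ may legitimately be read as $\pt\bql$ and $\pt\bpl$.
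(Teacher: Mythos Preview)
Your proposal is correct and matches the paper's approach exactly: the paper states that Corollary~\ref{NL-equivalence} ``follows immediately'' from Theorem~\ref{general-theorem}, and your write-up simply makes explicit the two instantiations $(\AAA,\BB)=(\mathrm{N},\mathrm{BQ}^{\dagger})$ and $(\AAA,\BB)=(\mathrm{BQ}^{\dagger},\mathrm{BP}^{\dagger})$ together with the notational conventions spelled out just before the theorem.
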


Theorem \ref{general-theorem} follows from the exact characterizations (Proposition \ref{characterize-PBQL/poly}) of parameterized complexity classes in terms of nonuniform state complexity classes, and vice versa. This proposition helps us translate nonuniform state complexity classes, such as $\twod/\poly$, $\twon/\poly$, $\pt\twobp/\poly$, and $\pt\twobq/\poly$, respectively into their corresponding advised parameterized complexity classes, $\para\dl/\poly$, $\para\nl/\poly$, $\para\pt\bpl/\poly$, and $\para\pt\bql/\poly$, where the last class $\para\pt\bql/\poly$, for example, is the collection of parameterized decision problems $(L,m)$ solvable by bounded-error QTMs in expected polynomial time in $|x|m(x)$ using work tapes of space logarithmic in $m(x)$ with (deterministic) advice of size polynomial in $|x|m(x)$ (see Section \ref{sec:parameter-promise} for their precise definitions).

The third main issue concerns quantum advice and an extension of 2qfa's.
The notion of \emph{quantum advice} was introduced by Nishimura and Yamakami \cite{NY04} to enhance the computational capability of polynomial-time QTMs. Quantum advice manifests a quantization of \emph{deterministic advice} and \emph{randomized advice} (see, e.g., \cite{Yam10,Yam14b}).
To emphasize the use of quantum advice, we write $\bql/\qpoly$ in accordance with \cite{NY04}.

In parallel to the change of deterministic advice to quantum advice, we also modify our basic model of 2qfa's as follows. Firstly, we express a quantum  transition function as the form of a matrix or a table, which must be easily encoded into a string over a certain alphabet. For readability, we use the term ``transition table'' to address this encoded string. See Section \ref{sec:transition-table} for the precise definition. This encoding further makes it possible to consider a superposition of transition tables.
Generally, we call by a \emph{super quantum finite automaton} a quantum finite automaton obtained by substituting a superposition of transition tables for the original quantum transition function.

For convenience, we use the notation  $\twosbq$ to express the nonuniform state complexity class obtained from $\twobq$ by substituting super 2qfa's for ordinary 2qfa's and write $\twosbq^{\dagger}$ for its restricted version similarly to $\twobq^{\dagger}$.

\begin{theorem}\label{two-way-equivalence}
$\pt\twosbq^{\dagger}/\poly \subseteq \pt\twobq^{\dagger}$ iff $\pt\bql/\poly = \pt\bql/\qpoly$.
\end{theorem}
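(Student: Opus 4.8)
The plan is to prove Theorem \ref{two-way-equivalence} by replaying the transfer argument behind Theorem \ref{general-theorem}, now run on the source/target pair formed by the super class and the ordinary class. The only ingredient not already supplied by Proposition \ref{characterize-PBQL/poly} is a characterization of the super class by \emph{quantum} advice. Concretely, I would establish the two bridges
\[
\pt\twosbq^{\dagger}/\poly \;\longleftrightarrow\; \para\pt\bql/\qpoly,
\qquad
\pt\twobq^{\dagger}/\poly \;\longleftrightarrow\; \para\pt\bql/\poly,
\]
the second being exactly Proposition \ref{characterize-PBQL/poly} and the first its quantum-advice analogue, justified by the slogan that a superposition of transition tables \emph{is} a quantum advice state dictating the machine's evolution, whereas an ordinary single transition table is a classical advice string.

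Granting both bridges, the equivalence follows by the same transfer that proves Theorem \ref{general-theorem}. The $/\poly$ input-length bound on the source, the absence of such a bound on the target, and the passage between parameterized and standard complexity classes are all handled exactly as there, and they convert the state-complexity inclusion into
\[
\pt\twosbq^{\dagger}/\poly \subseteq \pt\twobq^{\dagger}
\quad\Longleftrightarrow\quad
\pt\bql/\qpoly \subseteq \pt\bql/\poly .
\]
Since $\pt\bql/\poly \subseteq \pt\bql/\qpoly$ holds unconditionally (a classical advice string is a computational-basis quantum advice state), the right-hand inclusion is equivalent to the stated equality $\pt\bql/\poly = \pt\bql/\qpoly$, which closes the iff in both directions simultaneously.

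The real work is the first bridge. Given a super 2qfa family $\{M_n\}_{n\in\nat}$ with polynomially many states and a polynomially bounded superposition of transition tables, I would fix a length-uniform encoding of each table over a bounded alphabet so that the table superposition of $M_n$ becomes a genuine quantum state on $\poly(n)$ qubits; this state serves as the quantum advice, and a log-space QTM reads it and applies the encoded table coherently at each step, simulating $M_n$ in space logarithmic in its number of states. Conversely, from a log-space QTM with $\qpoly$ advice I would read off, for each input size, a poly-size quantum advice state and repackage it as the transition-table superposition of a super 2qfa, recovering a family in $\pt\twosbq^{\dagger}/\poly$. Throughout, the measure-many acceptance convention and the expected-polynomial-time normalization must be preserved on both sides, which is where most of the bookkeeping lives.

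The main obstacle is reconciling how the advice is \emph{accessed}. A super 2qfa carries one fixed table superposition that governs every step, whereas a space-bounded QTM may need to consult its advice many times, and quantum advice cannot be cloned. I expect this is precisely the reason for restricting to the daggered classes $\twosbq^{\dagger}$ and $\twobq^{\dagger}$ and to expected polynomial runtime: these restrictions should force the two access patterns to coincide, so that a single, non-reusable quantum advice state suffices while the simulation stays coherent and halts in expected polynomial time. Verifying that the daggered restriction exactly captures this ``use-once, apply-at-every-step'' discipline for quantum advice is the delicate point on which the whole characterization rests.
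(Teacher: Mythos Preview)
Your high-level plan is exactly the paper's: it proves a quantum-advice analogue of Proposition~\ref{characterize-PBQL/poly} (this is Proposition~\ref{character-2sPBQ}, characterizing $\pt\twosbq/\poly$ by $\para\pt\bql/\qpoly\cap\phsp$), and then runs the same transfer as for Theorem~\ref{general-theorem}, with Lemma~\ref{parameter-PBQL-eliminate} playing the role of Lemmas~\ref{NL-vs-NL/poly}--\ref{para-NL-vs-NL}. Your sketch of the two simulations (encode the table superposition as a quantum advice state on a tape; conversely, bake each classical advice string $a$ into a transition table $T_{n,l,a}$ and superpose) is also what the paper does in Lemmas~\ref{QTM-quantum-advice} and~\ref{2sqfa-to-advice}.

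Where you go off track is your reading of the $\dagger$. It has nothing to do with advice access or no-cloning, and if you try to ``verify that the daggered restriction exactly captures this use-once, apply-at-every-step discipline'' you will be proving a statement that is neither true nor needed. In the paper's model the quantum advice lives on a \emph{read-only} tape; the QTM's head may sweep over it arbitrarily many times, and the whole computation simply stays coherent with the single advice register, so no copying is ever required. Likewise a super 2qfa's quantum transition table is part of the machine description and is applied at every step without any duplication. The actual purpose of $\dagger$ is stated at the start of Section~\ref{sec:advised-QTM}: unlike 2dfa's and 2nfa's, bounded-error 2qfa's need not accept or reject on strings outside the promise, so the classes $\twobq$ and $\twosbq$ are not automatically $\dl$-good. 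The $\dagger$ cuts down to the largest subclass closed under $\dl$-good extensions, which is precisely what lets you, in the If-part, pass from an arbitrary $\LL$ to an $\dl$-good extension $\hat{\LL}$ and then legitimately invoke part~(2) of Propositions~\ref{characterize-PBQL/poly} and~\ref{character-2sPBQ} (those parts need $\dl$-goodness so that the induced $(K,m)$ has a log-space size parameter). Once you replace your speculative role for $\dagger$ with this bookkeeping role, the rest of your outline goes through unchanged.
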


A further study on relativizations  (or Turing reducibility) was lately conducted in \cite{Yam19a,Yam19b}.

\section{Preparation: Basic Notions and Notation}\label{sec:preparation}

This section will provide basic notions and notation necessary to read through this exposition.

\subsection{Numbers, Languages, and Hilbert Spaces}

Let $\nat$, $\integer$, and $\complex$ denote respectively the sets of all \emph{natural numbers} (i.e., nonnegative integers), of all \emph{integers}, and of all \emph{complex numbers}. In particular, we set  $\nat^{+}=\nat-\{0\}$, which is the set of all positive integers. Given two integers $m$ and $n$ with $n\geq m$, $[m,n]_{\integer}$ denotes the \emph{integer interval}, which is the set $\{m,m+1,m+2,\ldots,n\}$.
For a given number $n\in\nat^{+}$, we abbreviate $[1,n]_{\integer}$ as $[n]$. For a finite set $S$, $|S|$
indicates the \emph{cardinality} of $S$.
All \emph{polynomials} in this exposition are assumed to have nonnegative integer coefficients, and thus the polynomials are all nondecreasing. Assume also  that all \emph{logarithms} are taken to the base $2$. For our convenience, we set $\log{0}$ to be $0$. A \emph{logarithmic function} is a mapping from $\nat$ to $\nat$ for which there exist two constants $c,d\geq0$ satisfying $f(n)\leq c\log{n}+d$ for any $n\in\nat$.

Let $\Sigma$ be any \emph{alphabet}, which is a finite nonempty set of ``symbols'' or ``letters.''
A \emph{string} over $\Sigma$ is a finite sequence of symbols in $\Sigma$.  The \emph{length} of any string $x$ is denoted $|x|$. In particular, we use the notation $\lambda$ to denote the \emph{empty string} of length $0$. Given a number $n\in\nat$, $\Sigma^n$ (resp., $\Sigma^{\leq n}$) indicates  the set of all strings of length exactly $n$ (resp., of length at most $n$) over $\Sigma$. Let $\Sigma^*= \bigcup_{n\in\nat}\Sigma^n$.
For a string $w$ and a number $i\in[|x|]$, $w_{(i)}$ denotes the $i$th symbol of $w$; for example, $0110_{(1)}=0$ and $0110_{(3)}=1$.
A subset of $\Sigma^*$ is called a \emph{language} over $\Sigma$.
The \emph{complement} of $L$ is the set $\Sigma^*- L$ and is succinctly denoted $\overline{L}$.
We freely identify a language $L$ with its \emph{characteristic function}; that is, $L(x)=1$ for all $x\in L$ and $L(x)=0$ for all $x\notin L$.
Given a size-bounding function $t:\nat\to\nat$, a function $h:\nat\to\Sigma^*$ is called \emph{$t(n)$-bounded} if $|h(n)|\leq t(n)$ holds for all $n\in\nat$.

To describe quantum computation, we use various Hilbert spaces. A \emph{Hilbert space} is a complex inner product space (i.e., a vector space with an inner product), which is also a complete metric space. To express each element of such a space, we use Dirac's notation of $\qubit{\cdot}$. The \emph{dual} of $\qubit{\phi}$ is denoted $\bra{\phi}$. The \emph{norm} of $\qubit{\phi}$ is
denoted $\|\qubit{\phi}\|$, which equals $\sqrt{|\measure{\phi}{\phi}|}$, where $\measure{\phi}{\psi}$ is the \emph{inner product} of $\qubit{\phi}$ and $\qubit{\psi}$.

\subsection{Computational Models of Finite Automata}\label{sec:machine-models}

Our finite automata are always equipped with \emph{read-only input tapes}, which use two endmarkers, $\cent$ (left endmarker) and $\dollar$ (right endmarker), where a given input string is written initially between the two endmarkers.
A \emph{stationary move} refers to a finite automaton's special move by which an input-tape head reads an input symbol but never moves to a neighboring cell.
For clarity reason, we use the term ``one way'' to refer to the condition of a tape head that always moves to the right without stopping, and therefore  there is no stationary move. On the contrary, if we allow a machine to make ``stationary moves,'' we instead use the term ``1.5 way'' to emphasize its  difference from ``one way'' machines.
For brevity, we generally abbreviate ``1-way deterministic finite automaton'' as ``1dfa'', ``1-way nondeterministic finite automaton'' as ``1nfa'', and ``1-way probabilistic finite automaton'' as ``1pfa''. Analogously, for models of ``two way'' tape head moves, we use similar  abbreviations, such as 2dfa, 2nfa, and 2pfa.

We assume the reader's familiarity with the basics of quantum computation (see, e.g., \cite{Gru99,NC00}). Since Kondacs and Watrous's model of 1qfa's \cite{KW97} is strictly weaker in power than 1dfa's, there have been numerous generalizations proposed in the past literature (see, e.g., a survey \cite{AY15}).
We here empower their 1qfa's by simply equipping each of them with a \emph{write-once garbage tape} in which a machine drops any symbol (called a \emph{garbage symbol}) but never accesses any non-blank symbol written already on the garbage tape again.
An early idea of 1qfa's discarding unwanted information down to a portion of a read-once input tape was materialized in \cite[Section 5.2]{Yam14b} and such a mechanism was shown to enhance the computational power of 1qfa's. Yakaryilmaz, Freivalds, Say, and Agadzanyan  \cite{YFSA12} also discussed a similar concept in terms of write-only memory.
The use of a garbage tape makes 1qfa's simulate all 1dfa's simply by discarding  the current inner state of the 1dfa's onto the garbage tape.
The garbage tape can be viewed as a surrounding environment that exists ``externally,'' separated from the essential part of a machine's computation.
Observing the inner state and also the content of the garbage tape at every step produces a mixed state of $M$'s configurations, and therefore, the use of the garbage tape can simulate the behavior of superoperators acting on mixed states. Our model thus turns out to be at least as powerful as the model of \emph{generalized one-way quantum finite automata} given in  \cite{Hir10,YS11}, which allow 1qfa's to use mixed states and superoperators.
Recall that input tapes of finite automata always have the left endmarker $\cent$ and the right endmarker $\dollar$.  All tape cells are indexed by numbers in $\nat$; in particular, $\cent$ is always placed in cell $0$ (which is also called the \emph{start cell}).
For the sake of convenience, we use \emph{circular input tapes}, where a  circular tape is a standard tape whose ends are both glued together so that the right-side of the right endmarker $\dollar$ is the left endmarker $\cent$.

Formally, a \emph{1-way quantum finite automaton with a garbage tape} (where we hereafter use the term ``1qfa'' to indicate this particular model unless otherwise stated) $M$ is an octuple\footnote{Introducing 1qfa's in this form aims at handling both 1qfa's and 2qfa's in the same fashion.}  $(Q,\Sigma,\{\cent,\dollar\},\Xi,\delta,q_0,Q_{acc},Q_{rej})$, where $Q$ is a finite set of inner states, $\Sigma$ is an input alphabet, $\Xi$ is a \emph{garbage alphabet}, $\delta$ is a (quantum) transition function mapping $Q\times\check{\Sigma}\times Q\times \Xi_{\lambda}$ to $\complex$, $q_0$ ($\in Q$) is the initial (inner) state, and $Q_{acc}$ is a set of \emph{accepting states} with $Q_{acc}\subseteq Q$, and $Q_{rej}$ is a set of \emph{rejecting states} with $Q_{rej}\subseteq Q$, where $\check{\Sigma}=\Sigma\cup\{\cent,\dollar\}$ and $\Xi_{\lambda} = \Xi\cup\{\lambda\}$.
For clarity reason, we write $\delta(q,\sigma|p,\xi)$ instead of $\delta(q,\sigma,p,\xi)$. A transition $\delta(q,\sigma|p,\xi)=\alpha$ indicates that, assuming that $M$ is in inner state $q$ scanning input symbol $\sigma$, in a single step with \emph{transition amplitude} $\alpha$, $M$ changes its inner state to $p$, moves the input-tape head to the right, writes down garbage symbol $\xi$ onto the garbage tape (where $\lambda$ means ``no writing''), and moves the garbage-tape head to the right if $\xi\neq\lambda$.
At any step, $M$ freely chooses whether or not it dumps garbage symbols onto its garbage tape and then moves its tape head to the right.
When $M$ does not write any garbage symbol, the garbage-tape head must stay still. For later convenience, this property is referred to as the \emph{flexible use of a garbage tape} or, more succinctly, a \emph{flexible garbage tape}.
Let $Q_{halt} = Q_{acc}\cup Q_{rej}$.
All inner states in $Q_{halt}$ are called \emph{halting states} and the rest of inner states are {\em non-halting states}.

A \emph{configuration} of a 1qfa $M$ on input $x$ is a quadruple $(q,x,i,z)$, where $q\in Q$, $x\in\Sigma^*$, $i\in[0,|x|+1]_{\integer}$, and $z\in\Xi^*$. This describes a ``snapshot'' of the machine's internal status at a certain moment, where $M$ is in inner state $q$, its input-tape head is located at the $i$th tape cell, and the garbage tape contains $z$.
The \emph{initial configuration} of $M$ on $x$ is $(q_0,x,0,\lambda)$ and an \emph{accepting configuration} (resp., a \emph{rejecting configuration}) of $M$ on $x$ is of the form $(q,x,i,z)$ with $q\in Q_{acc}$ (resp., $q\in Q_{rej}$). A \emph{halting configuration} is either an accepting configuration or a rejecting configuration.
Let $\HH_{conf}$ denote the Hilbert space spanned
by all configurations of $M$. Similarly, $\HH_{acc}$ and $\HH_{rej}$ are Hilbert spaces spanned respectively by all accepting configurations and all rejecting configurations of $M$.
For convenience, two more Hilbert spaces, $\HH_{halt}$ and $\HH_{non}$, are defined respectively in terms of halting configurations and non-halting configurations.
The (quantum) transition function $\delta$ naturally induces the \emph{time-evolution operator}  ${U}_{\delta}$ of $M$, which is a linear operator acting on the space $\HH_{conf}$, defined as:
\[
U_{\delta} \qubit{q,x,i,z} = \sum_{(p,\xi)} \delta(q,x_{(i)} | p,\xi) \qubit{p,x,i+1\;(\mathrm{mod}\;|x|+2),z\xi},
\]
where $(p,\xi)$ ranges over $Q\times \Xi_{\lambda}$, each $x_{(i)}$ is the $(i+1)$th symbol of $\cent x\dollar$ with  $x_{(0)}=\cent$ and $x_{(|x|+1)}=\dollar$.

For any subscript $\tau\in\{acc,rej,halt,non\}$, $\Pi_{\tau}$ denotes the \emph{projective measurement} onto the Hilbert space $\HH_{\tau}$. At each step, we first apply ${U}_{\delta}$ and then perform a projective measurement by applying  $\Pi_{acc}\oplus \Pi_{rej}$. A \emph{computation} of $M$ on input $x$ is a series $(\qubit{\phi_0},\qubit{\phi_1},\ldots,\qubit{\phi_{|x|+2})}$ of superpositions of configurations of $M$ on $x$ defined as  $\qubit{\phi_0}=\qubit{q_0,x,0,\lambda}$ and $\qubit{\phi_{i+1}} = \Pi_{non}U_{\delta}\qubit{\phi_i}$ for each index $i\in[0,|x|+1]_{\integer}$.
At Step $i\in[1,|x|+2]_{\integer}$, we say that $M$ \emph{accepts} (resp., \emph{rejects}) $x$ with probability $p_{acc}(x,i) = \|\Pi_{acc}U_{\delta}\qubit{\phi_{i-1}}\|^2$ (resp., $p_{rej}(x,i)= \|\Pi_{rej}U_{\delta}\qubit{\phi_{i-1}}\|^2$). The \emph{(cumulative) acceptance probability} $p_{acc}(x)$ of $M$ on $x$ is the sum $\sum_{i=1}^{|x|+2}p_{acc}(x,i)$ and the \emph{(cumulative) rejection probability} of $M$ on $x$ is $\sum_{i=1}^{|x|+2}p_{rej}(x,i)$.

When $x$ is fixed through our analysis of quantum computation, we often remove $x$ from the configurations and instead consider  \emph{surface configurations} $(q,i,z)$. In place of $U_{\delta}$, we use the notation  $U^{(x)}_{\delta}$, which acts on the Hilbert space spanned by all surface configurations.

As a natural expansion of 1qfa's, we define a \emph{2-way quantum finite automaton with a garbage tape} (or a 2qfa, for short) by allowing a tape head to move in both directions as well as to stay still. To be more formal, a 2qfa $M$ is of the form $(Q,\Sigma,\{\cent,\dollar\},\Xi,\delta,q_0,Q_{acc},Q_{rej})$ with a (quantum) transition function $\delta: Q\times \check{\Sigma}\times Q\times D\times \Xi_{\lambda}\to\complex$ with $D=\{-1,0,+1\}$, where ``$-1$'' and ``$+1$'' indicate that an input-tape head moves respectively to the left and to  the right, and ``$0$'' means that the tape head makes a stationary move.   Assuming that $M$ is in inner state $q$, scanning input symbol $\sigma$, if we make a transition  $\delta(q,\sigma|p,d,\xi)=\alpha$, then $M$ changes its inner state to $p$, moves its input-tape head in direction $d$, and writes down garbage symbol $\xi$ by moving its tape head to the right whenever  $\xi\neq\lambda$.
Similarly to 1qfa's, the use of garbage tapes provides sufficiently high computational power to 2qfa's.

The notion of configurations for 2qfa's is the same as that for 1qfa's; however, the time-evolution operator $U_{\delta}$ is in the form:
\[
U_{\delta} \qubit{q,x,i,z} = \sum_{(p,d,\xi)} \delta(q,x_{(i)} | p,d,\xi) \qubit{p,x,i+d\;(\mathrm{mod}\;|x|+2),z\xi},
\]
where $(p,d,\xi)$ ranges over $Q\times D\times \Xi_{\lambda}$.
When $x$ is fixed, we write $U^{(x)}_{\delta}$ for the associated time-evolution operator on the Hilbert space spanned by surface configurations.
Unlike the case of 1qfa's, no time bound is in general imposed on each 2qfa, and thus some of the computation paths may possibly be infinitely long.
To scale the runtime of such a 2qfa, we thus need to take the expectation of the lengths of computation paths; that is, $\sum_{i=1}^{\infty} i \cdot \|\Pi_{halt} U_{\delta}\qubit{\phi_{i-1}}\|^2$, where $\qubit{\phi_i}$ is the superposition of configurations obtained after the $i$th step of the 2qfa. We call this value the \emph{expected runtime} of $M$ on $x$.

We say that $M$ is \emph{well-formed} if ${U}_{\delta}^{(x)}$ is a unitary matrix (i.e., $U^{(x)}_{\delta}(U^{(x)}_{\delta})^{\dagger}=I$) for all $x\in\Sigma^*$.
In the rest of this exposition, we always assume that all 1qfa's as well as 2qfa's are well-formed.

\subsection{Transition Tables}\label{sec:transition-table}

The behavior of a finite automaton is completely dictated by its  transition function $\delta$. However, it is sometimes convenient to use ``transition tables'' instead of transition functions.
A \emph{transition table} is intuitively a matrix form of the ``description'' of $\delta$, which is further encoded into a classical string.
Each row of a transition table is indexed by the elements $(q,\sigma)$ in $Q\times\check{\Sigma}$, each column is indexed by the elements $(p,d)$ in $Q\times D$, and the $((q,\sigma),(p,d))$-entry of the table contains $1$ if $\delta$ maps $(q,\sigma)$ to $(p,d)$; $0$ otherwise.
Although this definition is valid for deterministic/nondeterministic finite automata, we cannot use the same one for 2qfa's because we need to deal with a set of (quantum) transition amplitudes, which are generally arbitrary complex numbers. Hence, we need to find an appropriate way of fitting such transition amplitudes into transition tables.

Let $M$ denote a 2qfa of the form $(Q,\Sigma,\{\cent,\dollar\},\Xi, \delta,q_0,Q_{acc},Q_{rej})$ with $\Sigma$ and $\Xi$ of constant sizes.
Since $M$'s transition amplitudes are complex numbers, we want to use a \emph{quantum circuit} to generate (or approximate) those amplitudes and we then encode this quantum circuit into a transition table, where a quantum circuit is made up of finitely many quantum gates taken from a certain \emph{universal set}. In the rest of this exposition, we fix $\{CNOT,H,T_{\pi/8}\}$ as such a universal set, where $CNOT$ is the Controlled-NOT, $H$ is the Hadamard transform, and $T_{\pi/8}$ is the $\pi/8$-rotation around the $z$ axe (see, e.g., \cite{NC00} for their definitions and properties).
If necessary, we also use the \emph{identity map} $I$ as a special quantum gate.

Formally, we express inner states of $M$ as strings in $\{0,1\}^{r_1}$, symbols in $\Xi_{\lambda}$ as strings in $\{0,1\}^{r_2}$ for two  appropriate numbers $r_1$ and $r_2$ in $\nat^{+}$ (thus, $|Q|=2^{r_1}$ and $|\Xi_{\lambda}|=2^{r_2}$), and head directions $\{-1,0,+1\}$ as elements  $\{10,11,01\}$, respectively. For convenience, we write $V_{q,\sigma}$ for a $2^{r_1+r_2+2}\times 2^{r_1+r_2+2}$ unitary matrix that, on input $\qubit{\phi_0} = \qubit{0^{r_1}}\qubit{00}\qubit{0^{r_2}}$, produces a quantum state $\sum_{(p,d,\xi)} \delta(q,\sigma|p,d,\xi) \qubit{p,d,\xi}$.
A \emph{transition table} $T$ of $M$ on input $x$ is a matrix, each row of which is indexed by $(q,\sigma)$ in $Q\times \check{\Sigma}$, whose $(q,\sigma)$-row contains a ``description'' of
$V_{q,\sigma}$.
Given any parameter $n$, any input length $l$, and any pair $(q,\sigma)\in Q\times \check{\Sigma}$, we intend to define a quantum circuit $C^{(n,l)}_{q,\sigma}$ so as to approximate $V_{q,\sigma}$.

Assume that $M$ on input $x$ runs in expected $p(n,|x|)$ time for a certain function $p$ and errors with probability at most a certain constant
$\varepsilon\in [0,1/2)$.
It suffices to consider the first $cp(n,|x|)$ steps of $M$ for an appropriately chosen absolute constant $c\geq1$ to guarantee that the error probability obtained during these steps is still at most another constant $\varepsilon' = \frac{1}{2}(\varepsilon+\frac{1}{2})$, where $\varepsilon<\varepsilon'<\frac{1}{2}$.
Under this circumstance, letting $\alpha =\frac{1}{2}(\frac{1}{2}-\varepsilon)$, we want to approximate $V_{q,\sigma}$ with \emph{inaccuracy} of $\alpha 2^{-(r_1+r_2+2)}/cp(n,|x|)$ by a certain quantum circuit $C^{(n,l)}_{q,\sigma}$ made up of the aforementioned universal quantum gates; namely, $\|V_{q,\sigma}\qubit{\phi_0} - C^{(n,|x|)}_{q,\sigma}\qubit{\phi_0}\|\leq \alpha 2^{-(r_1+r_2+2)}/cp(n,|x|)$.  For convenience, we write $\tilde{M}$ for the machine obtained from $M$ by replacing each $V_{q,\sigma}$ with $C^{(n,l)}_{q,\sigma}$.

Let $s=s(n,l)$ denote the total number of the quantum gates used in $C^{(n,l)}_{q,\sigma}$. A reasonable upper bound of $s(n,l)$ is given by the following lemma.

\begin{lemma}\label{Solovay-Kitaev}
Any $2^r\times2^r$ unitary matrix $V$ can be approximated by a certain $r$-qubit quantum circuit $C$ of $O(n^2r^{4}2^{2r})$ universal gates satisfying $\|V\qubit{0^r} - C\qubit{0^r}\|\leq 2^{-n}$.
\end{lemma}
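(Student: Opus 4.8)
The plan is to follow the classical three-stage synthesis of an arbitrary unitary from a fixed universal set, and then to track the error budget carefully so that the gate count lands on the stated bound. Throughout I would approximate in the operator norm, noting that $\|V-C\|\le 2^{-n}$ immediately yields the weaker vector bound $\|V\qubit{0^r}-C\qubit{0^r}\|\le 2^{-n}$ demanded in the statement.

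First I would decompose $V$ into elementary pieces. A standard Gaussian-elimination argument factors any $2^r\times 2^r$ unitary as a product $V=U_1U_2\cdots U_m$ of at most $m=\binom{2^r}{2}=O(2^{2r})$ \emph{two-level} unitaries, each acting nontrivially only on the two-dimensional subspace spanned by a pair of computational basis vectors. Next, each two-level $U_i$ is realized by elementary gates: routing the relevant pair of basis states along a Gray-code path of length at most $r$ (using multiply-controlled NOT gates on up to $r-1$ controls) reduces $U_i$ to a single controlled one-qubit rotation, and the whole construction uses $O(r^2)$ CNOT gates and single-qubit gates. Hence $V$ is written \emph{exactly} as a circuit of $N=O(r^2 2^{2r})$ gates whose CNOTs already belong to $\{CNOT,H,T_{\pi/8}\}$, the only foreign objects being arbitrary $2\times2$ unitaries.

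The third stage invokes the Solovay--Kitaev theorem to replace each such single-qubit gate by a word over $\{H,T_{\pi/8}\}$; inverses cause no trouble since $H^{-1}=H$ and $T_{\pi/8}^{-1}$ is a power of $T_{\pi/8}$, while $\{H,T_{\pi/8}\}$ generates a dense subgroup of $SU(2)$. Approximating one single-qubit gate to operator-norm precision $\delta$ costs $O(\log^{c}(1/\delta))$ universal gates, where $c$ is the exponent of the chosen Solovay--Kitaev variant.

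The crux is the error accounting, which I expect to be the only delicate point. Using subadditivity of the operator norm under composition --- if $\|A_j-B_j\|\le\delta$ for each $j$ then $\|A_1\cdots A_k-B_1\cdots B_k\|\le k\delta$ --- it suffices to approximate each of the $N$ single-qubit gates to precision $\delta=2^{-n}/N$, which forces the assembled circuit to satisfy $\|V-C\|\le N\delta=2^{-n}$. With this choice $\log(1/\delta)=n+O(r)=O(n+r)$, so each single-qubit gate expands to $O((n+r)^{c})$ universal gates, and the total count is $N\cdot O((n+r)^{c})=O\!\left(r^2 2^{2r}(n+r)^{c}\right)$. Taking a Solovay--Kitaev bound with exponent $c\le 2$ and using $(n+r)^2 r^2\le 4\,n^2 r^4$ for $n,r\ge 1$ collapses this to $O(n^2 r^4 2^{2r})$, as claimed. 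The subtlety to verify is exactly that distributing the global precision $2^{-n}$ across the $O(r^2 2^{2r})$ gates enlarges $\log(1/\delta)$ only by an additive $O(r)$ term, so that the Solovay--Kitaev overhead stays polylogarithmic and the exponents combine into the stated $n^2 r^4 2^{2r}$.
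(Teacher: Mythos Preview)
Your proposal is correct and follows essentially the same approach as the paper: decompose $V$ into $O(2^{2r})$ two-level unitaries, realize each with $O(r^2)$ CNOT and single-qubit gates, then invoke Solovay--Kitaev with exponent at most $2$ and precision $2^{-n}/N$ per gate, combining $(n+O(r))^2$ with the $r^2$ factor to reach $O(n^2r^4 2^{2r})$. Your error accounting and the final inequality $(n+r)^2 r^2\le 4n^2r^4$ mirror the paper's computation almost exactly.
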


\begin{proof}
Given a unitary matrix $V$, as in the same way described in \cite[Section 4.5.1]{NC00}, we can take a number $k\leq 2^{r-1}(2^r-1)$ and $k$
2-level unitary matrices $U_1,U_2,\cdots,U_{k}$ yielding $V= U_1U_2\cdots U_{k}$. In the same way as in \cite[Section 4.5.2]{NC00},
we then decompose
each 2-level unitary matrix $U_i$ into $O(r^2)$ 1-qubit and CNOT gates, where 1-qubit gates may not be limited to $\{H,T_{\pi/8}\}$.  Combining them, we can realize $V$ by a certain quantum circuit of $O(r^22^{2r})$ 1-qubit and CNOT gates. Let $s$ denote the number of used quantum gates.
We set $\varepsilon=2^{-n}/s$.
The Solovay-Kitaev theorem (e.g., \cite{KSV02,NC00}) then guarantees the existence of a constant $e$ in the real interval $[1,2]$ such that each 1-qubit gate can be approximated by $O(\log^e(1/\varepsilon))$ universal gates from $\{CNOT,H,T_{\pi/8}\}$ to within $\varepsilon$.
To approximate $V$, we need $O(r^22^{2r})\times O(\log^e(r^22^{n+2r})) \subseteq O(n^2r^{4}2^{2r})$ universal gates because of $1\leq e\leq 2$, $\log(r^2)\leq 2r$, and $(n+2r)^e\leq n^e(2r)^e \leq 4n^2r^2$.
\end{proof}

Since $V_{q,\sigma}$ is a $2^{r_1+r_2+2}\times 2^{r_1+r_2+2}$ unitary matrix,
Lemma \ref{Solovay-Kitaev} implies that $s(n,l)$ is bounded by  $O((r_1+r_2+2)^6 2^{2(r_1+r_2+2)} \log^2p(n,l)) \subseteq O(|Q|^2|\Xi|^2 \log^6|Q||\Xi|\cdot \log^2p(n,l)) \subseteq O(|Q|^8|\Xi|^8\log^2p(n,l))$ because of $\log|Q||\Xi|\leq |Q||\Xi|$.
Thus, we can express the quantum circuit $C^{(n,l)}_{q,\sigma}$ as an appropriate series $(I^{k_1}\otimes G_1\otimes I^{k'_1}) (I^{k_2}\otimes G_2\otimes I^{k'_2}) \cdots (I^{k_s}\otimes G_s\otimes I^{k'_s})$, where $s=s(n,l)$, $k_i\in[|Q|+2]$, and $G_i \in  \{I,CNOT,H,T_{\pi/8}\}$ for any index $i\in[s]$.
From this series, $C^{(n,l)}_{q,\sigma}$ is completely specified by the series  $(k_1,G_1),(k_2,G_2),\ldots,(k_s,G_s)$, and thus $C^{(n,l)}_{q,\sigma}$ can be encoded into a string $\pair{C^{(n,l)}_{q,\sigma}}$ of the form $1^{k_1}\#\pair{G_1} \#^21^{k_2} \#\pair{G_2} \#^2\cdots \#^21^{k_s}\#\pair{G_s}$, where $\pair{I} = 1$, $\pair{CNOT}=2$, $\pair{H}=3$, and $\pair{T_{\pi/2}}=4$.
Note that, if this string $\pair{C^{(n,l)}_{q,\sigma}}$ is given to a tape, then it is possible to generate the quantum state $C^{(n,l)}_{q,\sigma}\qubit{0^{r_1}}\qubit{00}\qubit{0^{r_2}}$ by ``sweeping''  the tape once from left to right and applying operators $I^{k_i}\otimes G_i \otimes I^{k'_i}$ one by one.

We then expand our encoding to all quantum transitions. Let $\{C^{(n,l)}_{a,b}\}_{(a,b)\in Q\times\check{\Sigma}}$ be a set of all approximated quantum circuits corresponding to $\delta$.
We first enumerate all elements in $Q\times\check{\Sigma}$ as $\{(a_1,b_1),(a_2,b_2),\ldots,(a_k,b_k)\}$, where $k=|Q\times\check{\Sigma}|$, and, according to this enumeration, we set  $\pair{T}$ as $\pair{C^{(n,l)}_{a_1,b_1}}\#^3\pair{C^{(n,l)}_{a_2,b_2}}\#^3\cdots \#^3 \pair{C^{(n,l)}_{a_k,b_k}}$. The \emph{length} of the string $\pair{T}$ is bounded from above by $k\cdot O(s) \subseteq O(|Q|^9|\Xi|^8 \log^2p(n,l))$.

Since $M$ preforms at most $cp(n,|x|)2^{r_1+r_2+2}$ applications of the matrices  $V_{q,\sigma}$, the quantum state produced by $M$ after $cp(n,|x|)$ steps can be approximated to within $\alpha$ by the quantum state produced by $\tilde{M}$ in $cp(n,|x|)$ steps. This implies that the difference between the acceptance (resp., rejecting) probabilities between $M$ and $\tilde{M}$ is at most   $\alpha$ (see, e.g., \cite{BV97,Yam03}). Therefore, if $M$ accepts (resp., rejects) $x$ with probability at least $1-\varepsilon$, then $\tilde{M}$ accepts (resp., rejects) $x$ with probability at least $1-\varepsilon-\alpha$, which equals $1-\varepsilon'$.

\subsection{Quantum Turing Machines with Advice}

Each Turing machine is equipped with a read-only input tape with the two endmarkers, $\cent$ and $\dollar$, as well as a \emph{rewritable work tape}. Occasionally, we further equip a Turing machine with a \emph{read-only advice tape}, which holds a given advice string surrounded by the same endmarkers. It is important to note that no machine modifies an advice string during its computation.

In accordance with the aforementioned garbage-tape quantum finite automata, we also supply quantum Turing machines with \emph{(flexible) garbage tapes}. Since we discuss only such machines in later sections, \emph{quantum Turing machines equipped with (flexible) garbage tapes} are simply referred to as QTMs. later in Section \ref{sec:one-way-QFA}, we will introduce
another notion of ``rigid'' garbage tape.
A QTM also has a work tape and a work alphabet $\Gamma$ (including a unique \emph{blank symbol} $B$) and both the work and the garbage tapes are initially blank.
We further provide a piece of useful  information, known as  ``advice.''
An \emph{advice function} $h$ is a function from $\nat$ to $\Theta^*$ for a certain advice alphabet $\Theta$, and each value $h(n)$ is called an \emph{advice string}.
Since we need to handle such advice, we further furnish the QTM with a distinguished \emph{advice tape}. Each advice string is initially written on the advice tape, surrounded by the two endmarkers, $\cent$ and $\dollar$.
For convenience, we call a QTM with an advice
tape by an \emph{advised QTM}.
Formally, an advised  QTM is a decuple $(Q,\Sigma,\{\cent,\dollar\},\Gamma, \Theta, \Xi, \delta,q_0,Q_{acc},Q_{rej})$, including a work alphabet $\Gamma$, an advice alphabet $\Theta$, a garbage alphabet $\Xi$, and $\delta$ is a (quantum) transition function from $Q\times\check{\Sigma}\times\Gamma \times \check{\Theta} \times Q\times \Gamma\times \Xi_{\lambda} \times D_1\times D_2\times D_3$ to $\complex$, where $\check{\Theta} = \Theta\cup\{\cent,\dollar\}$, $\Xi_{\lambda} = \Xi\cup\{\lambda\}$, $D_1=D_2=\{-1,0,+1\}$ (for the input-tape and the work-tape heads), and $D_3=\{+1\}$ (for the garbage-tape head). Notice that our QTM can take arbitrary complex amplitudes, not limited to polynomial-time approximable amplitudes as in most literature.

A \emph{configuration} of $M$ is a tuple $(q,x,t_1,y,t_2,w,t_3,z)$, where $q\in Q$, $t_1,t_2,t_3\in\integer$, $x\in \Sigma^*$, $y\in\Gamma^*$, $w\in\Theta^*$, and $z\in \Xi^*$. This configuration expresses a circumstance in which $M$ is in inner state $q$, scanning the $t_1$th cell of the input tape, the $t_2$th cell of the work tape containing $y$, the $t_3$th cell of the advice tape containing $w$, and the garbage tape containing $z$. There is no need to include the head position of the garbage tape similarly to the 2qfa case.
When $x$ and $h(|x|)$ are fixed throughout computation, we use a \emph{surface configuration} of the form $(q,t_1,y,t_2,t_3)$ instead.
In a way similar to 2qfa's, we define  $\HH_{acc}$, $\HH_{rej}$, $\HH_{halt}$, and $\HH_{non}$ as well as $\Pi_{acc}$, $\Pi_{rej}$, $\Pi_{halt}$, and $\Pi_{non}$.
The \emph{configuration space} of $M$ is denoted $\HH_{conf}$.
The \emph{time-evolution operator} of an advised QTM acting on  $\HH_{conf}$ is defined in a similar way as 2qfa's;
that is,
\begin{eqnarray*}
\lefteqn{U_{\delta} \qubit{q,x,t_1,y,t_2,w,t_3,z}} \hs{5} \\
&=& \sum_{(p,\tau,d_1,d_2,d_3,\xi)} \zeta_{\delta} \qubit{p,x,t_1+d_1 \:\mathrm{mod}\:(|x|+2),\tilde{y}_{\xi}, t_2+d_2,w,t_3+d_3, z\xi},
\end{eqnarray*}
where $\zeta_{\delta} = \delta(q,x_{(t_1)},y_{(t_2)},w_{(t_3)},  p,\tau,d_1,d_2,d_3,\xi)$, $\tilde{y}_{\xi}= y_{(1)}y_{(2)}\cdots y_{(t_2-1)} \xi y_{(t_2+1)}\cdots y_{(|y|)}$, and $(p,\tau,d_1,d_2,d_3,\xi)$ ranges over $Q\times \check{\Theta} \times D_1\times D_2\times D_3\times \Xi_{\lambda}$.
We also use $U^{(x,h(|x|))}_{\delta}$ instead of $U_{\delta}$
when $x$ and $h(|x|)$ are fixed throughout computation.
We demand that the time-evolution operator $U^{(x,h(|x|))}_{\delta}$ of our QTM should be \emph{unitary} for any input $x\in\Sigma^*$.
A \emph{computation} of $M$ on input $x$ is defined, similarly to 2qfa's,  as a series of superpositions of configurations generated by $U_{\delta}$ and $\Pi_{non}$.
The \emph{expected runtime} of $M$ with $h$ on $x$, denoted by $\mathrm{Time}_{M,h}(x)$, is the expectation of the lengths of the computation paths of $M$ on $x$.
We say that $M$ with $h$ runs in \emph{expected polynomial time} if there exists a polynomial $p$ satisfying $\mathrm{Time}_{M,h}(x)\leq p(|x|)$ for all $x$. The \emph{space usage} of $M$ on $x$, denoted by $\mathrm{Space}_{M,h}(x)$, is the maximal cell index that $M$'s work-tape head visits during $M$'s computation on $x$ with $h$. We say that $M$ with $h$ uses \emph{logarithmic space} (or log space) if there exist two constants $c,d\geq 0$ satisfying $\mathrm{Space}_{M,h}(x)\leq c\log{|x|}+d$ for all $x$.

Given a language $L$ over $\Sigma$, we say in general that a machine $M$ \emph{recognizes $L$ with bounded-error probability} if there exists a constant $\varepsilon\in[0,1/2)$ such that, for any $x\in L$, $M$ accepts $x$ with probability at least $1-\varepsilon$ and, for any $x\in\Sigma^*-L$, $M$ rejects $x$ with probability at least $1-\varepsilon$.  In contrast, $M$ is said to \emph{recognize $L$ with unbounded-error probability} if, for any $x\in L$, $M$ accepts $x$ with probability more than $1/2$ and, for any $x\in\Sigma^*-L$, $M$ rejects $x$ with probability at least $1/2$.

The advised quantum complexity class $\bql/\poly$ consists of all languages, each of which is recognized with bounded-error probability by a certain QTM equipped with an advice tape and a polynomially-bounded advice function using only logarithmic space. In a similar manner, we define $\bpl/\poly$ and $\nl/\poly$ respectively
using probabilistic Turing machines (PTMs) and nondeterministic Turing machines (NTMs).
When underlying QTMs and PTMs are further restricted to run in expected polynomial time, we obtain $\pt\bql/\poly$ and $\pt\bpl/\poly$ from $\bql/\poly$ and $\bpl/\poly$, respectively.

In Section \ref{sec:parameter-promise}, we will further transform those complexity classes to their ``parameterized'' versions.

\subsection{Promise Decision Problems and Parameterized Decision Problems}\label{sec:parameter-promise}

A \emph{promise decision problem} in general has the form $(A,B)$ over a given input  alphabet $\Sigma$ satisfying both $A,B\subseteq\Sigma^*$ and $A\cap B=\setempty$.
In the past literature \cite{AY12,BMP14b,ZLQG17}, promise problems have been studied also in connection to quantum finite automata.
As stated in Section \ref{sec:nonuniform-sc}, we particularly deal with a ``family'' $\{(L_n^{(+)},L_n^{(-)})\}_{n\in\nat}$ of promise decision problems over a certain fixed alphabet $\Sigma$ (not depending on $n$), where $L_n^{(+)}$ consists of ``positive'' instances and $L_n^{(-)}$ consists of ``negative'' instances.
Notice that, for each index $n\in\nat$, $L_n^{(+)}\cap L_n^{(-)}=\setempty$ and $L_n^{(+)}\cup L_n^{(-)}\subseteq \Sigma^*$.
However, we usually do not demand $(L_n^{(+)}\cup L_n^{(-)})\cap (L_m^{(+)}\cup L_m^{(-)})=\setempty$ for any distinct pair $(m,n)$.
For any family $\LL = \{(L_n^{(+)},L_n^{(-)})\}_{n\in\nat}$ of promise decision problems, we use the notation $\co\LL$ to denote the family  $\{(L_n^{(-)},L_n^{(+)})\}_{n\in\nat}$ obtained by exchanging between $L_n^{(+)}$ and $L_n^{(-)}$.

Let $\LL=\{(L_n^{(+)},L_n^{(-)})\}_{n\in\nat}$ denote any family of promise decision problems. Given a family $\MM=\{M_n\}_{n\in\nat}$ of certain ``specified'' machines that satisfy certain ``predetermined'' criteria for acceptance and rejection, we generally say that $M_n$ \emph{recognizes} (\emph{solves} or \emph{computes}) the promise decision problem $(L_n^{(+)},L_n^{(-)})$ if (1) for any positive instance $x\in L_n^{(+)}$, $M_n$ accepts $x$ and, (2) for any negative instance $x\in L_n^{(-)}$, $M_n$ rejects $x$. There is no requirement for the behavior of $M_n$ on any string $x$ outside of $L_n^{(+)}\cup L_n^{(-)}$ and $M_n$ may possibly \emph{neither accept nor reject} such a string $x$.
Conveniently, we set $\Sigma_n = L_n^{(+)}\cup L_n^{(-)}$ for each index $n\in\nat$. Any input in $\Sigma_n$ is said to be a \emph{valid input}. We also say that $x$ is \emph{promised} if $x$ is a valid input.
Note that we \emph{do not} force $\{\Sigma_n\}_{n\in\nat}$ to satisfy $\Sigma_{n}\cap \Sigma_{n'} =\setempty$ for all distinct pairs $n,n'\in\nat$.
The  family $\MM$ of machines is said to \emph{solve} $\LL$ with \emph{bounded-error probability} if there exists a constant $\varepsilon\in[0,1/2)$ such that, for any index $n\in\nat$, (i) for all $x\in L_n^{(+)}$, $p_{acc,M_n}(x)\geq 1-\varepsilon$ and (ii) for all $x\in L_n^{(-)}$, $p_{rej,M_n}(x)\geq 1-\varepsilon$. In comparison,  $\MM$ \emph{solves $\LL$ with unbounded-error probability} if, for any $n\in\nat$, (i$'$) for all $x\in L_n^{(+)}$, $p_{acc,M_n}(x)>\frac{1}{2}$ and (ii$'$) for all $x\in L_n^{(-)}$, $p_{rej,M_n}(x)\geq \frac{1}{2}$.

For two families $\LL=\{(L_n^{(+)},L_n^{(-)})\}_{n\in\nat}$ and $\hat{\LL}=\{(\hat{L}_n^{(+)},\hat{L}_n^{(-)})\}_{n\in\nat}$ of promise decision problems, we say that $\hat{\LL}$ is an \emph{extension} of $\LL$ if  $L_n^{(+)}\subseteq \hat{L}_n^{(+)}$ and $L_n^{(-)}\subseteq \hat{L}_n^{(-)}$ hold for any index $n\in\nat$. If the set $\{1^n\# x\mid n\in\nat, x\in L_n^{(+)}\cup L_n^{(-)}\}$ belongs to $\dl$, then $\LL$ is called \emph{$\dl$-good}.
Moreover, a collection $\FF$ of families of promise decision problems is \emph{$\dl$-good} if every element in $\FF$ has an $\dl$-good extension of it in $\FF$.

A \emph{size parameter} is a function from $\Sigma^*$ to $\nat$ for a certain alphabet $\Sigma$. As typical examples, $m_{bin}(x)$ denotes the binary size $|x|$ of input $x$ and $m_{ver}(G)$ indicates the number of vertices in a given graph $G$. A \emph{parameterized decision problem} over $\Sigma$ is a pair $(L,m)$ with a language (equivalently, a decision problem) $L$ over $\Sigma$ and a size parameter $m:\Sigma^*\to\nat$.

Let us define a useful translation between a parameterized decision problem and a family of promise decision problems.
Given a parameterized decision problem $(L,m)$ over an alphabet $\Sigma$, a family $\LL=\{(L_n^{(+)},L_n^{(-)})\}_{n\in\nat}$ of promise decision problems is said to be \emph{induced from} $(L,m)$ if, for each index $n\in\nat$, $L_n^{(+)} = L\cap \Sigma_n$ and $L_n^{(-)} = \overline{L}\cap \Sigma_n$, where $\Sigma_n=\{x\in\Sigma^*\mid m(x)=n\}$.

On the contrary, let $\LL = \{(L_n^{(+)},L_n^{(-)})\}_{n\in\nat}$ be an $\dl$-good family of promise decision problems over a fixed alphabet $\Sigma$. We set $L_{all} = \bigcup_{n\in\nat}(L_n^{(+)}\cup L_n^{(-)})$, which is obviously included in  $\Sigma^*$ but may not equal $\Sigma^*$. With the use of a distinguished separator $\#$, we write $\Sigma_{\#}$ for the set $\Sigma\cup\{\#\}$. For each index $n\in\nat$, we define $K_{n}^{(+)} = \{1^n\# x\mid x\in L_n^{(+)}\}$ and $K_n^{(-)} = \{1^n\# x\mid x\in L_n^{(-)}\}\cup \{z\# x\mid z\in\Sigma^n-\{1^n\},x\in \Sigma_{\#}^*\} \cup \{z\mid z\in \Sigma^n\}$. Furthermore, we set $K=\bigcup_{n\in\nat}K_n^{(+)}$ and $\overline{K}=\bigcup_{n\in\nat} K_n^{(-)}$. It follows that $K\cap \overline{K}=\setempty$ and $K\cup \overline{K}=\Sigma_{\#}^*$. We define a new  size parameter $m':\Sigma_{\#}^*\to\nat$ by setting,  for any $w\in\Sigma_{\#}^*$,  $m'(w)=n$ if $w=1^n\# x$ for a certain $x\in L_n^{(+)}\cup L_n^{(-)}$, and $m'(w)=|w|$ otherwise. Since $\LL$ is $\dl$-good, $m'$ must be computed using logarithmic space.
The pair $(K,m')$ therefore turns out to be a parameterized decision problem over $\Sigma_{\#}$ and  $(K,m')$ is said to be \emph{induced from} $\LL$.

A size parameter $m:\Sigma^*\to\nat$ is said to be \emph{polynomially bounded} if there exists a polynomial $p$ witnessing $m(x)\leq p(|x|)$ for all $x\in\Sigma^*$; in contrast, $m$ is \emph{polynomially honest} if a certain fixed polynomial $q$ satisfies $|x|\leq q(m(x))$ for any $x\in\Sigma^*$. For the size parameter $m'$ defined above, since $m'(w)\leq |w|$ holds for all $w$, $m'$ must be polynomially bounded.
We use the notation $\phsp$ to denote the set of all parameterized decision problems $(L,m)$ for which $m$ is polynomially honest.

We say that $m$ is a \emph{log-space size parameter} if there exists a deterministic Turing machine (or a DTM, for short) $M$ such that, for any string $x$, $M$ takes $x$ as an input and produces $1^{m(x)}$ (i.e., $m(x)$ in unary) on its write-once output tape using $O(\log{|x|})$ work space \cite{Yam17a}. Notice that the function $f(x)=1^{m(x)}$ is polynomially bounded because, otherwise, a log-space machine computing $f$ must stay in an infinite loop. Hence, $m$ is polynomially bounded.

As noted in Section \ref{sec:nonuniform-sc}, the prefix ``para-'' is used to describe parameterized complexity classes. We define $\para\bql$ as the class of parameterized decision problems $(L,m)$ solvable by bounded-error QTMs using $O(\log{m(x)})$ space, where $m$ is any log-space size parameter.
If the expected runtime of each underlying QTM is further limited to $p(m(x),|x|)$ for a fixed polynomial $p$, we use the notation of  $\para\pt\bql$. The probabilistic counterparts of $\para\bql$ and $\para\pt\bql$ are respectively denoted by $\para\bpl$ and $\para\pt\bpl$. With the use of deterministic and nondeterministic Turing machines instead,
their runtime is customarily set to be the length of the shortest accepting path if it exists, and the length of the shortest rejecting path otherwise. Following \cite{Yam17a},
we similarly define  $\para\dl$ and $\para\nl$ as the parameterizations of $\dl$ and $\nl$, respectively.
Moreover, we write   $\para\nl/\poly$ to denote the parameterization of $\nl/\poly$, which is obtained by replacing languages $L$ in $\nl$ with their associated parameterized decision problems $(L,m)$ and also by taking advice functions of advice size at most polynomials in $m(x)|x|$.
Similarly, we obtain $\para\bql/\poly$, $\para\pt\bql/\poly$, etc.
Refer to \cite{Yam17a,Yam18,Yam19a} for the relevant notions.

\subsection{Nonuniform State Complexity}\label{sec:complexity-class}

Our purpose is to introduce nonuniform complexity classes defined by state complexities of quantum finite automata families. Related to these classes, we also consider classes based on deterministic, nondeterministic, and probabilistic finite automata.

The \emph{state complexity} generally refers to the number of inner states used to describe the behavior of a given automaton. However, since we use a (uniform or nonuniform) family $\MM=\{M_n\}_{n\in\nat}$ of finite automata, the state complexity of such a family becomes a function in $n$. More formally, the \emph{state complexity} $sc(n)$ (or $sc(M_n)$) of a family $\MM$ of finite automata $M_n$ with a set $Q_n$ of inner states is a function defined by $sc(n)=|Q_n|$ for all indices $n\in\nat$ \cite{VY15}. In later sections, nonetheless, we use nonuniform families $\{M_n\}_{n\in\nat}$ of finite automata, in which each $M_n$ may possibly be chosen independently, and therefore we emphatically call $sc(n)$ the \emph{nonuniform state complexity function}. Customarily, ``uniformity'' is thought as a special case of ``nonuniformity.''

The nonuniform state complexity class $\oned$ is the collection of all nonuniform families $\{(L_n^{(+)},L_n^{(-)})\}_{n\in\nat}$ of promise decision problems, each family of which is based on a certain fixed alphabet $\Sigma$ (not depending on $n$) and satisfies the following condition: there exist a polynomial $p$ and a nonuniform family $\{M_n\}_{n\in\nat}$ of 1dfa's such that, for each index $n\in\nat$, (i) $M_n$ has at most $p(n)$ inner states and (ii) $M_n$ solves $(L_n^{(+)},L_n^{(-)})$
on all inputs. In a similar way, we can define $\onen$ using 1nfa's instead of 1dfa's.


The use of $n^{O(1)}$ inner states endows underlying 1dfa's with enormous computational power. As a quick example, let us consider the family $\{(L_n^{(+)},L_n^{(-)})\}_{n\in\nat}$ with $L_n^{(+)} =\{a^ib^{i^2}\in \Sigma_n \mid i\leq n\}$, and $L_n^{(-)} = \Sigma_n - L_n^{(+)}$ for any index $n\in\nat$, where $\Sigma_n=\{a^ib^j\mid i,j\in\nat, i,j<2^n\}$.
No standard pushdown automaton recognizes the language $L'=\bigcup_{n\in\nat} L_n^{(+)}$; however, the following 1dfa $M_n$ with $O(n)$ inner states easily solves $(L_n^{(+)},L_n^{(-)})$: on input $x\in\Sigma_n$, find a number $i\leq n$ for which $x=a^ib^j$ and, using this value $i$ repeatedly, check whether $j=i^2$.

The notation $2^{\oned}$ indicates the collection of all nonuniform families of promise decision problems satisfying the following: for each  $\{(L_n^{(+)},L_n^{(-)})\}_{n\in\nat}$ of such families, every promise decision problem $(L_n^{(+)},L_n^{(-)})$ is recognized by a certain 1dfa of at most $2^{p(n)}$ inner states for a certain fixed polynomial $p$ independent of $n$.
If nonuniform families of 2dfa's that have polynomially many inner states are used instead, we obtain $\twod$ from $\oned$. In a similar manner, the use of nondeterministic finite automata introduces $\onen$, $2^{\onen}$, and $\twon$ as well.

In what follows, we present a useful lemma, which directly follows from  \cite[Lemma 3.3]{Yam18}. This lemma will be used in later sections.

\begin{lemma}{\rm \cite{Yam18}}\label{size-paramter-1dfa}
Let $m$ be a log-space size parameter over alphabet $\Sigma$. If $m$ is polynomially bounded and polynomially honest, there is a nonuniform family $\{M_n\}_{n\in\nat}$ of 1dfa's equipped with write-once output tapes such that each $M_n$ has $n^{O(1)}$ inner states and $M_n$ produces $1^{m(x)}$ on its  output tape from each input $x\in\Sigma^*$.
\end{lemma}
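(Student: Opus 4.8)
The plan is to exploit the nonuniformity of the family together with the three hypotheses on $m$, reducing the construction of each $M_n$ to the emission of a single hardwired unary string. First I would recall that, since $m$ is a log-space size parameter, there is a fixed DTM that outputs $1^{m(x)}$ from $x$ using $O(\log|x|)$ work space; this is the object I want to ``compile'' into a finite-state device separately for each index $n$. The decisive simplification comes from restricting attention to the inputs for which $M_n$ is responsible, namely those $x$ with $m(x)=n$ (equivalently $x\in\Sigma_n$). By polynomial honesty there is a polynomial $q$ with $|x|\le q(m(x))=q(n)$ for every such $x$, so the relevant inputs have length bounded by a polynomial in $n$; by polynomial boundedness the target output length $m(x)=n$ is itself controlled. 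On this promised set the value $m(x)=n$ is \emph{constant}, so $M_n$ need not recompute $m$ at all: it only has to write the fixed word $1^{n}$ on its write-once output tape.

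Given this reduction, I would build $M_n$ as follows. Using a counter realized by the inner states $\{0,1,\ldots,n\}$ together with a constant number of control states, $M_n$ disregards the actual input symbols and writes one $1$ per move onto the output tape until the counter reaches $n$, after which it emits nothing further and passes to a halting state. Because the family is chosen nonuniformly, the constant $n$ may be hardwired directly into the transition table of $M_n$, so no computation on $x$ is needed to discover it. The state complexity is then $n+O(1)=n^{O(1)}$, as required, and the string produced on every promised input equals $1^{m(x)}$.

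The step I expect to be the main obstacle is reconciling the one-way motion of a 1dfa with the need to emit a possibly long unary string: a strictly one-way machine performs only $|x|+2$ moves, whereas $m$ may grow superlinearly, so $|x|+2$ need not reach $n$ under a single-symbol-per-move convention. I would resolve this by using the write-once output convention that permits a block of symbols to be written in one move (so $1^{n}$ is emitted within the available moves), and otherwise by leaning on polynomial boundedness, $n\le p(|x|)$, to relate the number of moves to the required output length. The log-space hypothesis then plays only the auxiliary, standing role of guaranteeing that the induced promise family is $\dl$-good, so that the constructed family is admissible in the sense used in later sections. Conceptually, the genuine computation of $m$ is never carried out by the finite automaton itself, which is exactly why a one-way device with merely polynomially many states suffices even though a one-way simulation of general two-way log-space computation is impossible.
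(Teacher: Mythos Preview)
There is a genuine gap. The lemma, as written, asks that $M_n$ output $1^{m(x)}$ for \emph{every} $x\in\Sigma^*$, not only for $x\in\Sigma_n=\{x:m(x)=n\}$. Your machine emits the fixed word $1^n$ on all inputs, so it is wrong whenever $m(x)\neq n$, and you never argue that correctness on $\Sigma_n$ alone suffices. A second warning sign is that your construction renders the log-space hypothesis (and essentially the polynomial-boundedness hypothesis) idle---you say so yourself. Those hypotheses are exactly what one needs to convert the fixed log-space DTM that outputs $1^{m(x)}$ into a polynomial-size finite-state device; their presence strongly indicates that the intended automaton genuinely computes $m$ from $x$ rather than hardcoding a constant.

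The paper does not give its own proof here; it simply invokes \cite[Lemma~3.3]{Yam18}. If you want to reconstruct a correct argument you should follow that route: take the log-space DTM producing $1^{m(x)}$ and, for each index $n$, fold its bounded work-tape contents and head positions into inner states, using polynomial honesty and polynomial boundedness to keep $|x|$ and $m(x)$ polynomially tied so that the resulting state count is $n^{O(1)}$. Your worry about emitting more output symbols than a strictly one-way head has moves is legitimate, but it cannot be dispatched by an ad hoc ``block write'' convention that the paper never adopts; it is precisely one of the places where the cited lemma's actual simulation, not the shortcut of printing a hardwired string, is doing the work.
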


We also consider finite automata whose input-tape heads either move to the right or make stationary moves. Such automata are briefly called \emph{1.5 way}. If we replace 1dfa's in the definition of $\oned$ by 1.5dfa's, then we obtain $\mathrm{1.5D}$. Clearly, $\oned\subseteq \mathrm{1.5D}\subseteq \twod$ follows.
As is shown in the lemma below, $\mathrm{1.5D}$ actually coincides with $\oned$. Similarly, we can consider \emph{1.5-way quantum finite automata with flexible garbage tapes} (or 1.5qfa's).
Nevertheless, as we will show in Lemma \ref{one-vs-one-point-five}, the same equality does not hold for quantum finite automata, exhibiting a clear difference between 1.5dfa's and 1.5qfa's.

\begin{lemma}\label{dfa-one-point-five}
$\oned = 1.5\mathrm{D} \neq \twod$.
\end{lemma}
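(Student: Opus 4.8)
The plan is to prove the two assertions separately, namely the equality $\oned = 1.5\mathrm{D}$ and the strict separation $1.5\mathrm{D} \subsetneq \twod$. For the equality, the inclusion $\oned \subseteq 1.5\mathrm{D}$ is immediate, since every 1dfa is a 1.5dfa that simply never exercises its stationary option, and this preserves the state count exactly. The substantive direction is $1.5\mathrm{D} \subseteq \oned$, which amounts to a state-economical elimination of stationary moves. For the separation, the inclusion $1.5\mathrm{D} \subseteq \twod$ is already recorded in the discussion preceding the lemma, so only strictness must be established; I would do this by exhibiting an explicit family that a polynomial-state 2dfa solves but no polynomial-state 1dfa can.

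For $1.5\mathrm{D} \subseteq \oned$, fix a nonuniform family $\{M_n\}_{n\in\nat}$ of 1.5dfa's with $Q_n$ inner states, $|Q_n| \leq p(n)$, solving a family $\{(L_n^{(+)},L_n^{(-)})\}_{n\in\nat}$. For each pair $(q,\sigma) \in Q_n \times \check{\Sigma}$ I would precompute the \emph{effective} one-way transition by chasing the purely stationary part of the computation: starting from state $q$ while scanning $\sigma$, follow the (deterministic) sequence of stationary moves until one of three things happens---the head moves right into some state $p$, a halting state is reached, or a stationary cycle is entered. Since $Q_n$ is finite, exactly one of these occurs and is determined by the description of $M_n$, so I can define a 1dfa $M_n'$ on the same state set by setting its transition on $(q,\sigma)$ to be ``move right into $p$'' in the first case and ``enter that halting state'' in the second. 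The third case cannot arise while $M_n$ is reading a valid input $x \in L_n^{(+)} \cup L_n^{(-)}$, because $M_n$ must halt and answer correctly on such $x$, hence its behaviour there never loops stationarily; thus $M_n'$ reproduces $M_n$'s answer on all valid inputs. Crucially $|Q_n'| \leq |Q_n| \leq p(n)$, so no state blow-up occurs and the family lies in $\oned$. I expect this step to be routine, with the only delicate point being the bookkeeping of the halting-during-stationary-moves and looping cases, which the promise semantics conveniently disposes of.

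For the separation I would use the ``$n$-th symbol from the right'' family over $\Sigma = \{0,1\}$: set $\Sigma_n = \{x \in \{0,1\}^* : |x| \geq n\}$, let $L_n^{(+)} = \{x \in \Sigma_n : x_{(|x|-n+1)} = 1\}$, and let $L_n^{(-)} = \Sigma_n \setminus L_n^{(+)}$. A 2dfa with $O(n)$ states solves this: from $\cent$ it sweeps right to locate $\dollar$, then moves left exactly $n$ cells using a counter of $O(n)$ states, reads the symbol there, and enters $Q_{acc}$ or $Q_{rej}$ accordingly; behaviour on the (invalid) short inputs of length below $n$ is unconstrained. Hence the family belongs to $\twod$.

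The main obstacle, and the real content of the separation, is the matching lower bound showing the family is not in $\oned$ (equivalently, by the first part, not in $1.5\mathrm{D}$). Here I would run a fooling-set / Myhill--Nerode argument: for distinct $u,v \in \{0,1\}^n$ differing at some position $j$, append any suffix $w$ of length $j-1$, so that $uw,vw \in \Sigma_n$ are valid inputs of length $n+j-1$ whose $n$-th symbol from the right is $u_j$ and $v_j$ respectively; since $u_j \neq v_j$, exactly one of $uw,vw$ lies in $L_n^{(+)}$. Any 1dfa solving the promise problem must therefore reach distinct inner states after reading $u$ and after reading $v$, forcing at least $2^n$ states and ruling out any polynomial bound. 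Combining the three parts yields $\oned = 1.5\mathrm{D} \subsetneq \twod$. I would flag the verification that this fooling argument is legitimate \emph{within the promise framework}---that all the distinguishing strings used are genuinely valid inputs---as the point most worth stating carefully, though it follows directly from the choice $\Sigma_n = \{x : |x| \geq n\}$.
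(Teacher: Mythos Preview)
Your proof is correct. The treatment of the equality $\oned = 1.5\mathrm{D}$ matches the paper's argument almost exactly: both collapse the chain of stationary transitions at each state-symbol pair into a single right-moving transition, keeping the state set unchanged. Your handling of the possible stationary cycle is in fact more careful than the paper's---the paper asserts that $M_n$ ``must halt for an arbitrary input $x$'', which is not literally guaranteed in the promise setting, whereas you correctly observe that a cycling pair $(q,\sigma)$ simply cannot be reached on any valid input, so the 1dfa's transition there may be set arbitrarily.

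Where you genuinely diverge is on the separation $1.5\mathrm{D} \neq \twod$. The paper dispatches this in one line by invoking the known result $\oned \neq \twod$ of Kapoutsis~\cite{Kap09}, so that once $\oned = 1.5\mathrm{D}$ is established the inequality is immediate. You instead give a self-contained proof via the ``$n$-th-from-the-right'' family over $\{0,1\}$: an $O(n)$-state 2dfa upper bound by sweeping to $\dollar$ and counting back, and a $2^n$ lower bound for 1dfa's by the standard fooling-set argument, with the promise-validity of all distinguishing strings checked. Both approaches are sound; the paper's is more economical and situates the result in the existing literature, while yours is more elementary and does not rely on an external citation. Your explicit family is in fact one of the classical witnesses to $\oned \neq \twod$, so the two routes are closely related in spirit.
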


\begin{proof}
Clearly, $\oned$ is included in $1.5\mathrm{D}$. For the converse inclusion, let $\{M_n\}_{n\in\nat}$ be a nonuniform family of 1.5dfa's. We want to simulate each $M_n$ by a certain 1dfa, say, $N_n$ of $O(n^2)$ states. The desired 1dfa $N_n$ works as follows.
On input $x$, if $M_n$ moves its tape head to the right, then $N_n$ makes the same move.
Consider the case where $M_n$ is in inner state $q$ and makes its tape head stay still.
Let $\delta_n$ denote the transition function of $M_n$. Assume that there are a number $k\geq1$ and a series $p_1,p_2,\ldots,p_k$ of inner states for which $\delta_n(q,\sigma)=(p_1,0)$, $\delta_n(p_i,\lambda)=(p_{i+1},0)$ for any $i\in[k-1]$, and $\delta_n(p_k,\lambda)=(p,1)$.
Since $M_n$ must halt for an arbitrary input $x$, it follows that $k<|Q|$. We define a transition function $\delta'_n$ of $N_n$ as $\delta'_n(q,\sigma)= p$. The obtained $N_n$ is clearly 1-way and simulates $M_n$ on all inputs. Note that $N_n$ uses at most $|Q|$ states. Therefore, we conclude that $1.5\mathrm{D}\subseteq \oned$.

The separation $1.5\mathrm{D}\neq\twod$ follows instantly from the known result of $\oned\neq\twod$ \cite{Kap09}.
\end{proof}

A family $\LL = \{(L_n^{(+)},L_n^{(-)})\}_{n\in\nat}$ of promise decision problems over a fixed alphabet $\Sigma$ is said to have a \emph{polynomial ceiling} if there exists a polynomial $r$ satisfying $L_n^{(+)}\cup L_n^{(-)} \subseteq \Sigma^{\leq r(n)}$ for all $n\in\nat$.
The nonuniform state complexity class $\twod/\poly$ is the restriction of $\twod$, whose families $\LL$ all have appropriate polynomial ceilings. Similarly, we can obtain $\twon/\poly$ from $\twon$.

Given any nonuniform state complexity class $\FF$, $\co\FF$ consists of all families $\LL$ of promise decision problems for which $\co\LL$ belongs to $\FF$.
Using this notion, we obtain, e.g., $\co\onen$ from $\onen$.

\section{One-Way Quantum Finite Automata Families}\label{sec:one-way-QFA}

One-way finite automata are often used to model \emph{data processing} where   streaming input data are processed instantly with little memory space, since their tape heads read input strings from left to right without stopping. Notice that, by our definition of one-wayness, 1-way finite automata halt exactly in $|x|+2$ steps for any given input $x$. For this reason, in many cases, it is possible to demonstrate class separations among nonuniform state complexity classes.

Formally, the notation $\onebq$ denotes the collection of all nonuniform families $\{(L_n^{(+)},L_n^{(-)})\}_{n\in\nat}$ of promise decision problems over fixed input alphabets $\Sigma$ (not depending on $n$ in each family) such that there exist a nonuniform family $\{M_n\}_{n\in\nat}$ of 1qfa's, two polynomials $p$ and $r$, and a constant $\varepsilon\in[0,1/2)$ satisfying the following: for each index $n\in\nat$, (1) for any string  $x\in L_n^{(+)}$, $M_n$ accepts $x$ with probability at least $1-\varepsilon$ and,  for any string $x\in L_n^{(-)}$, $M_n$ rejects $x$ with probability at least $1-\varepsilon$, (2) $M_n$ uses at most $p(n)$ inner states, and (3) $M$'s garbage alphabet has size at most $r(n)$. When $M_n$ satisfies Condition (1), we simply say that $M_n$ \emph{solves} (or \emph{recognizes})  \emph{$(L_n^{(+)},L_n^{(-)})$ with error probability at most $\varepsilon$}. In this case, $M_n$ is said to \emph{make bounded errors}.
In contrast, we obtain $\oneq$ from $\onebq$ if we change Condition (1) to the following new condition for each index $n\in\nat$: (1$'$) for every string $x\in L_n^{(+)}$, $M_n$ accepts $x$ with probability more than $1/2$ and, for every string $x\in L_n^{(-)}$, $M_n$ rejects $x$ with probability at least $1/2$.
In this case, we say that $M_n$ \emph{makes unbounded errors}. Notice that these unbounded-error criteria supersede the aforementioned bounded-error criteria, resulting in the inclusion $\onebq\subseteq \oneq$.
In addition, we also obtain $\onenq$ from $\onebq$ if, instead of Condition (1), we use the following new condition for each $n\in\nat$: (1$''$) for any string $x\in L_n^{(+)}$, $M_n$ accepts $x$ with positive probability, and for any string $x\in L_n^{(-)}$, $M_n$ rejects $x$ with certainty.
Furthermore, we obtain $1.5\mathrm{BQ}$ from $\onebq$ by substituting 1.5qfa's for 1qfa's in the definition of $\onebq$.

In a way similar to introducing $\oneq$, we define $\onep$ using \emph{one-way probabilistic finite automata} (or 1pfa's, for short), whose transition probabilities are \emph{arbitrary numbers} in the real interval $[0,1]$, which make unbounded-error probability. With the use of the bounded-error criteria instead, we further obtain $\onebp$ from $\onep$.
Analogously to $\onebq\subseteq \oneq$, we can obtain $\onebp\subseteq \onep$.
By allowing underlying 1dfa's to use exponentially many inner states, we have already defined $2^{\oned}$ in Section \ref{sec:complexity-class}.
Similarly, we define $2^{\onebq}$, $2^{\oneq}$, $2^{\onebp}$, and $2^{\onep}$,  respectively from $\onebq$, $\oneq$, $\onebp$, and $\onep$.

There are known inclusions and separations: $\oned\subsetneqq \onen \subsetneqq 2^{\oned}$, $\oned=\co\oned$, and $\onen\neq \co\onen$ \cite{Kap09,Kap12}. To obtain Figure \ref{fig:relationship} as our first major contribution, we intend to prove the following collapse and separation relationships among the nonuniform state complexity classes listed in the figure.


\begin{theorem}\label{main-theorem}
\renewcommand{\labelitemi}{$\circ$}
\begin{enumerate}
  \setlength{\topsep}{-2mm}%
  \setlength{\itemsep}{1mm}%
  \setlength{\parskip}{0cm}%

\item $\oned\subsetneqq \onebp \subsetneqq \onebq \subseteq 2^{\oned}$.

\item $\onebq=\co\onebq$ and $\onep=\co\onep$.

\item $\onebq \subsetneqq \mathrm{1.5BQ}$.

\item $\onen \subsetneqq \onep \subseteq \oneq$ and $2^{\oned}\nsubseteq \onep$.

\item $\onebp\subsetneqq 2^{\onebp}$ and $\onep\subsetneqq 2^{\onep}$.

\item $\onen \subsetneqq \onenq \subseteq \oneq$ and $\onenq\nsubseteq 2^{\oned}$.
\end{enumerate}
\end{theorem}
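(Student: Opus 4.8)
The plan is to prove the three assertions of item~6 separately --- the inclusion $\onen\subseteq\onenq$, the inclusion $\onenq\subseteq\oneq$, and the separation $\onenq\nsubseteq 2^{\oned}$ --- and then to read off the strictness $\onen\subsetneqq\onenq$ essentially for free. For the first inclusion I would take a polynomial-state family $\{M_n\}$ of $1$nfa's and simulate each $M_n$ by a nondeterministic-quantum $1$qfa $M'_n$ that replaces every nondeterministic choice by a \emph{uniform} quantum superposition while simultaneously dumping onto the (flexible) garbage tape a distinct symbol encoding the transition just taken. Because distinct computation paths then write distinct garbage strings, they terminate in mutually orthogonal surface configurations, so no amplitude cancellation can occur; at the $\dollar$-step I send the endpoints of accepting paths into an accepting state and everything else into a rejecting state. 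Hence the cumulative acceptance amplitude is \emph{strictly positive} exactly when $M_n$ has an accepting path and is \emph{exactly $0$} (certain rejection) otherwise. The state count grows by at most a constant factor (padding the branching so that $U^{(x)}_{\delta}$ stays unitary), and only polynomially many garbage symbols are needed, so $M'_n$ places the family in $\onenq$.

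For $\onenq\subseteq\oneq$ I would boost the (possibly tiny) positive acceptance probability above $1/2$. Given a nondeterministic-quantum family $\{M_n\}$, let $M'_n$ on reading $\cent$ split into two orthogonal branches of amplitude $1/\sqrt2$: one branch enters a fresh accepting state at once, and the other launches the simulation of $M_n$. Since the first branch is measured into acceptance at the opening step while the surviving branch carries weight $1/2$, the cumulative acceptance probability of $M'_n$ equals $\tfrac12+\tfrac12\,p_{acc,M_n}(x)$. On a positive instance $p_{acc,M_n}(x)>0$, so this strictly exceeds $1/2$; on a negative instance $M_n$ rejects \emph{with certainty}, so $p_{acc,M_n}(x)=0$ and $M'_n$ accepts with probability exactly $1/2$ and rejects with probability $1/2$. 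This is precisely the unbounded-error criterion of $\oneq$, and the blow-up in states and garbage alphabet is only an additive constant. The only delicate point is the well-formedness of the combined $\cent$-transition, which is routine because the splitting map and $M_n$'s own opening transition act unitarily on orthogonal registers.

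The separation $\onenq\nsubseteq 2^{\oned}$ is where I expect the real work, and I would exhibit a witness over the unary alphabet $\{a\}$. Fix $m_n=2^{2^n}$ and $N_n=2m_n$, and set $L_n^{(-)}=\{a^k : 0\le k\le N_n,\ m_n\mid k\}$ and $L_n^{(+)}=\{a^k : 0\le k\le N_n,\ m_n\nmid k\}$. A single rotating qubit solves this: initialize $\ket{0}$ on $\cent$, apply the rotation by angle $\theta_n=\pi/m_n$ on each $a$, and on $\dollar$ send the $\ket{1}$-component into an accepting state; after $a^k$ the acceptance probability is $\sin^2(k\theta_n)$, which is positive when $m_n\nmid k$ and \emph{exactly $0$} when $m_n\mid k$. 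Thus a constant number of inner states and a trivial garbage alphabet put the family in $\onenq$, the crucial feature being that \emph{arbitrary real amplitudes} let a fixed-size machine encode divisibility by the astronomically large modulus $m_n$ exactly. For the lower bound a Myhill--Nerode argument shows any $1$dfa solving $(L_n^{(+)},L_n^{(-)})$ needs at least $m_n$ states: for $0\le i<j<m_n$ the suffix $a^{\,m_n-i}$ sends $a^i$ into $L_n^{(-)}$ but $a^j$ into $L_n^{(+)}$, and both continuations stay within the promise since $j+m_n-i<2m_n\le N_n$, so the prefixes $a^0,\dots,a^{m_n-1}$ are pairwise inequivalent. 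As $m_n=2^{2^n}$ eventually exceeds $2^{p(n)}$ for every polynomial $p$, the family lies outside $2^{\oned}$. I expect this step to be the main obstacle, precisely because it must make the problem simultaneously \emph{cheap} for a nondeterministic quantum machine (through one exact rotation) and \emph{provably super-exponential} for deterministic machines.

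Finally, I would assemble the pieces. By the known inclusion $\onen\subseteq 2^{\oned}$, the separating family of the previous paragraph belongs to $\onenq$ but not to $\onen$; combined with the inclusion $\onen\subseteq\onenq$ this yields $\onen\subsetneqq\onenq$, and together with $\onenq\subseteq\oneq$ this completes item~6.
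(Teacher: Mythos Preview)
Your proof of item~6 is correct. The arguments for $\onen\subseteq\onenq$ and $\onenq\subseteq\oneq$ match the paper's in spirit: the paper simply remarks that $\onen\subseteq\onenq$ ``with the use of the amplitude set $\{0,\pm1/2,\pm1\}$'' (your garbage-tape tagging makes the orthogonality explicit), and Lemma~\ref{oneq-topics}(2) uses exactly your half--half split at the first step to push $\onenq$ into $\oneq$.

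The genuine difference is the separating witness for $\onenq\nsubseteq 2^{\oned}$. The paper takes the binary family $\NEQ$ with $NEQ_n^{(+)}=\{w\in\{0,1\}^{2^{e(n)}}:\#_0(w)\neq\#_1(w)\}$, imports the $\onenq$ upper bound from \cite{BC01,BP02}, and sketches the lower bound via communication complexity or a swapping lemma. You instead build a \emph{unary} family based on divisibility by $m_n=2^{2^n}$, supply the $\onenq$ machine explicitly as a single rotating qubit with angle $\pi/m_n$, and obtain the $2^{\oned}$ lower bound by a clean Myhill--Nerode distinguishability argument. Your route is more self-contained (no external citations, no communication-complexity machinery) and highlights that arbitrary real amplitudes alone carry the separation; the paper's route connects to a more ``canonical'' problem but leans on prior work. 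Both derive $\onen\subsetneqq\onenq$ the same way, by composing the witness with $\onen\subseteq 2^{\oned}$.
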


The proof of Theorem \ref{main-theorem} is split into Lemmas \ref{simple-relation}--\ref{one-vs-one-point-five} that follow shortly.

Concerning the state complexity of 1qfa's, unfortunately, the past literature has only a few results. Bianchi, Mereghetti, and Palano \cite{BMP14}, for example, demonstrated an exponential gap between the state complexities of 1dfa's and 1qfa's.
Their result leads to the class separation $\oned\neq \onebq$, which is further strengthened in Lemma \ref{simple-relation}(1).
Let us present the following five relationships, which are partly due to \cite{AF98,AN09,GY15,KW97,YS11}.

\begin{lemma}\label{simple-relation}
(1) $\oned \subsetneqq \onebp$.
(2) $\onebp \subsetneqq \onebq$.
(3) $\onebq\subseteq 2^{\oned}$.
\end{lemma}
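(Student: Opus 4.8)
The plan is to obtain the two containments by direct simulation and the two strict separations by exhibiting explicit families that force an exponential blow-up for the weaker model, reserving a Rabin-style discretization argument for part (3).

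For part (1), the containment $\oned\subseteq\onebp$ is immediate: reading a 1dfa as a 1pfa whose every transition probability is $0$ or $1$ yields a zero-error 1pfa with the same inner states. For the strict separation I would take the equality family over the fixed alphabet $\{0,1,\#\}$, with $L_n^{(+)}=\{x\#x\mid x\in\{0,1\}^n\}$ and $L_n^{(-)}=\{x\#y\mid x,y\in\{0,1\}^n,\ x\neq y\}$. A bounded-error 1pfa with $\poly(n)$ inner states solves this by fingerprinting: on reading the left endmarker it picks a prime $p$ of magnitude $\poly(n)$ uniformly from a pool of $3n$ primes (stored in $\order{\log n}$ bits), accumulates the residue modulo $p$ of the first block, stores it across the separator, recomputes the residue of the second block, and accepts exactly when the two residues coincide; a positive instance is always accepted, while for a negative instance the nonzero difference is divisible by fewer than $n$ of the $3n$ candidate primes, so the error stays below $1/3$. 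Since only $\poly(n)$ states are used, the family is in $\onebp$. On the other hand, the $2^n$ prefixes $\{x\#\mid x\in\{0,1\}^n\}$ are pairwise Myhill--Nerode inequivalent for any 1dfa solving the family, because $x\#x$ is positive while $x'\#x$ is negative whenever $x\neq x'$; hence every such 1dfa needs at least $2^n$ states and the family escapes $\oned$.

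For part (2), the containment $\onebp\subseteq\onebq$ I would realize by decohering each probabilistic branch into the garbage tape: a probabilistic transition with outcome probabilities $p_j$ is simulated by a unitary that spreads amplitudes $\sqrt{p_j}$ over the outcomes while simultaneously writing a distinct garbage symbol recording the branch. Because the garbage tape is write-once and never reread, the branches stay mutually orthogonal forever and cannot interfere, so tracing out the garbage reproduces the classical distribution exactly; the branch map is an isometry extending to a well-formed unitary, the inner-state count is preserved, and the garbage alphabet is bounded by the maximal branching and hence $\poly(n)$. For the strict separation I would invoke the known exponential quantum-over-probabilistic succinctness for a cyclic promise family of $\mathrm{MOD}_p$ type with $p\approx 2^n$: a bounded-error 1qfa needs only $\order{\log p}=\poly(n)$ states, using rotations by angles that are multiples of $2\pi/p$ \cite{AF98}, whereas any bounded-error 1pfa requires $\Omega(p)=2^{\Omega(n)}$ states. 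This probabilistic lower bound, coming from the isolated-cutpoint theory of probabilistic automata, is the delicate ingredient, and I would import it from the cited literature (\cite{AN09} and references therein) rather than reprove it.

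For part (3), I would determinize a bounded-error 1qfa into a 1dfa of exponentially many states. First I would trace out the garbage tape, so that the reduced state is an $|Q|\times|Q|$ density matrix; since a one-way head position is pinned by the step count, each input symbol induces a fixed trace-non-increasing completely positive map, and folding the accepting and rejecting populations into two absorbing flags turns the measure-many dynamics into a single trace-preserving superoperator $\mathcal{E}_{\sigma}$ on a space of real dimension $\order{|Q|^2}$, with the final acceptance probability a fixed linear functional of the reached state. Completely positive trace-preserving maps are non-expansive in the trace norm and the cutpoint $1/2$ is isolated by the constant gap $\delta=\tfrac12-\varepsilon$, so any two reachable states within trace distance $\delta$ stay within distance $\delta$ under every continuation and therefore yield the same verdict; this makes ``being within distance $\delta$'' refine a finite transition congruence. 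Covering the bounded region of reachable subnormalized density matrices by balls of radius $\delta/2$ produces at most $2^{\order{|Q|^2\log(1/\delta)}}$ equivalence classes, and a 1dfa whose states are these classes simulates the 1qfa and decides on the accepting side. With $\delta$ constant and $|Q|=\poly(n)$ this is a 1dfa with $2^{\poly(n)}$ states, giving membership in $2^{\oned}$. The principal obstacle is precisely this last step: making the Rabin-style discretization rigorous in the quantum setting---verifying non-expansiveness after the partial trace, checking that the induced relation is a transition congruence, and confirming that constant isolation yields a state bound independent of the input length.
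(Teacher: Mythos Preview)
Your proposal is correct and follows essentially the same architecture as the paper's proof: direct simulation for the containments, an explicit family forcing exponential blow-up for each strict separation, and a density-matrix discretization for part~(3). The only substantive difference is your choice of witness family in part~(1): the paper imports the family $\mathcal{TRIO}$ from \cite{GY15}, whereas you use the equality promise $\{x\#x\}$ versus $\{x\#y:x\neq y\}$ with a prime-fingerprinting 1pfa and a Myhill--Nerode lower bound. Both work; yours is self-contained and elementary, while the paper's is a direct citation. For part~(2) you and the paper invoke the same $\mathrm{MOD}_p$ separation from \cite{AF98,AN09}, and for part~(3) your Rabin-style trace-norm argument (trace out the garbage, use non-expansiveness of CPTP maps and the isolated cutpoint to bound the Myhill--Nerode index by an $\varepsilon$-net count) is exactly the mechanism behind the paper's Claim~\ref{garbage-tape-1dfa}, arguably stated more carefully than the paper's own sketch.
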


\begin{proof}
(1) Since $\oned\subseteq \onebp$ is obvious, we here show that $\oned\neq\onebp$.
Following \cite{GY15}, we define $\mathcal{TRIO} = \{(TRIO_n^{(+)},TRIO_n^{(-)})\}_{n\in\nat}$ with $TRIO_n^{(+)} =\{\#x_1^2y_1\#x_2^2y_2\#\cdots \# x_{n^2}^2y_{n^2}\mid \forall i\in[n^2](x_i,y_i\in\{0,1\}^n), (x_1,x_2,\ldots,x_{n^2})\prec  (y_1,y_2,\ldots,y_{n^2})\}$ and $TRIO_n^{(-)} =\{\#x_1^2y_1\#x_2^2y_2\#\cdots \# x_{n^2}^2y_{n^2}\mid \forall i\in[n^2](x_i,y_i\in\{0,1\}^n), (x_1,x_2,\ldots,x_{n^2})\succ (y_1,y_2,\ldots,y_{n^2})\}$,
where $(x_1,\ldots,x_r)\succ (y_1,\ldots,y_r)$ iff, for any index $i\in[r]$, there exists an index $j\in[n]$ satisfying $(x_i)_{(j)}> (y_i)_{(j)}$.
It was shown in \cite[Theorems 5.1--5.2]{GY15} that $\mathcal{TRIO}$
can be solved by a certain family of $(4n+3)$-state 1pfa's with error probability at most $(1-\frac{1}{n})^{n^2}$ and that any family of 1dfa's  solving $\mathcal{TRIO}$ requires at least $2^n$ inner states.
These results assert that $\mathcal{TRIO}\in\onebp$ but $\mathcal{TRIO}\notin \oned$.

(2) With the use of a flexible garbage tape, any 1dfa can be simulated by an appropriately chosen 1qfa in such a way that, at every step, the 1qfa discards onto its garbage tape the information on the current inner state of the 1dfa.
In a similar way, any 1pfa can be simulated with the same error probability by an appropriate 1qfa. This instantly implies that $\onebp\subseteq\onebq$.

The separation $\onebp\neq \onebq$ is shown as follows. Given two positive integers $m$ and $r$, the notation $m|r$ expresses that $m$ is \emph{divisible} by $r$ (i.e., $r$ divides $m$). Let $e(n)=2^n$ for all $n\in\nat$.
Given any index $n\in\nat^+$, we define $\Sigma_n =\{ a^jb^{m} \mid j,m\in\nat^{+} \wedge j,m<2^{e(n)} \}$.
Let us consider $2MOD_n^{(+)} =\{a^jb^{m} \in \Sigma_n : j|pr(e(n))\}$ and $2MOD_n^{(-)} =\Sigma_n - 2MOD_n^{(+)}$, where $pr(m)$ is the largest prime number $p$ satisfying $p\leq m$ if $m\geq2$, and $pr(1)$ is undefined. We write $2\mathcal{MOD}$ for the family $\{(2MOD_n^{(+)},2MOD_n^{(-)})\}_{n\in\nat}$.
By modifying a 1qfa construction in \cite{AF98,AN09}, we can build a 1qfa that solves $(2MOD_n^{(+)},2MOD_n^{(-)})$ using $O(\log{pr(e(n))})$ inner states with bounded-error probability. Since $2^{n/c}\leq pr(e(n)) \leq 2^n-1$ for a certain absolute constant $c\geq1$, we conclude that  $\Theta(\log{pr(e(n))}) = \Theta(n)$, and thus $2\mathcal{MOD}$ belongs to $\onebq$.

Next, we want to verify that, for any integer $n\geq1$, no 1pfa with less than $pr(e(n))$ inner states solves $(2MOD_n^{(+)},2MOD_n^{(-)})$ with bounded-error probability.
In comparison, given each prime number $p$, let $MOD_{p}$ denote the unary language $\{a^j : j\in\nat^{+}, j|p\}$. It is known in \cite{AF98,AN09} that (*) for any prime number $p$, any 1pfa needs at least $p$ inner states in order to recognize $MOD_{p}$.
If there is a 1pfa $M$ with $k$ states ($k<pr(e(n))$) that solves $(2MOD_n^{(+)},2MOD_n^{(-)})$ with bounded-error probability, then we can convert $M$ to another 1pfa that can recognize $MOD_{pr(e(n))}$ with bounded-error probability by setting $m=0$ in the definition of $2MOD_n^{(+)}$. This is a contradiction against the above statement (*).
Therefore, $2\mathcal{MOD}$ cannot belong to $\onebp$.

(3) We start with the following general claim. Kondacs and Watrous \cite{KW97} proved a similar claim for 1qfa's with no garbage tape.
In the claim, we need to deal with flexible garbage tapes as well.

\begin{claim}\label{garbage-tape-1dfa}
Any $n$-state bounded-error 1qfa with a garbage alphabet of arbitrary size  can be simulated by a certain $2^{O(n^2)}$-state 1dfa.
\end{claim}

\begin{proof}
Let $\Sigma$ be any alphabet and let $\varepsilon\in[0,1/2)$ be
any error bound.
Take an $n$-state 1qfa $M$ with a garbage alphabet $\Xi$ of size $r\in\nat^{+}$  with error probability at most $\varepsilon$.
Following an argument similar to \cite{KW97} (or \cite{AY15}),  we
want to simulate $M$ classically.
Let $Q$ denote the set of all inner states of $M$ with $n=|Q|$ and
take two Hilbert spaces $\HH_{Q} = span \{\qubit{q}\mid q\in Q\}$ and $\HH_{\Xi} = span\{\qubit{z}\mid z\in\Xi^{\leq |x|+2}\}$.
Let $x$ be any input in $\Sigma^*$.
We need to express the content of the garbage tape as a string in $\Xi^{\leq |x|+2}$. For convenience, we write $A$ for the set $\Xi^{\leq |x|+2}$. We then define the unit sphere $\SSS_n$ to be $\{\qubit{\phi}\in \HH_{Q}\otimes \HH_{\Xi}\mid \|\qubit{\phi}\|=1\}$.

Let us consider any superposition $\qubit{\phi}$ of surface configurations of $M$ on the input $x$.  Assume that $\qubit{\phi}$ is of the form $\sum_{z\in A}\sum_{q\in Q}\alpha_{q,z}\qubit{q}\qubit{z}$ for appropriate amplitudes $\alpha_{q,z}$'s.
For each fixed string $z\in A$, let $\qubit{\phi_z} = (1/p_z)\sum_{q\in Q}\alpha_{q,z}\qubit{q}$, where $p_z$ is a normalizing positive constant.  We are focused on the density operator $\rho = \sum_{z\in A}p_z\density{\phi_z}{\phi_z}$. Since $\rho$ is of dimension $n$, we can choose $k$ vectors $\{\qubit{\phi'_i}\}_{i\in[k]}$ and $k$ numbers $\{p'_i\}_{i\in[k]}$ for which $\rho$ is  expressed as $\sum_{i\in[n]} p'_i\density{\phi'_i}{\phi'_i}$.

A set $S$ of vectors in $\HH_{Q}$ is called an \emph{$\varepsilon$-net} if, for any vector $\qubit{\phi}$ in $\HH_{Q}$, there always exists another  vector $\qubit{\psi}$ in $S$ satisfying
$\|\qubit{\phi}-\qubit{\psi}\|\leq \varepsilon$.
The Solovay-Kitaev theorem (e.g., \cite{KSV02,NC00}) then ensures the existence of an $\varepsilon$-net of $2^{O(n)}$ vectors. For the Hilbert space $\HH_{Q}^{\otimes{n}}$, there is an $\varepsilon$-net, say, $S$  of $(2^{O(n)})^n$ ($=2^{O(n^2)}$) vectors.
Treating each vector in $S$ as a new inner state, we can construct a new 1dfa that recognizes $L$. Note that the state complexity of this 1dfa is at most $|S|$. The lemma follows from the fact that $|S|$ is bounded by $2^{O(n^2)}$.
\end{proof}

The desired inclusion $\onebq\subseteq 2^{\oned}$ comes directly from Claim \ref{garbage-tape-1dfa}. This completes the proof.
\end{proof}

\begin{lemma}\label{oneq-topics}
(1) $\onep \subseteq \oneq$.
(2) $\onenq\subseteq \oneq$.
\end{lemma}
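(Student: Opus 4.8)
The plan is to prove both inclusions by explicit constructions of simulating 1qfa's that exploit the flexible garbage tape to record just enough information that the distinct computation branches remain mutually orthogonal, so that no quantum interference occurs and the squared amplitudes combine exactly as classical probabilities.

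\emph{Part (1): $\onep\subseteq\oneq$.} Given a family $\{N_n\}_{n\in\nat}$ of 1pfa's with $n^{O(1)}$ inner states solving $(L_n^{(+)},L_n^{(-)})$ with unbounded error, with transition probabilities $P_n(q,\sigma,p)$, I would build a 1qfa $M_n$ on the same inner-state set whose transition amplitudes are $\delta(q,\sigma\,|\,p,\xi)=\sqrt{P_n(q,\sigma,p)}$ whenever the garbage symbol $\xi=\xi_q$ encodes the source state $q$, and $0$ otherwise, writing exactly one garbage symbol per step. First I would verify well-formedness of $U^{(x)}_{\delta}$: for a fixed source $q$ the images $\ket{p}$ land in distinct inner states and $\sum_p P_n(q,\sigma,p)=1$ gives normalization, while for distinct sources $q\neq q'$ the recorded garbage symbols differ, forcing orthogonality of the images. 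Since exactly one symbol is written at each step, all configurations reachable at the same step carry garbage strings of equal length, and from a target configuration one recovers both the source state (from the last garbage symbol) and the remaining history uniquely; hence the images of the basis configurations form an orthonormal set and $M_n$ is well-formed. I would then argue that, because each classical computation path of $N_n$ is tagged by a unique garbage string, the measure-many projective measurements applied at each step cannot mix paths, so $p_{acc,M_n}(x)$ and $p_{rej,M_n}(x)$ equal those of $N_n$ exactly and the unbounded-error criteria (i$'$)--(ii$'$) are inherited verbatim. The garbage alphabet has size $O(|Q_n|)$ and the state count is unchanged, both polynomial, giving $\{(L_n^{(+)},L_n^{(-)})\}_{n\in\nat}\in\oneq$.

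\emph{Part (2): $\onenq\subseteq\oneq$.} Given a family $\{M_n\}_{n\in\nat}$ realizing the nondeterministic criterion~(1$''$) --- acceptance probability strictly positive on positive instances and $0$ on negative instances --- I would build $M'_n$ that performs a balanced coin flip on its very first step. On reading $\cent$ from its start state, $M'_n$ maps into $\tfrac{1}{\sqrt2}\ket{q_{acc}}+\tfrac{1}{\sqrt2}\bigl(\text{the image of }M_n\text{'s }\cent\text{-transition}\bigr)$, where $q_{acc}$ is a fresh accepting state disjoint from $M_n$'s states, and all later transitions simply simulate $M_n$; this single step is clearly norm-preserving and completes to a unitary. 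The immediate-accept branch contributes exactly $\tfrac12$ to the acceptance probability (collected by the step-$1$ measurement), while the surviving branch, carrying amplitude $\tfrac{1}{\sqrt2}$, reproduces $M_n$ and contributes $\tfrac12\,p_{acc,M_n}(x)$. Hence $p_{acc,M'_n}(x)=\tfrac12+\tfrac12 p_{acc,M_n}(x)$, which is $>\tfrac12$ on positive instances and exactly $\tfrac12$ (so $p_{rej,M'_n}(x)\ge\tfrac12$) on negative instances, precisely matching criterion~(1$'$). Only constantly many extra states and garbage symbols are introduced, so $M'_n$ witnesses membership in $\oneq$.

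The main obstacle is entirely in Part~(1): establishing well-formedness of the amplitude-$\sqrt{P}$ construction \emph{together with} the claim that the per-step measurements of the measure-many model reproduce the classical probabilities with no interference. The flexible garbage tape is exactly the device that removes this obstacle, since recording a distinguishing tag at every step keeps all branches orthogonal; I would make this rigorous through the unique-recovery argument sketched above. By contrast, the well-formedness and probability bookkeeping for the single coin-flip step in Part~(2) are routine.
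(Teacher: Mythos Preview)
Your proposal is correct and follows essentially the same approach as the paper. For Part~(1) you record the source state on the garbage tape so that the square-root-of-probability amplitudes yield an isometry with no interference, exactly as the paper does (by reference to its earlier $\onebp\subseteq\onebq$ argument); for Part~(2) your ``coin flip into a fresh accepting state with amplitude $1/\sqrt{2}$'' is precisely the paper's ``with probability $1/2$ accept immediately, with probability $1/2$ simulate $M$'' trick, and your probability bookkeeping $p_{acc,M'_n}(x)=\tfrac12+\tfrac12\,p_{acc,M_n}(x)$ matches the paper's verbatim.
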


\begin{proof}
(1) Similarly to the proof of Lemma \ref{simple-relation}(2) for $\onebp\subseteq \onebq$, the use of a garbage tape makes it possible to simulate any unbounded-error 1pfa by an appropriately designed unbounded-error 1qfa, and therefore $\onep\subseteq \oneq$ follows.

(2) To verify $\onenq\subseteq \oneq$, for any given 1qfa $M$, it suffices to construct another 1qfa $N$ so that $N$ splits its first move into two parts:
(i) with probability $1/2$, $N$ starts simulating $M$ on $x$ and, with probability $1/2$, $N$ immediately accepts $x$ .
Let us consider the case where $N$ tries to simulate $M$.
If $M$ accepts $x$ with positive probability $\alpha$, then $N$ also accepts $x$ with probability $\frac{1}{2}+\frac{\alpha}{2}$. On the contrary, if $M$ rejects $x$ with certainty, then $N$ rejects $x$ with probability $\frac{1}{2}$. Thus, $N$ correctly simulates $M$ with unbounded-error probability.
\end{proof}


\begin{lemma}\label{BQ-one-vs-exponential}
(1) $\onebq = \co\onebq$.
(2) $\onebp\subsetneqq 2^{\onebp}$.
\end{lemma}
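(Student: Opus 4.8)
The plan is to treat the two parts independently: part (1) by a direct relabeling of a solving automaton, and part (2) by recycling the hard family already built for Lemma~\ref{simple-relation}(2).

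For part (1), the key observation is that complementing a promise problem should correspond to swapping the accepting and rejecting state sets of a solving 1qfa, and that this swap is invisible to the machine's dynamics. Given a family $\{M_n\}_{n\in\nat}$ witnessing $\LL\in\onebq$ with error at most $\varepsilon$, I would define $M_n'$ to agree with $M_n$ in every component except that $Q_{acc}'=Q_{rej}$ and $Q_{rej}'=Q_{acc}$. Because the transition function $\delta$ is untouched, the time-evolution operator $U_{\delta}$ is unchanged; moreover $\Pi_{halt}=\Pi_{acc}+\Pi_{rej}$ is unaffected by the relabeling, so $\Pi_{non}=I-\Pi_{halt}$ is also unchanged and the computation $\qubit{\phi_{i+1}}=\Pi_{non}U_{\delta}\qubit{\phi_i}$ produced by $M_n'$ coincides step-for-step with that of $M_n$. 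The only thing that changes is the final accounting: since $\Pi_{acc}'=\Pi_{rej}$, one reads off $p_{acc,M_n'}(x)=p_{rej,M_n}(x)$ and $p_{rej,M_n'}(x)=p_{acc,M_n}(x)$ for every $x$ and every step. Checking the bounded-error conditions then shows that $\{M_n'\}$ solves $\co\LL$ with the same error bound $\varepsilon$, the same number of inner states, and the same garbage alphabet, so $\co\LL\in\onebq$. Since $\LL\mapsto\co\LL$ is an involution and we have just shown it maps $\onebq$ into $\onebq$, both inclusions follow and $\onebq=\co\onebq$. I expect no genuine obstacle here; the only point needing care is that the relabeling preserves well-formedness, which is immediate because $U_{\delta}$ does not depend on the accept/reject labels.

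For part (2), the inclusion $\onebp\subseteq 2^{\onebp}$ holds by definition, since using polynomially many states is a special case of using $2^{p(n)}$ many. For strictness I would reuse the family $2\mathcal{MOD}=\{(2MOD_n^{(+)},2MOD_n^{(-)})\}_{n\in\nat}$ from the proof of Lemma~\ref{simple-relation}(2). On the upper side, a deterministic automaton that tracks the residue of the length of the $a$-block modulo $pr(e(n))$ and accepts iff that residue is $0$ uses $O(pr(e(n)))$ inner states; since $pr(e(n))\leq 2^n-1$, this is $2^{O(n)}=2^{\mathrm{poly}(n)}$ states, and as deterministic machines are a special case of bounded-error probabilistic ones we get $2\mathcal{MOD}\in 2^{\oned}\subseteq 2^{\onebp}$. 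On the lower side, Lemma~\ref{simple-relation}(2) already establishes that no 1pfa with fewer than $pr(e(n))$ inner states solves $(2MOD_n^{(+)},2MOD_n^{(-)})$ with bounded error, and since $pr(e(n))\geq 2^{n/c}$ for an absolute constant $c\geq1$, no polynomially-state-bounded family of 1pfa's can solve $2\mathcal{MOD}$; hence $2\mathcal{MOD}\notin\onebp$. Combining the two sides yields the proper inclusion $\onebp\subsetneqq 2^{\onebp}$. The main point worth double-checking in part (1) is that the measure-many accumulation of acceptance probability is mirrored exactly under the swap, which reduces to the identity $\Pi_{acc}'=\Pi_{rej}$; in part (2) the substantive work was already done in Lemma~\ref{simple-relation}, so nothing new is required beyond observing that its hard family lands in the exponential-state class.
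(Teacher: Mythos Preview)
Your proof of part (1) is correct and matches the paper's approach exactly; the paper simply says ``exchange the roles of accepting and rejecting states,'' and your more detailed verification that $\Pi_{halt}$ and hence $\Pi_{non}$ are unchanged under the relabeling is a welcome elaboration of that one-line argument.

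For part (2), your argument is correct but takes a different route from the paper. You exhibit the explicit separating family $2\mathcal{MOD}$ from Lemma~\ref{simple-relation}(2): the lower bound $2\mathcal{MOD}\notin\onebp$ is already proved there, and you supply the easy upper bound $2\mathcal{MOD}\in 2^{\oned}\subseteq 2^{\onebp}$ via a residue-counting 1dfa with $O(pr(e(n)))=2^{O(n)}$ states. The paper instead argues indirectly: assuming $\onebp=2^{\onebp}$, it chains $2^{\oned}\subseteq 2^{\onebp}=\onebp\subseteq\onep$ and then invokes the forward reference Lemma~\ref{unbounded-error-case}(1) ($2^{\oned}\nsubseteq\onep$, proved via the Freivalds--Karpinski family $\LL_{NH}$) for a contradiction. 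Your approach is more self-contained, relying only on material already established at this point in the text, and it has the minor bonus of naming an explicit witness to the separation; the paper's approach, on the other hand, illustrates how the unbounded-error separation $2^{\oned}\nsubseteq\onep$ automatically propagates down to the bounded-error hierarchy.
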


\begin{proof}
(1) This is shown simply by exchanging the roles of accepting states and rejecting states of underlying 1qfa's.

(2) Obviously, $\onebp\subseteq 2^{\onebp}$ follows.
If $\onebp=2^{\onebp}$ holds, then $2^{\onebp}\subseteq \onep$ follows instantly because of $\onebp\subseteq \onep$.
Since $\oned\subseteq \onebp$, we conclude from $2^{\onebp} \subseteq\onep$ that $2^{\oned}\subseteq \onep$. However, this contradicts Lemma \ref{unbounded-error-case}(1).
Therefore, we obtain $\onebp\neq 2^{\onebp}$.
\end{proof}

\begin{lemma}\label{unbounded-error-case}
(1) $2^{\oned}\nsubseteq \onep$.
(2) $\onep = \co\onep$.
(3) $\onen\subsetneqq \onep$.
(4) $\onep\subsetneqq 2^{\onep}$.
\end{lemma}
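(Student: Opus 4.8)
The plan is to dispatch the four parts in the order (1), (2), then (4) and (3), since (4) reduces immediately to (1) and (3) reduces to (1) and (2) together with the known separation $\onen\neq\co\onen$ of Kapoutsis \cite{Kap09,Kap12}. Thus the genuine work is concentrated in parts (1) and (2).

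For (1) I would exhibit a concrete witness family built from the inner-product function. Fix the alphabet $\{0,1,\#\}$ and, for each $n$, set $\Sigma_n=\{x\# y\mid x,y\in\{0,1\}^n\}$, $IP_n^{(+)}=\{x\#y\in\Sigma_n\mid \sum_{i}x_{(i)}y_{(i)}\equiv 0\ (\mathrm{mod}\ 2)\}$, and $IP_n^{(-)}=\Sigma_n-IP_n^{(+)}$. The upper bound $\{(IP_n^{(+)},IP_n^{(-)})\}_{n}\in 2^{\oned}$ is routine: a 1dfa can store $x$ in its finite control using $2^{O(n)}$ inner states and, after reading $\#$, compute the parity of $\langle x,y\rangle$ on the fly while scanning $y$. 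For the lower bound I would use the fact that a one-way pfa factorizes at the separator: if $M$ has $k$ inner states, then after reading $x\#$ it is in a probability distribution $s_x\in\real^k$, and the acceptance probability on the remaining suffix is a fixed linear functional $\langle s_x,g_y\rangle$, so the matrix $P=[\,p_{acc,M}(x\#y)\,]_{x,y}$ has rank at most $k$. Because there are only finitely many valid inputs of each index, positive instances satisfy $p_{acc,M}>\tfrac12+g$ for some $g>0$; choosing $0<\eta<g$ and putting $\tilde A=P-(\tfrac12+\eta)J$ (where $J$ is the all-ones matrix) produces a real matrix whose sign pattern is exactly the Hadamard matrix $H_n=[(-1)^{\langle x,y\rangle}]_{x,y}$, with $\mathrm{rank}(\tilde A)\le k+1$. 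Forster's linear lower bound on the unbounded-error probabilistic communication complexity then gives that the sign-rank of $H_n$ is at least $2^{n/2}$, whence $k\ge 2^{n/2}-1$, so no polynomial-state unbounded-error 1pfa solves the family and $\{(IP_n^{(+)},IP_n^{(-)})\}_n\notin\onep$. I expect this exponential lower bound to be the main obstacle of the whole lemma; its crux is the passage from an arbitrary valid 1pfa to a rank statement and then to a sign-rank estimate.

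For (2) the natural move is to exchange the accepting and rejecting states of each 1pfa, so that the new acceptance probability becomes $1-p_{acc,M}$. On the originally accepting side this works verbatim (strict $p_{acc,M}>\tfrac12$ becomes strict $p_{rej}>\tfrac12$), but on the originally rejecting side we obtain only $p_{acc}=1-p_{acc,M}\ge\tfrac12$, whereas the definition of $\onep$ demands a \emph{strict} inequality on positive instances. This strict-versus-nonstrict mismatch at the cut-point $\tfrac12$ is the delicate point: since a family in $\onep$ may have infinitely many valid inputs of a given index, there need be no uniform gap below $\tfrac12$, so one cannot repair it by perturbing the acceptance probability by a fixed amount (indeed, no affine or continuous reshaping of $p_{acc}$ alone can turn the pattern $>\tfrac12/\le\tfrac12$ into $\le\tfrac12/>\tfrac12$). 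I would resolve it using the standard equivalence between strict and non-strict cut-point recognition for probabilistic finite automata, adapted to the nonuniform promise setting, which upgrades the ``$\ge\tfrac12$'' on the rejecting side to a strict ``$>\tfrac12$'' without increasing the state complexity beyond a polynomial; combined with the swap this gives $\onep\subseteq\co\onep$, and the reverse inclusion is symmetric.

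Finally, (3) and (4) are short. For (3), $\onen\subseteq\onep$ follows by simulating a 1nfa with a 1pfa that makes each nondeterministic choice uniformly at random and, as an outer coin, accepts outright with probability $\tfrac12$ and otherwise runs the simulation: a positive instance then receives acceptance probability strictly above $\tfrac12$ while a negative instance receives exactly $\tfrac12$, matching the unbounded-error convention with only $O(1)$ extra inner states. Strictness is obtained indirectly: were $\onen=\onep$, then by (2) we would get $\onen=\onep=\co\onep=\co\onen$, contradicting $\onen\neq\co\onen$; hence $\onen\subsetneqq\onep$. For (4), $\onep\subseteq 2^{\onep}$ is immediate, and since every exponential-size 1dfa is in particular an exponential-size unbounded-error 1pfa we have $2^{\oned}\subseteq 2^{\onep}$; as part (1) already furnishes a family in $2^{\oned}$ (hence in $2^{\onep}$) that lies outside $\onep$, the inclusion is strict, yielding $\onep\subsetneqq 2^{\onep}$.
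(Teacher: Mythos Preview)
Your proposal is correct, and for parts (2), (3), and (4) it coincides with the paper's argument: the paper also handles the strict/non-strict cut-point asymmetry in (2) by first upgrading any unbounded-error 1pfa to one that both accepts and rejects with probability strictly exceeding $\tfrac12$ (it builds this explicitly by mixing in a tiny rejection bias calibrated by the minimal nonzero transition probability), then swaps accepting and rejecting states; it derives (3) from (2) together with $\onen\neq\co\onen$; and it obtains (4) from (1) via $2^{\oned}\subseteq 2^{\onep}$, exactly as you do.

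For part (1), however, you take a genuinely different route. The paper uses the language $L_{NH}=\{0^{x}10^{y_1}1\cdots 10^{y_k}1\mid \exists l\,(x=\sum_{i\le l}y_i)\}$ restricted to inputs of length at most $n^{\log n}$, and imports both the upper bound (deterministic, $O(n^{\log n})$ states) and the lower bound (no unbounded-error 1pfa with $o(n^{\log n})$ states) directly from Freivalds and Karpinski \cite{FK94}. Your inner-product construction instead reduces to a sign-rank statement and invokes Forster's lower bound for the Hadamard matrix. Your approach is more self-contained from a complexity-theoretic viewpoint and yields a clean $2^{\Omega(n)}$ lower bound on the number of states; the paper's approach is shorter because it outsources the hard work to \cite{FK94}, but it gives only a quasi-polynomial separation ($n^{\log n}$ versus $n^{O(1)}$), which still suffices. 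One small point to tighten in your write-up: since a 1pfa may enter a halting state while still reading $\cent x\#$, the acceptance probability is $a_x+\langle s_x,g_y\rangle$ rather than just $\langle s_x,g_y\rangle$, so the rank of $P$ (and hence of $\tilde A$) is bounded by $k+O(1)$ rather than $k$; this does not affect the conclusion.
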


\begin{proof}
(1) Let us consider $L_{NH} = \{0^x10^{y_1}1\cdots 10^{y_k}1\mid x,y_1,\ldots,y_k\in\nat^{+},\exists l\in[k]\;(x=\sum_{i\in[l]} y_i)\}$. Associated with $L_{NH}$, we define the family $\LL_{NH} = \{(L_n^{(+)},L_n^{(-)})\}_{n\in\nat}$ by setting $L_n^{(+)} =\{ z\in L_{NH} \mid |z|\leq n^{\log{n}}\}$ and $L_n^{(-)} = \{z\notin L_{NH}\mid |z|\leq n^{\log{n}}\}$.
We want to show that $\LL_{NH}\in 2^{\oned}-\onep$.
As demonstrated by Freivalds and Karpinski \cite{FK94}, no one-way PTM recognizes $L_{NH}$ using $o(\log{n})$ space with unbounded-error probability. This means that no 1pfa with $o(n^{\log{n}})$ inner states can recognize $(L_n^{(+)},L_n^{(-)})$ for almost all $n\in\nat$.
We thus conclude that $\LL_{NH}$ does not belong to $\onep$. However, as shown in \cite{FK94}, $\LL_{NH}$ can be solved deterministically using $O(n^{\log{n}})$ ($\subseteq O(2^n)$) inner states, and thus $\LL_{NH}$ falls in $2^{\oned}$.

(2) Firstly, we modify the acceptance/rejection criteria of unbounded-error 1pfa's.

\begin{claim}
Given a 1pfa $M$, there exists another 1pfa $N$ such that, for any $x$, if $M$ accepts $x$ with probability more than $1/2$ (resp., rejects $x$ with probability at least $1/2$), then $N$ accepts (resp., rejects) $x$ with probability more than $1/2$.
\end{claim}

Let us show how the desired result follows from this claim.
By the claim, we first pick up a nonuniform family $\{M_n\}_{n\in\nat}$ of polynomial-size unbounded-error 1pfa's that always either accept or reject with probability  more than $1/2$. We then simply exchange the roles of accepting and rejecting states.
This concludes that $\onep=\co\onep$.

To show the claim, let $\alpha_0$  be the minimum nonzero transition probability of $M$. Without loss of generality, we assume that $\alpha_0\leq \frac{1}{2}$.
Note that the acceptance probability of $M$ on $x$ (if any) is more than $1/2+\alpha_0^{|x|+2}$ since $M$ halts within $|x|+2$ steps.
We design a new 2pfa $N$ to behave as follows.
On input $x$, in scanning $\cent$, with probability $1-\alpha_0/2$, we mimic the behavior of $M$. With probability $\alpha_0/2$, from the initial inner state $q_0$, we enter a new inner state, say, $\tilde{q}_0$.
Whenever we read each tape symbol, from the current inner state $\tilde{q}_0$, we re-enter $\tilde{q}_0$ with probability $\alpha_0$.
Furthermore, with equal probability $\frac{1}{2}(1-\alpha_0)$, we enter both an accepting state and a rejecting state.
When we read $\dollar$, from $\tilde{q}_0$, we enter a rejecting state with certainty.
It thus follows that either the accepting probability of $N$ is more than $\frac{1}{2}+\frac{5}{8}\alpha_0^{|x|+2}>\frac{1}{2}$ or the rejecting probability of $N$ is more than $\frac{1}{2}+\frac{1}{4}\alpha_0^{|x|+2}>\frac{1}{2}$.

(3) The inclusion $\onen\subseteq \onep$ can be shown by modifying underlying 1nfa's as follows: choose all nondeterministic transitions ``probabilistically'' and enter both an accepting state and a rejecting state with equal probability only along each rejecting path.
The separation  $\onen\neq \onep$ instantly follows because $\onep =\co\onep$ by (2) and $\onen\neq \co\onen$ by  \cite{Kap09,Kap12}.

(4) Clearly, it follows that $\onep\subseteq 2^{\onep}$. If $\onep=2^{\onep}$, then we obtain $2^{\oned}\subseteq 2^{\onep}=\onep$ since $\oned\subseteq\onep$. As a consequence,  $2^{\oned}\subseteq \onep$ follows. This is a clear contradiction against Lemma \ref{unbounded-error-case}(1). Therefore, $2^{\onep}$ is different from $\onep$.
\end{proof}


\begin{lemma}
$\onen\subsetneqq\onenq \nsubseteq 2^{\oned}$.
\end{lemma}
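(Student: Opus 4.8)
The plan is to prove the three assertions separately: the inclusion $\onen\subseteq\onenq$, the incomparability $\onenq\nsubseteq 2^{\oned}$, and then the strictness $\onen\subsetneqq\onenq$. The strictness will come for free: since the known relation $\onen\subsetneqq 2^{\oned}$ (from \cite{Kap09,Kap12}) in particular gives $\onen\subseteq 2^{\oned}$, an equality $\onen=\onenq$ would force $\onenq\subseteq 2^{\oned}$, contradicting the second assertion. Hence the real work lies in the inclusion and, above all, in the separation from $2^{\oned}$.

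For $\onen\subseteq\onenq$, I would simulate a polynomial-size 1nfa $M_n$ by a nondeterministic 1qfa $N_n$ that explores all computation paths in superposition. At a step where $M_n$ in state $q$ reading $\sigma$ branches into states $p_1,\dots,p_k$, the machine $N_n$ applies the map $\qubit{q}\mapsto \frac{1}{\sqrt{k}}\sum_{j=1}^{k}\qubit{p_j}\qubit{\xi_j}$, where each garbage symbol $\xi_j$ records the chosen transition $(q,\sigma,j)$. Two observations make this work. First, because distinct transitions leave distinct records on the flexible garbage tape, the images of distinct source configurations are mutually orthogonal, so the map is an isometry that extends to a unitary time-evolution operator; thus $N_n$ is well-formed, keeps the same (polynomially many) inner states, and uses a polynomially bounded garbage alphabet. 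Second, distinct computation paths carry distinct garbage contents and therefore never interfere. Consequently, on $x\in L_n^{(-)}$ every branch ends in a rejecting configuration, so the acceptance amplitude is exactly $0$ and $N_n$ rejects with certainty; on $x\in L_n^{(+)}$ at least one branch ends in an accepting configuration whose amplitude cannot be cancelled, so $N_n$ accepts with positive probability. This is precisely the acceptance convention (1$''$) defining $\onenq$.

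For $\onenq\nsubseteq 2^{\oned}$, I would exhibit a single non-regular promise problem placed into a constant family. Let $\Sigma=\{a,b\}$, $\Sigma_n=\{a^ib^j\mid i,j\in\nat\}$, and for every $n$ set $L_n^{(+)}=\{a^ib^j\mid i\neq j\}$ and $L_n^{(-)}=\{a^ib^i\mid i\in\nat\}$. A standard fooling-set (pumping) argument shows that no 1dfa of any size separates $L_n^{(+)}$ from $L_n^{(-)}$: if a 1dfa reached the same state after reading $a^{i_1}$ and $a^{i_2}$ with $i_1\neq i_2$, it could not distinguish $a^{i_1}b^{i_1}$ from $a^{i_2}b^{i_1}$. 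Since this holds for every $n$, the family lies outside $2^{\oned}$. Membership in $\onenq$ I would obtain with a constant-size nondeterministic 1qfa built from the Ambainis--Freivalds rotation technique \cite{AF98,AN09}: working on a two-dimensional inner space spanned by $\qubit{q_0},\qubit{q_1}$, the machine starts in $\qubit{q_0}$, applies the rotation $R_\theta$ on each $a$ and $R_{-\theta}$ on each $b$ for a fixed angle $\theta$ with $\theta/\pi$ irrational (e.g.\ $\cos\theta=3/5$), and on reading $\dollar$ routes $\qubit{q_1}$ to an accepting state and $\qubit{q_0}$ to a rejecting state. After reading $a^ib^j$ the accepting amplitude equals $\sin((i-j)\theta)$, which vanishes exactly when $(i-j)\theta\in\pi\integer$; by the irrationality of $\theta/\pi$ this happens iff $i=j$. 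Hence $i\neq j$ yields positive acceptance probability and $i=j$ yields acceptance probability $0$, matching convention (1$''$); the rotations are already unitary, so no garbage tape is even needed for well-formedness.

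The principal obstacle is the separation. Two points require care. On the membership side I must verify that the measure-many dynamics accrue no acceptance or rejection probability before the right endmarker (guaranteed by keeping $\qubit{q_0},\qubit{q_1}$ non-halting) so that the cumulative accepting amplitude is genuinely $\sin((i-j)\theta)$, and that the chosen angle makes this amplitude exactly zero precisely on the diagonal $i=j$. On the non-membership side I must argue the fooling-set bound against 1dfa's of \emph{unbounded} size and stress that $2^{\oned}$ demands a correct 1dfa for each $n$, so that a problem unsolvable by any finite 1dfa cannot belong to $2^{\oned}$ regardless of the allowed state budget. For the inclusion, the delicate step is maintaining well-formedness while faithfully encoding nondeterminism, which is exactly what the orthogonal garbage records secure.
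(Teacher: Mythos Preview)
Your proof is correct, and the overall architecture---prove $\onen\subseteq\onenq$, prove $\onenq\nsubseteq 2^{\oned}$, and deduce strictness from $\onen\subseteq 2^{\oned}$---matches the paper's. The inclusion argument via garbage-tagged branching is essentially what the paper has in mind when it says the inclusion is ``easy to show.''

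Where you diverge is in the witness for $\onenq\nsubseteq 2^{\oned}$. The paper uses the family $\NEQ$ with $NEQ_n^{(+)}=\{w\in\{0,1\}^{2^{2^n}}\mid \#_0(w)\neq\#_1(w)\}$ and $NEQ_n^{(-)}$ its complement inside $\{0,1\}^{2^{2^n}}$; membership in $\onenq$ is cited from \cite{BC01,BP02}, and non-membership in $2^{\oned}$ is argued by a communication-complexity (or swapping-lemma) lower bound showing that any 1dfa separating the two sets needs roughly $2^{2^n}$ states. You instead take the \emph{constant} family built on the non-regular promise problem $(\{a^ib^j\mid i\neq j\},\{a^ib^i\})$: membership in $\onenq$ comes from the irrational-rotation 1qfa, and non-membership in $2^{\oned}$ is a one-line fooling-set argument, since no finite 1dfa whatsoever separates the two sets. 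Your route is more elementary and fully self-contained, at the price of using infinite promise sets; the paper's choice keeps each $(NEQ_n^{(+)},NEQ_n^{(-)})$ finite, so that every individual level \emph{is} solvable by some 1dfa and the separation genuinely reflects state-complexity growth outpacing $2^{p(n)}$. Both are valid within the paper's framework.
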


\begin{proof}
It is easy to show that $\onen\subseteq \onenq$ with the use of the amplitude set $\{0,\pm1/2,\pm1\}$.
To lead to $\onen\neq\onenq$, it suffices to prove that $\onenq\nsubseteq 2^{\oned}$ because, if $\onen=\onenq$, then  we obtain $\onenq\subseteq 2^{\oned}$ from $\onen\subseteq2^{\oned}$ \cite{Kap09}, leading to a contradiction.

Our goal is now to prove that $\onenq\nsubseteq 2^{\oned}$.
Let $\Sigma=\{0,1\}$ and set $e(n)=2^n$ for any $n\in\nat$.
Consider the family $\NEQ = \{(NEQ_n^{(+)},NEQ_n^{(-)})\}_{n\in\nat}$, where $NEQ_n^{(+)} =\{w\in\Sigma^{2^{e(n)} }\mid \#_0(w)\neq\#_1(w)\}$ and $NEQ_n^{(-)} =\Sigma^{2^{e(n)}} - NEQ_n^{(+)}$ for each index $n\in\nat$, where $\#_b(x)$ expresses the total number of symbol $b$ in string $x$.
Since $\NEQ$ belongs to $\onenq$ \cite{BC01,BP02}, we want to claim that $\NEQ\notin2^{\oned}$. Assume otherwise.
Note that $\co\NEQ$ also belongs to $2^{\oned}$.  By a communication-complexity argument (e.g., \cite{KN06}) or a swapping lemma \cite{Yam08}, it is not difficult to show that $N_n$ requires at least $o(2^{e(n)})$ inner states to recognize $NEQ_n^{(-)}$ in the worst case. This is a clear contradiction against our assumption of $\NEQ\in 2^{\oned}$.
\end{proof}


To finish the proof of Theorem \ref{main-theorem}, we still need to examine the computational power of 1.5-way finite automata. With the use of 1.5qfa's, we can define the associated nonuniform state complexity class $1.5\mathrm{BQ}$. We have already seen in Lemma \ref{dfa-one-point-five} that $\mathrm{1.5D}$ coincides with $\oned$. However, as shown in Lemma \ref{one-vs-one-point-five}, the same coincidence does not occur between $1.5\mathrm{BQ}$ and $\onebq$.

\begin{lemma}\label{one-vs-one-point-five}
$\onebq \neq \mathrm{1.5BQ}$.
\end{lemma}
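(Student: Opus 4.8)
The plan is to separate the two classes by exhibiting a single family $\LL=\{(L_n^{(+)},L_n^{(-)})\}_{n\in\nat}$ that lies in $\mathrm{1.5BQ}$ but falls outside $\onebq$. The guiding intuition comes from the collapse $\oned=\mathrm{1.5D}$ of Lemma \ref{dfa-one-point-five}: there a stationary chain of a 1.5dfa is deterministic and must terminate in fewer than $|Q|$ steps, so it can be ``short-circuited'' into a single one-way transition. Quantumly this argument breaks down, because a stationary move in a measure-many device is followed by a projective halting measurement, so dwelling on one cell realizes a \emph{geometrically distributed} number of applications of a fixed unitary together with the interference these applications create. I would therefore design $\LL$ so that solving $(L_n^{(+)},L_n^{(-)})$ requires applying, at a single designated input symbol, a number of elementary quantum operations that grows with $n$, which a 1.5qfa can supply ``for free'' through a chain of stationary moves but which a strictly one-way 1qfa of polynomially many states cannot, since it is forced to move (and to measure) exactly once per tape cell.

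Concretely, I would take a modular/rotation family in the spirit of the $2\mathcal{MOD}$ construction of Lemma \ref{simple-relation}(2): the 1.5qfa keeps a small rotating register and, upon reaching the designated symbol, uses a stationary gadget to iterate the rotation and to accumulate acceptance amplitude through repeated halting measurements, thereby solving $(L_n^{(+)},L_n^{(-)})$ with bounded error using $O(\mathrm{poly}(n))$ inner states and a polynomially bounded garbage alphabet; this establishes $\LL\in\mathrm{1.5BQ}$. The upper-bound construction is where I must check that \emph{stationarity is essential}, i.e.\ that flattening the stationary phase (as in the proof of Lemma \ref{dfa-one-point-five}) destroys the interference the acceptance gap depends on, so that the same family cannot be handled strictly one-way without blowing up the state count.

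For the lower bound $\LL\notin\onebq$, my primary tool would be a \emph{direct} one-way quantum state lower bound rather than a simulation route, mirroring the treatment of $\NEQ$ earlier: after its single left-to-right pass an $n'$-state 1qfa produces a final superposition living in an $n'$-dimensional space, and its acceptance probability is a bounded-degree form in these amplitudes, so a swapping-lemma argument \cite{Yam08} (or an equivalent communication-complexity argument) limits how many inputs the machine can pairwise distinguish with a constant gap; choosing $\LL$ with many mutually ``far apart'' instances then forces $n'$ to be superpolynomial. If this direct bound is awkward to push through, I would instead invoke $\onebq\subseteq 2^{\oned}$ (Lemma \ref{simple-relation}(3), via Claim \ref{garbage-tape-1dfa}) and prove a $2^{\omega(\mathrm{poly}(n))}$ deterministic lower bound by a fooling-set argument, noting that the $n'\mapsto 2^{O(n'^2)}$ blow-up of that simulation excludes polynomial-size 1qfa's only when $\LL$ itself lies outside $2^{\oned}$. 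The main obstacle I anticipate is to reconcile the two sides quantitatively: I must pin down a family whose 1.5-way complexity is genuinely polynomial \emph{and} whose one-way (quantum) complexity is provably super-polynomial, and verify that the stationary-move gadget cannot be collapsed the way the deterministic chain was in Lemma \ref{dfa-one-point-five}; making the upper-bound interference and the lower-bound indistinguishability argument target the same combinatorial feature of $\LL$ is the delicate heart of the proof.
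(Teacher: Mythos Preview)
Your backup route—passing through $\onebq\subseteq 2^{\oned}$ (Lemma~\ref{simple-relation}(3)) and then proving a deterministic lower bound—is exactly what the paper does, and you correctly observe that this forces the target family outside $2^{\oned}$, i.e.\ its 1dfa complexity must exceed $2^{p(n)}$ for every polynomial $p$. But you never commit to a concrete $\LL$, and the direction you sketch (a modular/rotation family ``in the spirit of $2\mathcal{MOD}$'') is ill-suited: $2\mathcal{MOD}$ itself needs only about $pr(2^n)\le 2^n$ deterministic states, so it sits comfortably inside $2^{\oned}$ and the simulation route yields nothing. Your primary plan of a direct one-way quantum lower bound is vaguer still, and the swapping/communication machinery you invoke is calibrated for information-theoretic separations of the $\NEQ$ type, not for the interference-based advantage that stationary moves confer.

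The paper's choice is the equality family $\mathcal{EQ}$ with $EQ_n^{(+)}=\{a^{\bar n}b^{\bar n}\}$ and $EQ_n^{(-)}=\{a^ib^j:i\neq j,\ i+j=2\bar n\}$, where $\bar n=2^{2^n}$. The $\mathrm{1.5BQ}$ upper bound is a one-line citation to \cite{AY15}: a bounded-error 1.5qfa with a \emph{constant} number of inner states solves every $(EQ_n^{(+)},EQ_n^{(-)})$ (the standard $a^nb^n$ construction, which needs stationary moves but no leftward motion). The lower bound is the textbook fooling-set argument: any 1dfa separating $EQ_n^{(+)}$ from $EQ_n^{(-)}$ must remember the $a$-count and hence needs at least $2^{2^n}$ states, so $\mathcal{EQ}\notin 2^{\oned}$ and therefore $\mathcal{EQ}\notin\onebq$. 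With the right family the entire proof is three lines; the work you anticipate—designing a bespoke stationary gadget and reconciling upper and lower bounds quantitatively—disappears.
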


\begin{proof}
To lead to a contradiction, we assume that $\onebq=\mathrm{1.5BQ}$. Let us consider the family $\mathcal{EQ} =\{(EQ_n^{(+)},EQ_n^{(-)})\}_{n\in\nat}$, each element $(EQ_n^{(+)},EQ_n^{(-)})$ of which is defined as $EQ_n^{(+)} =\{a^{\bar{n}}b^{\bar{n}}\mid \bar{n} = 2^{e(n)}\}$ and $EQ_n^{(-)} = \{a^ib^j\mid i,j\geq0, i\neq j, i+j=2^{e(n)+1}\}$ for each index $n\in\nat$, where $e(n)=2^n$.
Since we need to deal only with \emph{valid} inputs given to $\mathcal{EQ}$, for each promise decision problem $(EQ_n^{(+)},EQ_n^{(-)})$, we do not need to check that any given input $x$ has length exactly $e(n)+1$.
As noted in \cite{AY15}, each $(EQ_n^{(+)},EQ_n^{(-)})$ can be solved by an appropriately designed bounded-error 1.5qfa with a constant number of inner states, and thus $\mathcal{EQ}$ belongs to $\onebq$ by our assumption.
Since $\onebq\subseteq 2^{\oned}$ by Lemma \ref{simple-relation}(3),
there are a polynomial $p$ and a nonuniform family of 1dfa's that solves $\mathcal{EQ}$ using at most $2^{p(n)}$ inner states. However, any 1dfa solving $(EQ_n^{(+)},EQ_n^{(-)})$ requires at least $2^{e(n)}$ inner states. Since $p(n)<e(n)$ for almost all $n\in\nat$, we obtain a clear contradiction. Therefore, $1.5\mathrm{BQ}$ differs from $\onebq$.
\end{proof}


Recall that our 1qfa uses a \emph{flexible garbage tape}. When both an input-tape head and a garbage-tape head synchronize in their movement (thus, the garbage-tape head makes no stationary move), we can prove another close relationship between 1pfa's and 1qfa's in Lemma \ref{rigid-garbage-tape}.
Such a restrictive use of the garbage tape is referred to as a \emph{rigid garbage tape} in contrast to the  flexible garbage tape.
A similar model was already seen in \cite[Section 5.2]{Yam14b}, in which a rewritable track of an input tape is used as our rigid garbage tape.
To emphasize the use of rigid garbage tapes, we use the special notation $\oneq^{(+)}$ opposed to $\oneq$ when underlying 1qfa's use rigid garbage tapes.


\begin{lemma}\label{rigid-garbage-tape}
(1) $\oneq^{(+)} = \onep$. (2) $2^{\oneq^{(+)}} = 2^{\onep}$.
\end{lemma}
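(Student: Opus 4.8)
The plan is to prove the two inclusions of (1) separately and then lift them to (2) by tracking the state blow-ups. For the easy inclusion $\onep\subseteq\oneq^{(+)}$, I would start from a polynomial-size unbounded-error 1pfa $P$ and build a 1qfa $M$ with a rigid garbage tape that, at every step, records the transition just taken. Concretely, if $P$ in inner state $q$ reading $\sigma$ moves to $p$ with probability $\alpha_{q,\sigma,p}$, let $M$ apply $\qubit{q}\mapsto\sum_{p}\sqrt{\alpha_{q,\sigma,p}}\,\qubit{p}\qubit{\xi_{q,\sigma,p}}$, where $\xi_{q,\sigma,p}$ is a distinct garbage symbol encoding the triple $(q,\sigma,p)$. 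Since distinct source states and distinct branches always write distinct garbage symbols, this map sends the basis $\{\qubit{q}\}$ to an orthonormal set (each image has norm $1$ because $\sum_{p}\alpha_{q,\sigma,p}=1$), so it is an isometry extending to a well-formed time-evolution operator. Rigid writing keeps all branches mutually orthogonal throughout, so no interference occurs and the cumulative acceptance probability of $M$ on $x$ equals that of $P$ on $x$ exactly; hence the unbounded-error criteria transfer verbatim. The state count is $O(|Q|)$ and the garbage alphabet has size $O(|Q|^2|\check{\Sigma}|)$, both polynomial.

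The substantial inclusion is $\oneq^{(+)}\subseteq\onep$, and this is where rigidity is essential. First I would observe that, because the garbage tape is rigid, after $k$ steps every branch has written a garbage string of length exactly $k$, and the append map $z\mapsto z\xi$ is injective, so each target garbage string has a unique predecessor. Tracing out the garbage register therefore yields a \emph{Markovian} reduced dynamics on $\HH_Q$: writing $E_{\sigma,\xi}$ for the operator with $(p,q)$-entry $\delta(q,\sigma\,|\,p,\xi)$, the reduced density operator evolves by the channel $\rho\mapsto\sum_{\xi}E_{\sigma,\xi}\,\rho\,E_{\sigma,\xi}^{\dagger}$, interleaved with the projections $\Pi_{acc},\Pi_{rej},\Pi_{non}$, and this recurrence depends only on $\rho$, not on any garbage-off-diagonal coherences. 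This is exactly what fails for a flexible garbage tape, where the choice $\xi=\lambda$ lets branches with distinct garbage histories merge and interfere; so this step explains why only the rigid version collapses to $\onep$. Since $M$ is one-way it halts within $|x|+2$ steps, so $p_{acc}(x)$ is a finite sum of the per-step accepting weights.

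Next I would linearize this evolution: vectorizing the $n\times n$ Hermitian operator $\rho$ and splitting real and imaginary parts produces a real vector of dimension $O(n^2)$ whose per-step update is a fixed real matrix depending only on the scanned symbol, and the accumulated acceptance weight is carried in one extra coordinate. This exhibits $M$ as a \emph{generalized real-weighted probabilistic automaton} $\GG$ with $O(n^2)$ states whose value on $x$ equals $p_{acc}(x)\in[0,1]$, so the unbounded-error condition ``$p_{acc}(x)>\tfrac12$'' becomes a cut-point-$\tfrac12$ acceptance condition for $\GG$. Finally I would invoke the classical equivalence between generalized real automata and stochastic automata under cut-point acceptance (Turakainen's theorem), which converts $\GG$ into a genuine 1pfa with arbitrary real transition probabilities, only a polynomial number of additional states, and a recomputable cut-point, after a routine adjustment of the strict/non-strict boundary. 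This gives $\oneq^{(+)}\subseteq\onep$ and hence equality in (1). The hard part will be precisely the two ingredients of this direction: verifying that rigidity makes the reduced channel truly Markovian, so that $p_{acc}(x)$ is an $O(n^2)$-dimensional real-linear invariant, and carrying the unbounded-error threshold correctly through the generalized-to-stochastic normalization.

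For (2) the same two constructions apply with exponential state bounds. The 1pfa-to-1qfa simulation multiplies the state count by a constant and enlarges the garbage alphabet polynomially, so a $2^{p(n)}$-state family maps into a $2^{O(p(n))}$-state one; the 1qfa-to-1pfa simulation incurs only a polynomial blow-up (quadratic from vectorization, polynomial from the normalization), and $\mathrm{poly}(2^{p(n)})=2^{O(p(n))}$. Since both bounds remain of the form $2^{q(n)}$ for a polynomial $q$, we obtain $2^{\onep}\subseteq 2^{\oneq^{(+)}}$ and $2^{\oneq^{(+)}}\subseteq 2^{\onep}$, that is, $2^{\oneq^{(+)}}=2^{\onep}$.
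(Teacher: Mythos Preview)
Your proposal is correct, and the hard direction $\oneq^{(+)}\subseteq\onep$ takes a genuinely different route from the paper's.

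The paper first reduces to real amplitudes and then works path-by-path: it writes the acceptance probability as $\sum_{z}(f_{+}(x,z)-f_{-}(x,z))^{2}$, where $f_{\pm}$ collect the positive and negative amplitude contributions of accepting paths ending in surface configuration $z$, expands this square into the two bilinear pieces $p^{(+)}_{n}(x)$ and $p^{(-)}_{n}(x)$, and then builds an explicit 1pfa that generates \emph{pairs} of classical computation paths $(y_{1},y_{2})$ in parallel, checks that they end in the same $z$, and accepts with the appropriate product of magnitudes. Rigidity enters there as the reason the two paths can be run in lockstep on a single one-way pass. Your argument instead exploits rigidity structurally: because every branch writes exactly one garbage symbol per step, tracing out the garbage yields a genuine CPTP channel on the $n$-dimensional state register, so the whole computation is an $O(n^{2})$-dimensional real-linear recursion on the vectorized density operator; you then hand this generalized automaton to Turakainen's theorem.

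Your approach is more modular and avoids both the real-amplitude reduction and the explicit pair simulation; the paper's approach is more self-contained in that it builds the 1pfa by hand rather than appealing to a black-box conversion theorem. Both yield a polynomial blow-up and therefore transfer to~(2) as you describe. One small remark: you write ``after a routine adjustment of the strict/non-strict boundary''; this is indeed routine here because the paper's definition of $\onep$ allows arbitrary real transition probabilities, so one can bias the cutpoint by an arbitrarily small positive amount without leaving the model --- just make sure you invoke that feature when you write it out.
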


\begin{proof}
(1)
The proof of Lemma \ref{oneq-topics}(2) has actually proven that $\onep\subseteq \oneq^{(+)}$.

Our proof of $\oneq^{(+)} \subseteq \onep$ essentially follows an argument  used in \cite{TYL10} together with \cite{Yam03}.
In \cite[Lemma 8.1]{TYL10}, for any linear-time one-tape QTM with rational amplitudes, its acceptance probability on each input $x$ is calculated in terms of two acceptance probabilities of liner-time one-tape PTMs.
In \cite[Theorem 7.1]{Yam03}, a language of the form $\{x\mid p_{acc,M_1}(x)> p_{acc,M_2}(x)\}$ for two polynomial-time QTMs $M_1$ and $M_2$ belongs to $\pp$. It is also possible to restrict all transition amplitudes of any QTM on real numbers.
This restriction simplifies our construction of the desired 1pfa.
A major deviation from \cite{TYL10}, however, is that we allow 1qfa's to take arbitrary real transition amplitudes in the real interval $[0,1]$.

Let us consider an arbitrary  family $\LL=\{(L_n^{(+)},L_n^{(-)})\}_{n\in\nat}$ of promise decision problems in $\oneq^{(+)}$ and also a nonuniform family $\{M_n\}_{n\in\nat}$ of  $r(n)$-state 1qfa's with rigid garbage tapes that recognizes $\LL$ \emph{in exactly $|x|+2$ steps}  with bounded-error probability, where $r$ is a suitable polynomial.
Let $\alpha_n$ denote the smallest absolute value of any transition amplitude used by $M_n$.

In a natural way, we express each computation path of $M_{n}$ on input $x$ as strings of length $|x|+2$ over an appropriate alphabet, say, $\Theta$. To such a computation path $y$, we assign the product of all transition amplitudes taken by $M_{n}$ along the path $y$. This product is briefly called the \emph{amplitude associated with $y$}.
Let us denote by $amp(z,y,z)$ the amplitudes associated with a computation path $y$ of $M_{n}$ on $x$ ending with a final surface configuration $z$.

Firstly, we wish to define two functions. The first function $f_{+}(x,z)$ is set to be $\sum^{(+)}_{y} |amp(x,y,z)|$, where $y$ ranges over all accepting computation paths of $M_{n}$ on $x$ ending with $z$  for which $amp(x,y,z)$ is a positive real number. We also define $f_{-}(x,z) =\sum^{(-)}_{y} |amp(x,y,z)|$ similarly except that we collect $amp(x,y,z)$'s  having negative values. The acceptance probability $p_{n,acc}(x)$ of $M_{n}$ on $x$ is therefore expressed as  $\sum_{z}(f_{+}(x,z) - f_{-}(x,z))^2$.
To compute this value $p_{n,acc}(x)$, we need to compute two additional values $p_{n}^{(+)}(x) = \frac{1}{2} \sum_{z} (f_{+}^2(x,z) + f_{-}^2(x,z))$ and $p_{n}^{(-)}(x) = \sum_{z}f_{+}(x,z)f_{-}(x,z)$.
Since $M_{n}$ makes bounded-error probability, for an appropriate constant $\varepsilon\in[0,1/2)$, it follows that (i) $p_{n,acc}(x) \geq 1-\varepsilon$ iff $p^{(+)}_{n}(x) > p^{(-)}_{n}(x)$ and (ii) $p_{n,acc}(x) \leq \varepsilon$ iff $p^{(+)}_{n}(x) <  p^{(-)}_{n}(x)$.

Secondly, we want to produce in a probabilistic manner the values  $p^{(+)}_{n}(x)$ and $p^{(-)}_{n}(x)$ multiplied by $\zeta_{n,x}$,
where $\zeta_{n,x} = \alpha_0^{2(|x|+2)}\cdot |\Theta|^{-2(|x|+2)}$.
Since the computation paths of $M_n$ on $x$ may interfere with one another, we need to ignore such an interference and classically generate these computation paths with certain probabilities.
Concerning $p^{(-)}_{n}(x)$, we want to produce $\zeta_{n,x}\sum_{y_1}\sum_{y_2} |amp(x,y_1,z)||amp(x,y_2,z)|$, provided that $y_1$ makes $amp(x,y_1,z)$ positive and $y_2$ makes $amp(x,y_2,z)$ negative.
For this purpose, we first generate all pairs $(y_1,y_2)$ in $\Theta^{|x|+2}\times \Theta^{|x|+2}$ at random, simulate $M_n$ along two computation paths $y_1$ and $y_2$ simultaneously, check that the same final configuration is reached, check that $y_1$ and $y_2$ satisfy the aforementioned condition, and finally accept with probability $\zeta_{n,x} |amp(x,y_1,z)||amp(x,y_2,z)|$. Similarly in essence, we can also estimate $p^{(+)}_{n}(x)$.
The simultaneous simulation of the two computation paths $y_1$ and $y_2$ is possible because the garbage-tape head of $M_n$ always
moves to the right.

In a way similar to  \cite[Theorem 7.1]{Yam03},
we can construct an $n^{O(1)}$-state unbounded-error 1pfa whose acceptance probability on the input $x$ equals $\zeta_{n,x} p^{(+)}_{n,acc}(x)$ and also another 1pfa whose acceptance probability is  $\zeta_{n,x} p^{(-)}_{n,acc}(x)$. Finally, we combine those two 2pfa's to a new 1pfa $D_n$ so that (i$'$) $p^{(+)}_{n}(x) > p^{(-)}_{n}(x)$ iff $D_n$ accepts $x$ with probability more than $1/2$ and (ii$'$) $p^{(+)}_{n}(x) < p^{(-)}_{n}(x)$ iff $D_n$ accepts $x$ with probability less than $1/2$. Hence, $D_n$ solves $(L_n^{(+)},L_n^{(-)})$ with unbounded-error probability.
This concludes that $\LL$ belongs to $\onep$.

(2) An argument similar to (1) leads to the collapse $2^{\oneq^{(+)}} = 2^{\onep}$.
\end{proof}

\section{Two-Way Quantum Finite Automata Families}\label{sec:two-way-QFA}

We have already introduced $\onebq$ and $\oneq$ using 1qfa's in Section \ref{sec:one-way-QFA}.
Furthermore, let us define $\twobq$ to be the collection of all nonuniform families $\LL= \{(L_n^{(+)},L_n^{(-)})\}_{n\in\nat}$ of promise decision problems such that, for each family $\LL$, there exist a nonuniform family $\{M_n\}_{n\in\nat}$ of 2qfa's, two polynomials $p$ and $r$, and an error bound $\varepsilon\in[0,1/2)$ satisfying the following: for each index $n\in\nat$, (1)  $M_n$ makes error probability at most $\varepsilon$ on all inputs in $\Sigma_n=L_n^{(+)}\cup L_n^{(-)}$,
(2) $M_n$ uses at most $p(n)$ inner states, and (3) $M_n$'s garbage alphabet has size at most $r(n)$.
We further define $\twoq$ similarly to $\twobq$ using unbounded-error probability instead of bounded-error probability.

In a similar fashion, we define $\twobp$ in terms of bounded-error 2pfa's having polynomially many states.
The unbounded-error analogue of $\twobp$ is denoted by $\twop$.\footnote{In \cite{Kap09}, the polynomial-time $\twobp$ was considered under the name of $\mathrm{2P}_2$ and the polynomial-time $\twop$ was studied under the name of $\mathrm{2P}$ but they are restricted to so-called ``regular'' language families. Here, we demand neither the polynomial time-bound nor the regular family requirement.}


Notice that, for any suffix $\AAA\in\{\mathrm{D},\mathrm{N}, \mathrm{BP},\mathrm{BQ}\}$, we obtain an inclusion $1\AAA \subseteq 2\AAA$ since ``1-way'' machines can be thought as a special case of ``2-way'' machines.
Unlike the one-way case, there are few known separations among 2-way nonuniform state complexity classes, except that
$\oned\subsetneqq \twod\subseteq \twon \subsetneqq 2^{\oned}$ \cite{Kap09}.
Even for the basic classes $\twod$ and $\twon$, unfortunately, we do not know that $\twon\neq \co\twon$ as well as $\onen\nsubseteq \twod$.
For the probabilistic counterpart, nonetheless, we can draw the following  relationships from an early work of Dwork and Stockmeyer \cite{DS90}.

\begin{lemma}\label{Dwork-Stockmeyer}
(1) $\twobp \subseteq 2^{\oned}$. (2) $\twobp \nsubseteq\twon$.
\end{lemma}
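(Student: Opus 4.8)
The plan is to derive both parts from the structural theory of bounded-error two-way probabilistic finite automata developed by Dwork and Stockmeyer \cite{DS90}. The backbone of part (1) is the fact that the isolated-cutpoint (bounded-error) condition forces a 2pfa to behave, across any cut of the tape, like a finite object: the crossing behaviour at a boundary is captured by an $s\times s$ substochastic transfer matrix (sending an entry inner state to a distribution over exit inner states, together with a partial acceptance weight), and the gap $\frac12-\varepsilon$ allows one to round these matrices to a fixed resolution depending only on $s$ and $\varepsilon$, never on the input length. Two prefixes inducing the same rounded transfer behaviour are Myhill--Nerode indistinguishable, so the recognized language is regular and, quantitatively, is accepted by a one-way DFA with at most $2^{s^{O(1)}}$ states.

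For part (1) I would take an arbitrary family $\LL\in\twobp$, witnessed by $p(n)$-state bounded-error 2pfa's $\{M_n\}_{n\in\nat}$ with a common error bound $\varepsilon\in[0,1/2)$, and apply the above to each $M_n$ individually. This converts $M_n$ into a 1dfa $N_n$ with $2^{p(n)^{O(1)}}=2^{\poly(n)}$ states agreeing with $M_n$ on every valid input, so $N_n$ solves $(L_n^{(+)},L_n^{(-)})$. Absorbing the exponent uniformly over $n$ then exhibits $\LL\in 2^{\oned}$. The one delicate point is that the rounding resolution, and hence the state bound, must remain single-exponential in $\poly(n)$; this is exactly what the isolation gap buys, and it is the reason the simulation lands in $2^{\oned}$ rather than in a larger class.

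For part (2) the plan is to produce a single family $\LL\in\twobp\setminus\twon$. The membership $\LL\in\twobp$ would reuse the succinctness phenomenon of \cite{DS90}: because $\twobp$ carries no time bound, a bounded-error 2pfa may spend expected exponential time to randomly sample and fingerprint one of super-polynomially many implicit sub-instances, so that a promise family whose valid inputs have super-polynomial length is still solved with only $\poly(n)$ inner states, the promise gap ensuring that a randomly chosen constraint witnesses a negative instance with constant probability. The exclusion $\LL\notin\twon$ is a state lower bound: fixing the boundary at which the sampled constraints are separated, I would run the crossing-sequence argument (equivalently, a nondeterministic communication bound), showing that separating the promised positive instances from the negative ones forces super-polynomially many distinct accepting crossing sequences, hence a super-polynomial number of states on every 2nfa, which is more than the $\poly(n)$ states allowed in $\twon$. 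The mere existence of families beyond $\twon$ is already known, since $\twon\subsetneqq 2^{\oned}$ \cite{Kap09}; what part (2) needs in addition is one such hard family that simultaneously lies in $\twobp$.

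The hard part will be choosing the family so that both requirements hold at once. Naive comparison or equality problems fail: a two-way head can test equality of counts, or of adjacent blocks, exactly and succinctly by a Chinese-remainder sweep over the first $O(n)$ primes, so such families already sit in $\twod\subseteq\twon$ and cannot separate. The construction must therefore encode many constraints whose joint verification is genuinely non-local, forcing a 2nfa to transport super-polynomial information across some cut, while a 2pfa sidesteps this by inspecting only a single randomly chosen constraint. Pinning down the Dwork--Stockmeyer language carrying these two opposing properties, and verifying the crossing-sequence lower bound for it in the promise-family setting, is the principal obstacle; the rest is routine bookkeeping once that family is in hand.
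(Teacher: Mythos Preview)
For part (1) your approach coincides with the paper's: both invoke the Dwork--Stockmeyer regularity/state-complexity bound (the paper cites it as \cite[Theorem~6.1]{DS90}), and your description of the transfer-matrix rounding is exactly the mechanism behind that theorem. Nothing to add there.

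For part (2) the paper takes a far more direct route than you envision. It does not build an elaborate family with ``many non-local constraints'' and then argue a crossing-sequence lower bound from scratch; it simply lifts \cite[Theorem~6.2]{DS90} wholesale. The separating family is the \emph{unary} one
\[
L_n^{(+)}=\{0^{e(n)}\},\qquad L_n^{(-)}=\{0^i\mid 0\le i\le e(n)^2,\ i\neq e(n)\},\qquad e(n)=2^n,
\]
and both sides of the separation are taken verbatim from that theorem: a bounded-error 2pfa with $O(\log^2 e(n)/\log\log e(n))=O(n^2/\log n)$ states and expected $O(|x|^2)$ time solves $(L_n^{(+)},L_n^{(-)})$, while any 2nfa solving it needs at least $e(n)=2^n$ states. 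So the ``principal obstacle'' you flag---pinning down the language and verifying the 2nfa lower bound---is dispatched in the paper by a single citation; no random sampling of constraints, no communication-complexity argument across a cut, and no search for a family with interleaved sub-instances is carried out.

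Your instinct that simple length-comparison problems collapse into $\twod$ via a Chinese-remainder sweep is reasonable, but note that it is aimed at problems where two quantities \emph{written on the tape} are compared; the paper's example instead asks whether the input length equals a value $2^n$ that is hardwired into $M_n$, and the lower bound it quotes is for that unary setting. In any case, the genuine gap in your proposal is not a wrong idea but an unexecuted one: you leave the separating family unspecified, whereas the paper closes the argument immediately by pointing to an existing construction in \cite{DS90}.
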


\begin{proof}
(1) From \cite[Theorem 6.1]{DS90}, we can conclude that any $n$-state 2pfa with error probability $\varepsilon\in[0,1/2)$ running in expected $O(n^k)$ time for any fixed constant $k>0$ can be precisely simulated by
an appropriately chosen 1dfa of $2^{O(n^2\log{n})}$ states. This implies the first inclusion.

(2) Let us define an example family $\LL=\{(L_n^{(+)},L_n^{(-)})\}_{n\in\nat}$ of promise decision problems as follows: for any index $n\in\nat$, let $L_n^{(+)}=\{0^{e(n)}\}$ and $L_n^{(-)} = \{0^i\mid 0\leq i\leq e(n)^2, i\neq e(n)\}$, where $e(n)=2^n$. It was proven in \cite[Theorem 6.2]{DS90} that (1) for any $n\in\nat$, $(L_n^{(+)},L_n^{(-)})$ is solved with bounded-error probability by a certain 2pfa of $O(\log^2{e(n)}/\log\log{e(n)})$ inner states in expected $O(|x|^2)$ time, where $x$ is a ``symbolic'' input, and (2) any 2nfa requires at least $e(n)$ inner states to solve $(L_n^{(+)},L_n^{(-)})$. These results (1)--(2) help us conclude that $\twobp \nsubseteq\twon$.
\end{proof}


Unlike 1qfa's, it is possible to reduce the garbage alphabets of 2qfa's to  constant size. This reduction in size makes it easier to estimate the computational complexity of 2qfa's.

\begin{lemma}\label{garbage-alphabet}
Given any $n$-state 2qfa $M$ with a garbage alphabet of size $k$, there exists another $O(nk\log{k})$-state 2qfa $N$ with a constant-size garbage alphabet that can  simulate $M$ with the same error probability.
\end{lemma}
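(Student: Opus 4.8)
The plan is to replace each garbage symbol, drawn from the size-$k$ alphabet $\Xi$, by its binary code of length $\ell=\ceilings{\log{k}}$ over the fixed two-letter garbage alphabet $\{0,1\}$, and to let $N$ spell this code out one bit at a time. Concretely, I would fix an injective encoding $c:\Xi\to\{0,1\}^{\ell}$ and split each computation step of $M$ into a \emph{macro-step} of $N$ consisting of one genuine quantum transition followed by $\ell$ deterministic bit-writing moves. To bookkeep this spelling-out, I would enlarge the inner-state set of $M$ by a \emph{register} holding the garbage symbol currently being written together with a \emph{counter} in $[\ell]$ recording the next bit position; thus the states of $N$ are the ``ready'' states $Q$ together with the ``writing'' states $Q\times\Xi\times[\ell]$, for a total of $\order{nk\log{k}}$ inner states, as required.

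In more detail, a macro-step starting from a ready state $q$ scanning $\sigma$ first applies $M$'s transition, mapping $\ket{q}$ to $\sum_{(p,d,\xi)}\delta(q,\sigma\mid p,d,\xi)\ket{p,\xi,1}$, moving the input-tape head in direction $d$ but writing nothing yet (when $\xi=\lambda$ no code has to be produced and $N$ passes directly to the ready state $p$). Each subsequent move, from a writing state $\ket{p,\xi,j}$, keeps the input-tape head stationary, appends the single bit $c(\xi)_{j}$ to the garbage tape (advancing only the garbage head), and increments the counter, until after the $\ell$-th bit $N$ re-enters the ready state $p$. All writing states are declared non-halting and only the ready copies of $Q_{acc}$ and $Q_{rej}$ are halting, so the projective measurement fires exactly at macro-step boundaries, precisely where $M$ is measured. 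Since the $\ell$ extra moves per step neither accept nor reject and the bookkeeping is norm-preserving, the cumulative acceptance and rejection probabilities of $N$ equal those of $M$ (the only effect is a harmless $(\ell+1)$-fold rescaling of the step index and of the expected running time), so $N$ recognizes the same promise problem with the same error bound.

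The main obstacle is to verify that $N$ is \emph{well-formed}, that is, that its time-evolution operator is unitary, not merely that it reproduces $M$'s probabilities. The single quantum move of each macro-step is only a relabelling of $M$'s transition that stores the garbage symbol $\xi$ in the register instead of on the tape; its orthonormality conditions therefore coincide with those forced by the well-formedness of $U_{\delta}$ on configurations sharing the same garbage content, and hold verbatim. The intermediate bit-writing moves, which send $\ket{p,\xi,j}$ to $\ket{p,\xi,j+1}$ for $j<\ell$, cause no trouble either: because $\xi$ is retained in the register, the bit $c(\xi)_{j}$ appended at each move is a function of the current inner state, so the move injectively permutes the configuration basis and is an isometry. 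The one delicate point is the final move, which clears the register and returns $\ket{p,\xi,\ell}$ to the ready state $p$ while emitting $c(\xi)_{\ell}$: on the configurations that $N$ actually produces, the garbage tape then ends in the whole block $c(\xi)$, so $\xi$ is recoverable and the move is injective there. I would conclude by checking that the evolution is thus an isometry on the span of reachable configurations and extending it to a unitary transition function in the standard way, giving a well-formed 2qfa $N$ over a constant-size garbage alphabet with $\order{nk\log{k}}$ states and the same error probability as $M$.
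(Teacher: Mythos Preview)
Your proposal is correct and follows essentially the same route as the paper: encode each garbage symbol by a length-$\lceil\log k\rceil$ binary string and spell it out one bit per step, using inner states in $Q\times\Xi\times[\lceil\log k\rceil]$ to record which symbol and which bit position are pending. Your treatment of well-formedness at the register-clearing step is in fact more careful than the paper's own argument, which merely asserts that the simulation is correct without discussing unitarity.
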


\begin{proof}
Let $M =(Q,\Sigma,\{\cent,\dollar\},\Xi,\delta,q_0,Q_{acc},Q_{rej})$ denote any   given $n$-state 2qfa. For simplicity, let $r=\ceilings{\log{k}}$ and assume that $\Xi = \{0,1\}^r$ for a certain $r\in\nat^{+}$.
For each symbol $\xi$ in $\Xi$, we express it as $\xi_1\xi_2\cdots \xi_r$ using $r$ binary symbols. We define a new 2qfa $N$ with $Q'= Q\times \Xi\times [r]$ and $\Xi'=\{0,1\}$ as follows.
The initial state of $N$ is set to be $(q_0,\lambda,0)$.
Assume that $N$ is in inner state $(q,\lambda,0)$.
If $M$ applies a quantum transition of the form $\delta(q,\sigma|p,d,\xi)=\alpha$ with $\alpha\neq0$, then $N$ first writes $\xi_1$ on a garbage tape, enters the next inner state $(p,\xi,1)$ with amplitude $\alpha$, and moves its tape head in direction $d$. For each index $i\in[r-1]$, with no tape-head move, $N$ enters $(p,\xi,i)$ from $(p,\xi,i-1)$ by dumping $\xi_i$ onto the garbage tape. Finally, $N$ enters $(p,\lambda,0)$  from $(p,\xi,r)$ with no tape-head move. It is not difficult to show that $N$ correctly simulates $M$ with the same error probability.
\end{proof}

As a consequence of Lemma \ref{garbage-alphabet}, in the rest of this exposition, we consider only 2qfa's that have garbage alphabets of \emph{constant size} unless otherwise stated.


To complete Figure \ref{fig:relationship}, we further need to show the following relationships.

\begin{lemma}\label{first-claim}
(1) $\twod\subsetneqq \twobp$.
(2) $\twobp \subseteq \twobq$.
(3) $\twobq\subseteq \twoq$.
(4) $\twon\subseteq \twop$.
(5) $\twop \subseteq \twoq$.
\end{lemma}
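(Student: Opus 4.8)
The plan is to dispatch the five parts by combining two trivial observations with three simulation arguments that lift the corresponding one-way constructions to the two-way setting. Parts (1) and (3) require almost no work. For (1), the inclusion $\twod\subseteq\twobp$ is immediate because a 2dfa is a 2pfa that never randomizes, so it makes zero error using the same number of inner states; the separation then follows purely formally from Lemma \ref{Dwork-Stockmeyer}(2): since $\twod\subseteq\twon$, an inclusion $\twobp\subseteq\twod$ would yield $\twobp\subseteq\twon$, contradicting $\twobp\nsubseteq\twon$, whence $\twod\subsetneqq\twobp$. For (3), the bounded-error acceptance criterion is strictly stronger than the unbounded-error one (error at most $\varepsilon<1/2$ forces acceptance probability at least $1-\varepsilon>1/2$ on positive instances and rejection probability at least $1-\varepsilon>1/2$ on negative instances), so any bounded-error 2qfa already witnesses membership in $\twoq$, giving $\twobq\subseteq\twoq$ verbatim.

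For the inclusions (2) $\twobp\subseteq\twobq$ and (5) $\twop\subseteq\twoq$ I would reuse a single construction, namely the two-way analogue of the garbage-tape quantizations in Lemma \ref{simple-relation}(2) and Lemma \ref{oneq-topics}(1). Given an $n$-state 2pfa, I replace each probabilistic transition that enters state $p_j$ with probability $\pi_j$ by a quantum transition of amplitude $\sqrt{\pi_j}$ that simultaneously dumps onto the flexible garbage tape a symbol encoding the source pair $(q,\sigma)$ together with the index $j$ of the chosen branch. Recording this information makes the transition reversible, so the induced time-evolution operator is injective on reachable surface configurations and hence extends to a unitary $U_\delta^{(x)}$; the resulting machine is therefore a well-formed 2qfa of $n^{O(1)}$ inner states over a garbage alphabet of size polynomial in $n$ (which Lemma \ref{garbage-alphabet} may then shrink to constant size). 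Because distinct branches carry distinct garbage, the corresponding target configurations are orthogonal basis vectors, so no interference survives and the probability of reaching any given accepting or rejecting configuration, measured step by step by $\Pi_{acc}\oplus\Pi_{rej}$, equals the classical probability of the 2pfa. Consequently the quantum machine reproduces the 2pfa's acceptance and rejection probabilities exactly, which preserves the bounded-error guarantee for (2) and the unbounded-error guarantee for (5).

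The substantive part is (4) $\twon\subseteq\twop$, the two-way counterpart of Lemma \ref{unbounded-error-case}(3). As in the one-way case I would simulate a given 2nfa $M_n$ by choosing each nondeterministic move uniformly at random, accepting outright whenever $M_n$ enters an accepting state, and diverting every rejecting halt to a fair coin that accepts or rejects with probability $1/2$ each. The genuine obstacle here, absent from the one-way case where a 1nfa always halts within $|x|+2$ steps, is non-termination: a 2nfa may loop forever along many branches, and such branches contribute nothing to either the acceptance or the rejection probability, so the naive simulation can leave the acceptance probability far below $1/2$ even when an accepting path exists. To repair this I would insert a small fixed ``give-up'' probability at every step: before each simulated move the machine, with a constant probability, abandons the branch and jumps to the same fair coin. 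This forces termination with probability one along every branch while leaving a strictly positive probability $a$ of following an actual accepting path to completion whenever one exists. Since every non-accepting outcome becomes a $1/2$--$1/2$ coin, the acceptance probability equals $a+\tfrac12(1-a)=\tfrac12+\tfrac{a}{2}$, which exceeds $1/2$ exactly when $a>0$, i.e.\ exactly when $M_n$ has an accepting path; when $M_n$ has none we have $a=0$ and the acceptance probability is exactly $1/2$, so the rejection probability is $1/2\geq 1/2$. This meets the unbounded-error criteria on all valid inputs using only $M_n$'s states together with $O(1)$ auxiliary states, establishing $\twon\subseteq\twop$.
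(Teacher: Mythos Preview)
Your proposal is correct and follows essentially the same route as the paper: parts (1)--(3) and (5) match the paper's proof almost verbatim (trivial inclusion plus Lemma~\ref{Dwork-Stockmeyer}(2) for the separation in (1); the garbage-tape quantization of Lemma~\ref{simple-relation}(2)/Lemma~\ref{oneq-topics}(1) for (2) and (5); the definitional observation for (3)).

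For part (4) the paper simply points to Lemma~\ref{unbounded-error-case}(3) and says the two-way case is ``similar.'' You go further and actually handle the point where the one-way and two-way cases differ: a 2nfa may have non-halting branches, so the naive ``randomize and flip a fair coin on rejection'' construction need not push the acceptance/rejection probabilities to the required side of $1/2$. Your fix---inject a constant give-up probability per step that routes to a fair coin---forces almost-sure termination, makes every non-accepting outcome a $1/2$--$1/2$ coin, and yields acceptance probability $\tfrac12+\tfrac{a}{2}$ with $a>0$ exactly on positive instances. This uses only $O(1)$ additional inner states, so the state complexity stays polynomial. The paper's proof sketch omits this termination issue; your argument fills that gap while remaining within the same overall strategy.
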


\begin{proof}
(1) Since every 2dfa can be viewed as a special case of bounded-error 2pfa, we obtain $\twod\subseteq \twobp$. The separation $\twod\neq \twobp$ comes from $\twobp\nsubseteq\twon$ of Lemma \ref{Dwork-Stockmeyer}(2).

(2) Any 2pfa can be simulated by an appropriately designed 2qfa, which  discards all information on the previously taken inner states to a garbage tape. The inclusion $\twobp\subseteq \twobq$ follows immediately.

(3) This is trivial by the difference between the error bound criteria of $\twobq$ and $\twoq$.

(4) In a way similar to Lemma \ref{unbounded-error-case}(3), we can show that $\twon\subseteq \twoq$.

(5) We have already proven $\onep\subseteq \oneq$ in Lemma \ref{simple-relation}(4).
A similar argument proves $\twop\subseteq \twoq$ as well.
\end{proof}

For two-way head moves, the behaviors of two-way finite automata vary significantly depending on their machine types.
For deterministic and nondeterministic computations, as discussed in Section \ref{sec:main-contribution}, the length of ``accepting'' computation paths of 2dfa's and 2nfa's are always upper-bounded \emph{linearly} in input size.
For probabilistic computation, nonetheless, this is not always true. As  demonstrated by Freivalds \cite{Fre81}, bounded-error 2pfa's with no runtime bound in general have more computational power than bounded-error expected-polynomial-time 2pfa's.
Since we are interested in two-way machines running \emph{in expected polynomial time}, we introduce the nonuniform state complexity class $\pt\twobq$ from $\twobq$ by further
requiring each family $\{M_n\}_{n\in\nat}$ of underlying 2qfa's $M_n$ to run in expected $p(n,|x|)$ time on all inputs $x$ for an appropriate polynomial $p$ (depending only on the choice of the family $\{M_n\}_{n\in\nat}$). Similarly, we obtain $\pt\twoq$, $\pt\twobp$, and $\pt\twop$ from $\twoq$, $\twobp$, and $\twop$, respectively.
A similar statement to Lemma \ref{first-claim} also holds for $\pt\twobp$, $\pt\twop$, $\pt\twobq$, and $\pt\twoq$.

\section{Advised QTMs and Nonuniform Families of Quantum Finite Automata}\label{sec:advised-QTM}

The nonuniform state complexity classes $\twod/\poly$ and $\twon/\poly$ are composed of only families of promise decision problems having \emph{polynomial ceilings}. In a similar way, we define two more classes $\twobp/\poly$ and $\twobq/\poly$ using the notion of polynomial ceiling.
From these complexity classes, we can obtain $\pt\twobp/\poly$ and $\pt\twobq/\poly$ by forcing underlying machines to halt in expected polynomial time.

Concerning the acceptance/rejection criteria for families of promise decision problems, quantum and probabilistic computations are quite different from deterministic and nondeterministic computations. Recall from Section \ref{sec:parameter-promise}  the notion of \emph{$\dl$-good} collection $\FF$ of families of promise decision problems.
It is not difficult to verify that $\twod$ and $\twon$ are $\dl$-good because 2dfa's and 2nfa's can be easily forced to either accept or reject on all possible inputs. On the contrary, $\twobq$ and $\twobp$ may not be $\dl$-good. For this very reason, we introduce the special notation $\twobq^{\dagger}$ (resp., $\twobp^{\dagger}$) to denote the largest subclass of $\twobq$ (resp., $\twobp$) that is \emph{closed downward under $\dl$-good extensions}
(i.e., if $\LL_1$ is an $\dl$-good extension of $\LL_2$ and $\LL_2\in\twobq^{\dagger}$, then $\LL_1$ belongs to $\twobq^{\dagger}$).

The sole purpose of this section is to verify Theorem \ref{general-theorem}, from which Corollary \ref{NL-equivalence} follows immediately.
For the proof of the theorem, as a supportive statement inspired by \cite{Yam18}, we first establish a close connection between nonuniform state complexity classes and parameterized complexity classes.
The actual proof of Theorem \ref{general-theorem} will be given in Section \ref{sec:proof-of-theorem}.

\subsection{The Roles of Advice and the Honesty Condition}\label{sec:role-of-advice}

When we discuss quantum and probabilistic finite automata with two-way head moves, the runtime bounds of 2qfa's and 2pfa's are quite essential because, as Freivalds \cite{Fre81} demonstrated, expected-polynomial-time bounded-error 2pfa's are in general less powerful than, say,  expected-exponential-time bounded-error 2pfa's.

Let us begin with the precise statement of Proposition \ref{characterize-PBQL/poly}, in which we give a nice characterization of parameterized decision problems solvable by  advised QTMs in expected polynomial time using logarithmic space in terms of expected-polynomial-time 2qfa's having polynomially many inner  states.
Firstly, we recall that $\phsp$ consists of all parameterized decision problems with polynomially-honest size parameters.

\begin{proposition}\label{characterize-PBQL/poly}
Let $(\AAA,\BB)\in\{(\mathrm{D},\dl), (\mathrm{N},\nl), (\mathrm{BP},\bpl), (\mathrm{BQ},\bql)\}$ and let $\LL=\{(L_n^{(+)},L_n^{(-)})\}_{n\in\nat}$ denote any family of promise decision problems.
\renewcommand{\labelitemi}{$\circ$}
\begin{enumerate}\vs{-2}
  \setlength{\topsep}{-2mm}%
  \setlength{\itemsep}{1mm}%
  \setlength{\parskip}{0cm}%

\item Given each parameterized decision problem $(L,m)$, if $\LL$ is  induced from $(L,m)$, then it follows that $(L,m)\in\para\pt\BB/\poly\cap \phsp$ iff $\LL\in \pt2\AAA/\poly$.

\item If $\LL$ is $\dl$-good and $(K,m)$ is the parameterized decision problem induced from $\LL$, then it follows that $(K,m)\in\para\pt\BB/\poly\cap \phsp$ iff $\LL\in \pt2\AAA/\poly$.
\end{enumerate}
\end{proposition}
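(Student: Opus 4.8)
The plan is to establish both equivalences by mutual simulation, organized around three correspondences: the size parameter $m$ plays the role of the family index (so that $n=m(x)$), logarithmic work space matches polynomially many inner states (a state among $p(n)$ states is encoded in $O(\log p(n))=O(\log m(x))$ bits, and conversely a work tape of $O(\log m(x))$ cells has only $\mathrm{poly}(m(x))$ possible contents), and polynomial-size advice matches the transition table of each automaton in the family. I would first observe that the intersection with $\phsp$ is automatic: for $\LL$ induced from $(L,m)$, the polynomial ceiling $\Sigma_n\subseteq\Sigma^{\leq r(n)}$ demanded by $\pt2\AAA/\poly$ says precisely that $m(x)=n$ forces $|x|\leq r(n)$, i.e. $|x|\leq r(m(x))$, which is polynomial honesty. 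Hence the honesty and ceiling conditions match term by term, and it suffices to prove the core equivalence between $\para\pt\BB/\poly$ and $\pt2\AAA$. Since the inducing relation of part 2 is the inverse of the induced relation of part 1 (up to the $\dl$-good padding that renders $m'$ log-space computable), I would prove part 1 in full and then obtain part 2 by applying the same construction to the induced $(K,m')$, invoking $\dl$-goodness to compute $m'$ and to handle inputs outside $L_{all}$.

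For the direction $\pt2\AAA/\poly\Rightarrow\para\pt\BB/\poly$, given a family $\{M_n\}_{n\in\nat}$ of expected-polynomial-time two-way automata with $p(n)$ states and a polynomial ceiling solving $\LL$, I would build an advised machine $N$ whose advice at parameter value $n$ is the transition table $\pair{T_n}$ of $M_n$ (in the quantum case, the string encoding the approximating circuits of Section \ref{sec:transition-table}). On input $x$ the advice tape holds $\pair{T_{m(x)}}$, so $N$ simply simulates $M_{m(x)}$, keeping the current inner state of $M_{m(x)}$ on its work tape in $O(\log p(m(x)))=O(\log m(x))$ cells while its own two-way input head tracks the input head of $M_{m(x)}$; each transition is read from the advice, or, quantumly, applied by sweeping the encoded circuit once. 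The advice length is polynomial in $m(x)$, hence polynomial in $|x|m(x)$, and both the expected runtime and the acceptance probability of $M_{m(x)}$ are preserved, so $(L,m)\in\para\pt\BB/\poly$; membership in $\phsp$ follows from the polynomial ceiling.

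For the reverse direction $\para\pt\BB/\poly\Rightarrow\pt2\AAA/\poly$, given an advised machine $N$ with advice $h$ running in expected polynomial time and $O(\log m(x))$ space, I would, for each $n$, define $M_n$ whose inner states are the surface configurations of $N$ restricted to inputs with $m(x)=n$ and to the fixed advice $h(n)$. The number of such surface configurations is at most $|Q|$ times the number of work-head positions times $|\Gamma|^{O(\log n)}$ times the number of advice-head positions, which is $\mathrm{poly}(n)$; here the advice-head range is bounded because $|h(n)|$ is polynomial in $|x|m(x)$ and, by honesty, $|x|\leq\mathrm{poly}(m(x))=\mathrm{poly}(n)$. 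The advice string $h(n)$ is hardwired into the transition function of $M_n$, so that the advice symbol at each advice-head position is known at construction time; the two-way input head of $M_n$ simulates the input head of $N$, and in the quantum case the amplitudes are generated exactly as $N$ would, by applying the unitary dictated by the hardwired advice. The polynomial ceiling on $\LL$ follows from honesty, giving $\LL\in\pt2\AAA/\poly$.

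The main obstacle is the quantum case. In the deterministic, nondeterministic, and probabilistic cases transitions form finite tables that fit directly into advice and into inner states, but a 2qfa has arbitrary complex amplitudes, so both encodings must route through the transition-table machinery of Section \ref{sec:transition-table}: each unitary $V_{q,\sigma}$ is replaced by an approximating circuit over $\{CNOT,H,T_{\pi/8}\}$ whose size is controlled by Lemma \ref{Solovay-Kitaev}. The delicate step is error control: I must fix the inaccuracy at $\alpha 2^{-(r_1+r_2+2)}/cp(n,|x|)$ exactly as in Section \ref{sec:transition-table}, so that the error accumulated over the expected $\mathrm{poly}(n,|x|)$ steps stays below $\alpha$ and the bounded-error gap, hence membership in $\bql$ (respectively $\twobq$), is preserved. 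A secondary subtlety common to all four cases is that advice is measured in $|x|m(x)$ rather than in $m(x)$ alone; this is exactly why the honesty condition (equivalently, the polynomial ceiling) is indispensable, both for bounding the advice length polynomially in $n$ so it can be hardwired per index and for fitting the computation of $m$ and the simulation within the $O(\log m(x))$ space budget.
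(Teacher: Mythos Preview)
Your overall plan matches the paper's: hardwire the advice into a polynomial-size state set in one direction, and encode the 2qfa's inner state on a log-space work tape while passing its transition table as advice in the other, with the $\phsp$ condition handled exactly as you say (this is the paper's Lemma~\ref{poly-honest-bound}). The quantum case via transition tables and Solovay--Kitaev approximation is also what the paper does.

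However, you have conflated the advice index with the parameter index, and this is a genuine gap. In the paper's model, advice is a function of the \emph{input length}: the advised QTM receives $h(|x|)$, not $h(m(x))$, so two inputs with the same parameter value $n=m(x)$ but different lengths receive different advice strings. Hence your phrase ``the advice tape holds $\pair{T_{m(x)}}$'' is not well-typed, and in the converse direction the ``fixed advice $h(n)$'' you hardwire into $M_n$ does not exist---there is one advice string per length $l$, not per parameter value $n$. The paper resolves this by working with a \emph{two-index} family $\{N_{n,l}\}_{n,l\in\nat}$ (Lemmas~\ref{QTM-to-QFA}--\ref{QFA-to-QTM}). In the QTM$\to$2qfa direction, $N_{n,l}$ hardwires the specific string $h(l)$; the final single-index machine $N'_n$ first reads the input length $l$ (using $n^{O(1)}$ extra states, since honesty bounds $l\le n^c+c$) and then dispatches to $N_{n,l}$. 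In the 2qfa$\to$QTM direction, the advice $h(l)$ is the \emph{concatenation} of all encoded tables $\pair{T_{n,l}}$ for $n$ ranging up to $\bar n_l=\max\{m(z):|z|=l\}$; the QTM first computes $n=m(x)$ (possible because $m$ is a log-space size parameter, hence polynomially bounded) and then selects the correct block. Without this two-index scaffolding your construction does not go through as written.
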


Hereafter, we intend to verify Proposition \ref{characterize-PBQL/poly} only for the case of $\AAA=\mathrm{BQ}$ and $\BB=\bql$ since the other cases can be proven in a similar way with only necessary modification. For readability, the proof of Proposition \ref{characterize-PBQL/poly}(1) is split into two lemmas, Lemmas \ref{QTM-to-QFA} and \ref{QFA-to-QTM}.
Lemma \ref{QTM-to-QFA}, which is in fact more general than what the proposition actually needs, states that we can simulate an advised QTM by a certain nonuniform family of 2qfa's of appropriate state complexity.

\begin{lemma}\label{QTM-to-QFA}
Let $m$ be a size parameter and
let $M$ be an advised QTM running with an advice function $h$. Let $r$ be a function on $\nat$ satisfying $|h(n)|\leq r(n)$ for all $n\in\nat$.
For two functions $p$ and $\ell$, there always exists a nonuniform family $\{N_{n,l}\}_{n,l\in\nat}$ of 2qfa's with $O(r(l))\cdot 2^{O(\ell(n,l))}$ inner states such that, for any input $x$, if $M$ accepts (resp., rejects) $(x,h(|x|))$ with bounded-error probability in expected $p(m(x),|x|)$ time using at most $\ell(m(x),|x|)$ space, then $N_{m(x),|x|}$ accepts (resp., rejects) $x$ with bounded-error probability in expected $p(m(x),|x|)$ time.
\end{lemma}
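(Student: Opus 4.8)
The plan is to have each $N_{n,l}$ simulate $M$ on inputs $x$ with $m(x)=n$ and $|x|=l$, where the advice string $h(l)$ is \emph{hardwired} into the transitions of $N_{n,l}$ (legitimate because the family is nonuniform). Since a $2$qfa possesses neither a work tape nor an advice tape, the central idea is to \emph{fold} everything except the input head and the garbage tape into the finite control: the inner state of $N_{n,l}$ will record the triple consisting of $M$'s control state $q$, the entire contents $y$ of $M$'s work tape together with its work-head position $t_2$, and the position $t_3$ of $M$'s advice head. The input tape is shared (both models use a circular tape with arithmetic modulo $|x|+2$), and the garbage tape of $N_{n,l}$ directly mirrors that of $M$.

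First I would count states. By hypothesis $M$ visits at most $\ell(n,l)$ work-tape cells along every branch of the relevant computation, so the number of admissible pairs $(y,t_2)$ is at most $|\Gamma|^{\ell(n,l)}\cdot(\ell(n,l)+2)=2^{O(\ell(n,l))}$, while $t_3$ ranges over an interval of length $|h(l)|+2\leq r(l)+2=O(r(l))$. Multiplying by the constant $|Q|$ yields exactly the claimed bound $O(r(l))\cdot 2^{O(\ell(n,l))}$. Concretely I would take the inner-state set of $N_{n,l}$ to be a suitable subset of $Q\times\Gamma^{\leq\ell(n,l)}\times[0,\ell(n,l)+1]_{\integer}\times[0,r(l)+1]_{\integer}$, with initial state encoding the all-blank work tape and the heads at their starting cells.

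Next I would translate transitions. For every transition $\delta(q,x_{(t_1)},y_{(t_2)},w_{(t_3)},p,\tau,d_1,d_2,d_3,\xi)=\zeta_{\delta}$ of $M$ I install the single step: while $N_{n,l}$ is in the inner state encoding $(q,y,t_2,t_3)$ and scans input symbol $x_{(t_1)}$, it reads the work symbol $y_{(t_2)}$ off its own inner state and the advice symbol $w_{(t_3)}$ (which it knows to be the $t_3$th symbol of the hardwired $h(l)$), applies amplitude $\zeta_{\delta}$, moves the input head in direction $d_1$, emits the garbage symbol $\xi$ exactly as $M$ does, and passes to the inner state encoding $(p,\tilde{y}_{\xi},t_2+d_2,t_3+d_3)$, with $\tilde{y}_{\xi}$ as in the definition of $M$'s time-evolution operator. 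The map sending each configuration $(q,x,t_1,y,t_2,h(l),t_3,z)$ of $M$ to the configuration of $N_{n,l}$ with inner state $(q,y,t_2,t_3)$, input head at $t_1$, and garbage $z$ is then a bijection from the subspace of $\HH_{conf}$ on which the advice content equals $h(l)$ and the work tape stays $\ell(n,l)$-bounded onto the full configuration space of $N_{n,l}$. Under this bijection the time-evolution operator of $N_{n,l}$ is precisely the relabelling of $U^{(x,h(l))}_{\delta}$ restricted to that subspace; because the advice content is never altered (the advice tape is read-only) and the work head never leaves its $\ell(n,l)$ cells, the subspace is invariant, so the restriction of the unitary $U^{(x,h(l))}_{\delta}$ is again unitary and $N_{n,l}$ is well-formed.

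Finally, since the bijection is amplitude-preserving and intertwines the two operators, the superpositions produced step by step by $N_{n,l}$ on $x$ correspond exactly to those produced by $M$ on $(x,h(l))$; the projective measurements $\Pi_{acc}$, $\Pi_{rej}$ match under the correspondence, so the acceptance and rejection probabilities (hence the bounded-error guarantee) agree at every step, and one step of $N_{n,l}$ realises one step of $M$, preserving the expected runtime $p(m(x),|x|)$. The step demanding the most care is the well-formedness argument: I must ensure that the finite-dimensional subspace into which the work and advice tapes are folded is genuinely invariant under $M$'s dynamics, which is exactly where the read-only advice and the $\ell(n,l)$-space bound enter; without the space bound the would-be inner-state set is infinite and the restriction of $U^{(x,h(l))}_{\delta}$ need not be unitary.
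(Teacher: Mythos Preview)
Your proof is correct and follows essentially the same approach as the paper: fold the work-tape contents, work-head position, and advice-head position into the finite control while hardwiring the fixed advice string $h(l)$ into the transitions, then verify the state count and the step-by-step correspondence. If anything, your treatment is slightly more careful than the paper's on the well-formedness point, explicitly arguing that the relevant subspace is invariant under $U_{\delta}^{(x,h(l))}$ so that the restricted operator remains unitary.
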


\begin{proof}
Let $p$, $\ell$, and $r$ denote respectively a time-bounding function, a space-bounding function, and an advice-bounding function. Let $m$ be a given size parameter over an input alphabet $\Sigma$ and let $h$ be an advice function from $\nat$ to $\Theta^*$ for a certain advice alphabet $\Theta$ satisfying  $|h(n)|\leq r(n)$ for all $n\in\nat$.
We take an arbitrary advised QTM  $M=(Q,\Sigma,\{\cent,\dollar\},\Gamma, \Theta, \Xi, \delta,q_0,Q_{acc},Q_{rej})$ equipped with
a work alphabet $\Gamma$ and a garbage alphabet $\Xi$.
Assume that, with the use of $h$, $M$ runs in expected $p(m(x),|x|)$ time using at most $\ell(m(x),|x|)$ space on any input $x\in\Sigma^*$. In what follows, we fix $x$ and set $a=h(|x|)$.
Note that, since $p(m(x),|x|)$ is an expected runtime bound of $M$ on $x$, for an appropriately chosen absolute constant $c\geq1$, it is enough to consider only the first $cp(m(x),|x|)$ steps of any computation path of $M$ on $x$ without losing $M$'s bounded-error probability criteria. Remember  that the contents of the input tape and the advice tape do not change during any computation.
A \emph{surface configuration} of $M$ on $x$ is of the form $(q,j,k,y,t,z)$, which indicates that $M$ with garbage-tape content $z$ is in inner state $q$, scanning the $j$th cell of an input tape, the $k$th cell of a work tape containing $y$, and the $t$th cell of an advice tape.

A basic idea is to maintain the partial information on the current surface configuration of $M$ on $x$ in the form of an inner state of the desired 2qfa.
The desired nonuniform family $\NN = \{N_{n,l}\}_{n,l\in\nat}$ of 2qfa's thus has the following form. Given an instance $x$, let $n=m(x)$ and set $N_{n,|x|} = (Q',\Sigma,\{\cent,\dollar\}, \Xi, \delta', q'_{0},Q'_{qcc},Q'_{rej})$. The set $Q'$ consists of all inner states of the form $(q,k,y,t)$, where $q\in Q$, $k\in[0,\ell(n,|x|)+1]_{\integer}$, $y\in\Gamma^*$, and $t\in[0,|a|+1]_{\integer}$.
It then follows that $|Q'|\leq |Q|(\ell(n,|x|)+1)(r(|x|)+2)|\Gamma|^{\ell(n,|x|)+1}$ since $a$ is fixed and $|a|\leq r(|x|)$.  In general, the state complexity of $N_{n,l}$ is bounded by $O(r(l))\cdot 2^{O(\ell(n,l))}$.

Let us define $\delta'$ by way of describing how the 2qfa $N_{n,|x|}$ operates. Notice that $\delta'$ is a map from $Q'\times \check{\Sigma}\times Q'\times D\times \Xi_{\lambda}$ to $\complex$. Since $M$ is supposed to use at most $\ell(m(x),|x|)$ work tape cells, we use a series $y_0y_1y_2\cdots y_{\ell(m(x),|x|)}$ of $\ell(n,|x|)+1$ tape symbols to express the content $y$ of $M$'s work tape starting with $y_0=\cent$ (left endmarker). Let $B$ indicate a blank of a tape cell. For each index $n\in\nat$, we write $\Sigma_n$ for the set $\{x\in\Sigma^*\mid m(x)=n\}$.
The 2qfa $N_{n,|x|}$ behaves as follows. Let $y=y_0y_1\cdots y'_{\ell(m(x),|x|)}$ and $y'=y'_0y'_1\cdots y'_{\ell(m(x),|x|)}$.

\begin{quote}
On input $x\in\Sigma_n$, $N_{n,|x|}$ starts with the initial inner state $(q_0,0,\cent B^{\ell(n,|x|)},0)$. Inductively, assume that $M$ changes its surface configuration from $(q,j,k,y,t,z)$ to $(p,j',k',y',t',z\xi)$ by moving its input-tape head in direction $d_1$, its work-tape head in direction $d_2$, and its advice-tape head in direction $d_3$, and also by writing $y'_k$ over $y_k$ and changing  $z$ to $z\xi$ on a garbage tape. Corresponding to this step, $N_{n,|x|}$ moves its own input-tape head similarly, changes its inner state from $(q,k,y,t)$ to $(p,k',y',t')$, and modifies $z$ to $z\xi$.
More formally, we define a (quantum) transition function $\delta'$ of $N_{n,|x|}$ by setting $\delta'((q,k,y,t),x_j | (p,k',y',t'),d_1,\xi) = \delta(q,x_j,y_k,a_t | p,y'_k,d_1,d_2,d_3,\xi)$, where $k'=k+d_2$, $t'=t+d_3$, and  $rest_k(y)=rest_k(y')$.
\end{quote}

Notice that the above construction of $N_{n,|x|}$ heavily depends on the fixed advice string $a$.

By the definition of $N_{n,|x|}$,  on input $x$, $N_{n,|x|}$ correctly simulates all steps of $M$ on $(x,a)$ one by one and reaches the same outcome of $M$.
Therefore, for any fixed constant $\varepsilon\in[0,1/2)$, if $M$ accepts (resp., rejects) $(x,h(|x|))$ within $p(n,|x|)$ time using at most $\ell(n,|x|)$ space with error probability at most $\varepsilon$, then $N_{n,|x|}$ also accepts (resp., rejects) $x$ within $p(n,|x|)$ time with error probability at most $\varepsilon$.

Since $N_{n,|x|}$ simulates $M$ with $h$ precisely, if $M$'s configurations quantumly interfere with one another, then their associated $N_{n,|x|}$'s configurations interfere as well. Hence, $N_{n,|x|}$ is indeed the desired  2qfa and the lemma then follows instantly.
\end{proof}

The converse of Lemma \ref{QTM-to-QFA}, Lemma \ref{QFA-to-QTM}, will be shown by giving a simulation of a nonuniform family of 2qfa's by appropriately chosen advised QTMs. In the proof of Lemma \ref{QFA-to-QTM}, nonetheless, in order to make a quantum interference take place correctly, we further need to avoid any time discrepancy caused by the different simulation speed along different computation paths and
to adjust the timing of reaching the same configurations. For this purpose, we need to control the movement of an input-tape head of a 2qfa.
We say that an automaton $M$ \emph{sweeps} a tape or $M$ is a \emph{sweeping} automaton if $M$'s tape head always moves rightward along a circular input tapes in one direction from $\cent$ to $\dollar$, and further to $\cent$.

Another crucial point of the proof of Lemma \ref{QFA-to-QTM} is how to encode all quantum transitions of a 2qfa into a single advice string for the purpose of performing such transitions easily from the information retrieved from this advice string.

\begin{lemma}\label{QFA-to-QTM}
Let $b$, $r$, and $p$ be functions and let $m$ be a size parameter satisfying $m(x)\leq b(|x|)$ for all $x$. Let $\{N_{n,l}\}_{n,l\in\nat}$ denote a nonuniform family of $r(n,l)$-state 2qfa's over a fixed input alphabet $\Sigma$ and a constant-size garbage alphabet.
There exist an advised QTM $M$ and an $O(b(|x|)r(m(x),|x|)^9\log^2p(m(x),|x|))$-bounded advice function $h$ such that, for any input $x$, if $N_{m(x),|x|}$ accepts (resp., rejects) $x$ with bounded-error probability in expected $p(m(x),|x|)$ time, then $M$ accepts (resp., rejects) $(x,h(|x|))$ with bounded-error probability within expected $O(b(|x|)\tilde{r}(m(x),|x|)\tilde{p}(m(x),|x|))$ time using $O(\log{r(m(x),|x|)})$ space, where $\tilde{r}(n,l)= r(n,l)^9\log{r(n,l)}$ and $\tilde{p}(n,l)= p(n,l)\log^2p(n,l)$.
\end{lemma}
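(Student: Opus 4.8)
The plan is to reverse the simulation of Lemma \ref{QTM-to-QFA}: an advised QTM $M$ reads from its advice tape an encoding of the transition tables of the relevant 2qfa's and simulates $N_{m(x),|x|}$ step by step, using its work tape only to hold the binary-encoded current inner state of the automaton. The advice string $h(l)$ is built by concatenating, with separators and unary length-markers $1^{n}$, the encoded transition tables $\langle T_{n,l}\rangle$ of $N_{n,l}$ for every $n$ with $0\le n\le b(l)$. Using the transition-table encoding of Section \ref{sec:transition-table} (with a garbage alphabet of constant size), and realising every transition $V_{q,\sigma}$ by a padded universal circuit $C^{(n,l)}_{q,\sigma}$ of a common gate count $s=O(r(n,l)^{8}\log^{2}p(n,l))$, each block $\langle T_{n,l}\rangle$ has length $O(r(n,l)^{9}\log^{2}p(n,l))$; concatenating $O(b(l))$ of them (the block for $n=m(x)$ dominating) yields total advice length $O(b(|x|)\,r(m(x),|x|)^{9}\log^{2}p(m(x),|x|))$, matching the claimed bound.

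Next I would describe the simulation proper. On input $x$, the QTM determines $m(x)$ (in the setting of Proposition \ref{characterize-PBQL/poly} the parameter $m$ is a log-space size parameter, so $m(x)$ is obtained within $O(\log b(|x|))\subseteq O(\log r(m(x),|x|))$ space) and uses its advice-tape head to locate the block $\langle T_{m(x),|x|}\rangle$. It maintains the surface configuration of $N_{m(x),|x|}$ by storing the inner state in $O(\log r(m(x),|x|))$ work cells and by keeping its own input-tape head synchronized with the automaton's head, so the head position is never stored explicitly. To execute one step from $(q,\sigma)$, it reads $\sigma$ through the input head, applies the gates of $C_{q,\sigma}$ to an ancilla register to produce $\sum_{(p,d,\xi)}\delta(q,\sigma|p,d,\xi)\,\ket{p,d,\xi}$, and then coherently transfers this into the configuration: the inner-state register becomes $p$, the input head moves by $d$, and $\xi$ is appended to the garbage tape. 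Well-formedness of $N_{m(x),|x|}$ guarantees that this realises the unitary $U_{\delta}$-step up to circuit approximation. The QTM's accepting and rejecting states are set to those of $N_{m(x),|x|}$, and its per-step $\Pi_{non}U_{\delta}$ reproduces the automaton's measurement because all intermediate gate-application states are non-halting.

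The heart of the argument, and the step I expect to be the main obstacle, is making quantum interference survive the simulation. Two computation paths of $N_{m(x),|x|}$ that reach the same configuration after equally many automaton steps must, in $M$, reach the same QTM configuration after equally many QTM steps, or their amplitudes fail to add. The danger is that distinct transitions are realised by circuits of different lengths and sit at different positions in the advice, so a naive simulation would run at different speeds along different branches. I would remove this discrepancy by the sweeping discipline: pad every $C_{q,\sigma}$ to the common gate count $s$, and have $M$ sweep the whole block $\langle T_{m(x),|x|}\rangle$ (returning its advice head to a fixed position) on every simulated step, applying identities except at the gate positions of the currently relevant $C_{q,\sigma}$. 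Each automaton step then costs a fixed number of QTM steps independent of the branch, so interfering configurations stay time-aligned and the overall evolution remains unitary.

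Finally I would collect the resource and accuracy estimates. One sweep of a block, together with the cost of locating it inside the advice, is $O(b(|x|)\,r(m(x),|x|)^{9}\log^{2}p(m(x),|x|))$ QTM steps; over the expected $O(p(m(x),|x|))$ simulated steps this gives expected time $O(b(|x|)\,r^{9}p\,\log^{2}p)\subseteq O(b(|x|)\,\tilde r(m(x),|x|)\,\tilde p(m(x),|x|))$ with work space $O(\log r(m(x),|x|))$, as required. For correctness, the circuits $C_{q,\sigma}$ approximate the true amplitudes $V_{q,\sigma}$ within the inaccuracy fixed in Section \ref{sec:transition-table}, so by the very estimate used there the total error accumulated over all steps is at most $\alpha$; hence $M$ with $h$ keeps its error probability below $\frac{1}{2}$ whenever $N_{m(x),|x|}$ does, which completes the simulation and establishes the lemma.
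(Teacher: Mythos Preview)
Your proposal is correct and follows essentially the same route as the paper: the advice string $h(l)$ is the concatenation $1^{1}\#\langle T_{1,l}\rangle\#^{2}\cdots\#^{2}1^{\bar n_l}\#\langle T_{\bar n_l,l}\rangle$ of encoded transition tables for all $n\le b(l)$, the QTM keeps only the binary-encoded inner state of $N_{m(x),|x|}$ on its work tape, and each automaton step is simulated by sweeping the advice to locate and execute the appropriate circuit $C^{(n,l)}_{q,\sigma}$. Your explicit padding of all circuits to a common gate count $s$ so that every simulated step takes a fixed number of QTM steps is exactly the ``sweeping'' discipline the paper invokes (in the paragraph preceding the lemma) to keep interfering branches time-aligned; the paper's proof itself leaves this point implicit, so your write-up is in fact the more careful one on this issue.
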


\begin{proof}
Let $b$, $r$, $p$, $m$, $\Sigma$, and $\{N_{n,l}\}_{n,l\in\nat}$ be given as in the premise of the lemma.
For each index $n\in\nat$, we set $\Sigma_n =\{x\in\Sigma^*\mid m(x)=n\}$.
Take an arbitrary input $x\in\Sigma^*$ and let $n=m(x)$ for brevity.
Let $N_{n,|x|}$ have the form $(Q_n,\Sigma,\{\cent,\dollar\},\Xi,\delta_n, q_{n,0},Q_{n,qcc},Q_{n,rej})$. We want to define the desired advised QTM $M$,  together with an appropriate advice function $h$, which can  simulate  $N_{n,|x|}$ on all inputs $x$ in $\Sigma_n$
for any index $n\in\nat$.
For our convenience, we identify $Q_n$ with the set $\{0,1\}^{r_1(n,|x|)}$ and $\Xi_{\lambda}$ with $\{0,1\}^{r_2}$ for a certain function $r_1$ and a certain constant $r_2\in\nat^{+}$;  thus, we obtain $r_1(n,|x|) =\log|Q_n|$ ($=\log{r(n,|x|)}$) and $r_2=\log|\Xi_{\lambda}|$, where $\Xi_{\lambda} = \Xi\cup\{\lambda\}$.

The transition function $\delta_n$ of $N_{n,l}$ can be viewed as a \emph{transition table} $T_{n,l}$, as explained in Section \ref{sec:transition-table}, which is a matrix in which each $(q,\sigma)$-row contains a ``description'' of a quantum circuit $C^{(n,l)}_{q,\sigma}$ that takes $\qubit{\phi_0} = \qubit{0^{r_1(n,|x|)}}\qubit{00}\qubit{0^{r_2}}$ as an input and approximates to within $2^{-cp(n,l)}$ a quantum state $\sum_{(p,d,\xi)}  \delta_n(q,\sigma|p,d,\xi) \qubit{p,d,\xi}$.
Let us recall from Section \ref{sec:transition-table} the encoding scheme of such  a transition table into an appropriate binary string.
Using this encoding scheme, we encode $T_{n,l}$ into an advice string $\pair{T_{n,l}}$ of $O(|Q|^9|\Xi_{\lambda}|^8\log^2p(n,l))  \subseteq O(r(n,l)^9\log^2p(n,l))$ symbols, from which we can easily retrieve the necessary information on any transition $\delta_n(q,\sigma|p,d,\xi)=\alpha$.

For each number $l$ in $\nat$, we define the desired advice function $h$ as $h(l) = 1^{1}\#\pair{T_{1,l}} \#^2 1^{2}\#\pair{T_{2,l}} \#^2 \cdots \#^2 1^{\bar{n}_l}\#\pair{T_{\bar{n}_l,l}}$, where $\bar{n}_l = \max\{m(z)\mid z\in\Sigma^*, |z|=l\}$. Since $\bar{n}_l \leq b(l)$, the length of $h(l)$ is bounded by $O(b(l)r(n,l)^9\log^2p(n,l))$.


The following procedure briefly describes the behavior of $M$ with $h$.

\begin{quote}
On input $(x,h(|x|))$, $M$ first computes $n=m(x)$, writes down the string $\cent q_{n,0} \dollar$ on its work tape by sweeping the tape, and tries to simulate $N_{n,|x|}$ on $x$ as follows. Assume that currently $N_{n,|x|}$ is in inner state $q$ scanning the $j$th input tape cell, changes $q$ to $p$, moves its input-tape head in direction $d$, and dumps $\xi$ onto a garbage tape with amplitude $\delta_n(q,x_j | p,d,\xi)$. By sweeping the work tape from $\cent$ to $\dollar$, $M$ searches for the $(q,x_j)$-row, and reads its entry $\pair{C^{(n,|x|)}_{q,x_j}}$ symbol by symbol by performing each quantum gate constituting $C^{(n,|x|)}_{q,x_j}$ to generate a quantum state $\sum_{(p',d',\xi')} \delta(q,x_j | p',d',\xi') \qubit{p',d',\xi'}$.
If an entry $(p,d,\xi)$ appears in this quantum state, then $M$ overwrites  the work tape by $p$, moves the input-tape head in direction $d$, and writes down $\xi$ onto the garbage tape.
To simulate one step of $N_{n,|x|}$, $M$ needs to sweep the advice tape once and sweep the work tape at most $|h(|x|)|$ times. Note that $M$ keeps its input-tape head at a standstill during each sweeping process.
\end{quote}

A quick analysis of the above description shows that  the space usage of $M$'s work tape is bounded by $O(r_1(m(x),|x|)) \subseteq O(\log{r(m(x),|x|)})$ and $M$'s  expected runtime is bounded by  $O(p(m(x),|x|)r_1(m(x),|x|) |h(|x|)|) \subseteq O(b(|x|)r(m(x),|x|)\log{r(m(x),|x|)} p(m(x),|x|) \log^2{p(m(x),|x|)})$.
Moreover, it is not difficult to verify that, if $N_{n,|x|}$ accepts (resp., rejects) $x$ with bounded-error probability, then $M$ accepts (resp., rejects) $(x,h(|x|))$ with bounded-error probability.
\end{proof}

The following lemma relates to the polynomial honesty condition of size parameters. This condition will become quite essential in the proof of Proposition \ref{characterize-PBQL/poly}.

\begin{lemma}\label{poly-honest-bound}
Let $(L,m)$ and $(K,m')$ be two parameterized decision problems and let   $\LL=\{(L_n^{(+)},L_n^{(-)})\}_{n\in\nat}$ be any family of promise decision problems. Let $\CC$ denote an arbitrary nonempty nonuniform state complexity class.
\renewcommand{\labelitemi}{$\circ$}
\begin{enumerate}\vs{-2}
  \setlength{\topsep}{-2mm}%
  \setlength{\itemsep}{1mm}%
  \setlength{\parskip}{0cm}%

\item If $\LL$ is induced from $(L,m)$ and $\LL\in\CC/\poly$, then $m$ is polynomially honest.

\item If $(K,m')$ is induced from $\LL$ and $\LL\in\CC/\poly$, then $m'$ is polynomially honest.
\end{enumerate}
\end{lemma}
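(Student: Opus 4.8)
The plan is to extract from the hypothesis $\LL\in\CC/\poly$ the one ingredient that matters here, namely the \emph{polynomial ceiling}: by the very definition of $\CC/\poly$ there is a polynomial $p$ with $L_n^{(+)}\cup L_n^{(-)}\subseteq \Sigma^{\leq p(n)}$ for every $n\in\nat$. Both parts then reduce to translating this length bound on the promise sets into the honesty inequality $|x|\leq q(m(x))$ for a suitable polynomial $q$; no deeper property of $\CC$ is needed, which is precisely why the statement holds for an arbitrary nonempty $\CC$.

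For part (1), recall that since $\LL$ is induced from $(L,m)$ we have $L_n^{(+)}\cup L_n^{(-)}=\Sigma_n=\{x\in\Sigma^*\mid m(x)=n\}$. I would fix an arbitrary $x\in\Sigma^*$ and set $n=m(x)$, so that $x\in\Sigma_n$. The polynomial ceiling then yields $|x|\leq p(n)=p(m(x))$ directly, and taking $q=p$ shows that $m$ is polynomially honest.

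For part (2), the size parameter $m'$ induced from $\LL$ is defined by cases, so the argument splits accordingly. When $w=1^n\#x$ with $x\in L_n^{(+)}\cup L_n^{(-)}$, we have $m'(w)=n$ and $|w|=n+1+|x|$; applying the ceiling $|x|\leq p(n)$ bounds $|w|\leq n+1+p(n)$. In the remaining case $m'(w)=|w|$, so the desired inequality is immediate. Setting $q(k)=k+1+p(k)$ (which in particular satisfies $q(k)\geq k$) covers both cases simultaneously, giving $|w|\leq q(m'(w))$ for all $w\in\Sigma_{\#}^*$.

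The argument is essentially definitional, so I do not expect a genuine obstacle; the only point requiring a little care is part (2), where one must check that a single polynomial $q$ works across the two branches of the definition of $m'$—in particular that the branch $m'(w)=|w|$ is harmlessly absorbed by insisting $q(k)\geq k$, rather than needing a separate polynomial.
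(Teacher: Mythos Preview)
Your proposal is correct and follows essentially the same route as the paper: extract the polynomial ceiling from the definition of $\CC/\poly$, then in part (1) use $L_n^{(+)}\cup L_n^{(-)}=\Sigma_n$ to get $|x|\leq p(m(x))$ directly, and in part (2) split on the two branches of the definition of $m'$, taking $q(k)=k+1+p(k)$ (the paper uses the identical polynomial $s(n)=n+r(n)+1$). Your remark that no deeper property of $\CC$ is needed beyond the polynomial ceiling is exactly the point.
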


\begin{proof}
(1) Assume that $\LL=\{(L_n^{(+)},L_n^{(-)})\}_{n\in\nat}$ is induced from $(L,m)$ and that $\LL$ is in $\CC/\poly$. Since $\LL$ has a polynomial ceiling, there exists a polynomial $r$ such that, for any index $n\in\nat$ and for any input $x$ in $L_n^{(+)}\cup L_n^{(-)}$, $|x|\leq r(n)$ holds. In the case of $x\in L_n^{(+)}\cup L_n^{(-)}$, since $m(x)=n$ by the definition of $(L_n^{(+)},L_n^{(-)})$, we obtain $|x|\leq r(m(x))$. Thus, $m$ is polynomially honest.

(2) Similarly to (1), from $\LL\in\CC/\poly$, we obtain a polynomial $r$ satisfying $|x|\leq r(n)$ for all $n\in\nat$ and all $x\in L_n^{(+)}\cup L_n^{(-)}$. Recall from Section \ref{sec:parameter-promise} that $K_n^{(+)}=\{1^n\# x\mid x\in L_n^{(+)}\}$ and $K_n^{(-)} = \{1^n\# x\mid x\in L_n^{(-)}\}\cup \{z\# x\mid z\in\Sigma^n-\{1^n\},x\in \Sigma_{\#}^*\} \cup \Sigma^n$, where $\Sigma$ is a fixed input alphabet.
We define $s(n)=n+r(n)+1$ for each number $n\in\nat$.
Consider the case where $w$ is of the form $1^n\# x$ for $x\in L_n^{(+)}\cup L_n^{(-)}$.
It then follows that $|w|=|1^n\# x|\leq n+|x|+1\leq n+r(n)+1\leq s(m'(w))$ because of $m'(w)=n$. In any other case, we obtain $|w|= m'(w)$ by the definition of $m'$, and thus $|w|\leq s(m'(w))$ follows.  Therefore, $m'$ is polynomially honest.
\end{proof}

With the help of the supporting lemmas, Lemmas \ref{QTM-to-QFA}--\ref{poly-honest-bound}, we provide the proof of Proposition \ref{characterize-PBQL/poly}(1).

\begin{proofof}{Proposition \ref{characterize-PBQL/poly}(1)}
Hereafter, we intend to present the desired proof only for the case where $\AAA=\mathrm{BQ}$ and $\BB=\bql$.
Let $(L,m)$ be any parameterized decision problem and let $\LL=\{(L_n^{(+)},L_n^{(-)})\}_{n\in\nat}$ be the family of promise decision problems induced from $(L,m)$.
For convenience, we abbreviate $L_n^{(+)}\cup L_n^{(-)}$ as $\Sigma_n$ for each index $n\in\nat$. By the definition of $\LL$, $\Sigma_n$ equals the set  $\{x\in\Sigma^*\mid m(x)=n\}$.

(If--part)
Assuming that $\LL\in \pt\twobq/\poly$, we take polynomials $p$ and $r$, and a family $\NN=\{N_{n}\}_{n\in\nat}$ of 2qfa's having $r(n)$ inner states such that, for every index $n\in\nat$,  $N_n$ solves $(L_n^{(+)},L_n^{(-)})$ with bounded-error probability in expected $p(n,|x|)$ time on all inputs $x$ in $\Sigma_n$. We define a new 2qfa $N'_{n,l}$ simply as $N_n$ restricted to all inputs $x$ of length $l$.
Take a polynomial $b$ satisfying $m(x)\leq b(|x|)$ for all $x$.
Since $m$ is polynomially honest by Lemma \ref{poly-honest-bound}(1), there is a polynomial $q$ for which $|x|\leq q(|x|)$ holds for all $x$.
therefore, $(L,m)$ belongs to $\phsp$.

From Lemma \ref{QFA-to-QTM}, it follows that there exist an  $O(b(|x|)r(m(x))^9\log{p(m(x),|x|)})$-bounded advice function $h$ and an advised QTM $M$ such that, for an arbitrary string $x$,  $M$ on the input $(x,h(|x|))$ simulates $N'_{m(x),|x|}$ on $x$ using space $O(\log{r(m(x))}) \subseteq O(\log{m(x)})$.
In other words, $M(x,h(|x|))$ computes $L_{m(x)}(x)$ with bounded-error probability. Since $L_{m(x)}=L_n$ for all $x\in \Sigma_n$, $M(x,h(|x|))$ actually computes $L(x)$ for all strings $x$ in $\bigcup_{n\in\nat}\Sigma_n$.  Since $\Sigma^* = \bigcup_{n\in\nat}\Sigma_n$, $(L,m)$ belongs to $\para\bql/\poly$.

(Only if--part) Assume that $(L,m)$ is in $\para\pt\bql/\poly \cap \phsp$.
Since $(L,m)\in\phsp$, $m$ is polynomially honest. There exist an advised QTM $M$ and an  $r(|x|)$-bounded advice function $h$ satisfying the following: for  all inputs $x$,  $M(x,h(|x|))$ computes $L(x)$ with bounded-error probability in expected $p(m(x),|x|)$ time using  $O(\log{m(x)})$ work space. Since $m$ is polynomially honest, we take a constant $c>0$ satisfying $|x|\leq m(x)^c+c$ for all $x$. With this value $c$, given any index $n\in\nat$, we write $\tilde{\Sigma}_n$ for the set $\{x\in\Sigma^*\mid |x|\leq n^c+c\}$.

Lemma \ref{QTM-to-QFA} then ensures the existence of a nonuniform family $\NN= \{N_{n,l}\}_{n\in\nat}$ of 2qfa's satisfying the following condition: for any given string $x\in\tilde{\Sigma}_n$, $N_{m(x),|x|}$ on $x$ simulates $M$ on $(x,h(|x|))$ and the state complexity of $N_{m(x),|x|}$ is at most $O(r(|x|))\cdot 2^{O(\log{m(x)})}$, which is upper-bounded by $|x|^{O(1)}\cdot 2^{O(\log{m(x)})} \subseteq m(x)^{O(1)}$
because of $|x|\leq m(x)^c+c$.
For an arbitrary index $n\in\nat$, a new machine $N'_n$ is defined to behave as follows.
Given any input $x$, we first set $l=|x|$ if $x\in\tilde{\Sigma}_n$ and $l=n$ otherwise. This can be done by sweeping an input tape once using  additional $n^{O(1)}$ inner states.
We then run $N_{n,l}$ on $x$.
In particular, when $m(x)=n$, $N'_{n}$ can simulate $N_{n,|x|}$ on every input $x$ in $\tilde{\Sigma}_n$.
Since $\NN$ simulates $M$ with $h$, $N'_{n}$ correctly computes $L(x)$ on all inputs $x$ in $\tilde{\Sigma}_n$ in expected $O(p(n,|x|))$ time.
Since $N'_{n}$ uses $n^{O(1)}$ states, we conclude that $\LL$ falls in $\pt\twobq/\poly$.
\end{proofof}

Next, we wish to prove Proposition \ref{characterize-PBQL/poly}(2), which deals with the parameterized decision problem $(K,m)$ induced from each $\dl$-good  family $\LL=\{(L_n^{(+)},L_n^{(-)})\}_{n\in\nat}$ of promise decision problems.
Notice that, by the definition of $K$, for any $n$ and $x$,  $1^n\# x\in K$ (resp., $1^n\# x\in\overline{K}$) iff $x\in L_n^{(+)}$ (resp., $x\in L_n^{(-)}$).
For the proof of the proposition, we need two more lemmas, Lemmas \ref{para-promise-QTM}--\ref{para-promise-2qfa}, which look analogous to Lemmas \ref{QTM-to-QFA}--\ref{QFA-to-QTM}.

\begin{lemma}\label{para-promise-QTM}
Let $M$ be an advised QTM, let $r$ and $\ell$ be two  functions, and let $h$ be an advice function with $|h(n)|\leq r(n)$ for all $n\in\nat$. There exists a nonuniform family $\{N_{n,l}\}_{n,l\in\nat}$ of $O(r(l))\cdot 2^{O(\ell(n,l))}$-state 2qfa's such that, for any $n$ and $x$, if $M$ accepts (resp., rejects) $(1^{n}\# x,h(|x|))$ with bounded-error probability in expected  $p(n,|x|)$ time using at most $\ell(n,|x|)$ space, then $N_{n,|x|}$ accepts (resp., rejects) $x$ with bounded-error probability in expected $O(p(n,|x|))$ time.
\end{lemma}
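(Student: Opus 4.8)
The plan is to mimic the construction in the proof of Lemma \ref{QTM-to-QFA} almost verbatim, since the present statement differs from it only in that the advised QTM $M$ is fed the padded input $1^{n}\# x$ rather than $x$ itself. The crucial observation is that, because $n$ is the fixed first index of the family member $N_{n,l}$ that we are building, the prefix $1^{n}\#$ is a \emph{known constant} string; hence $N_{n,l}$ can carry out $M$'s input-head excursions over this prefix entirely inside its finite control, without ever needing the prefix to appear on the (circular) input tape, which holds only $x$.

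Concretely, I would first fix $x$, set $a=h(|x|)$, and recall that a surface configuration of $M$ on $1^{n}\# x$ has the form $(q,j,k,y,t,z)$, with $j$ the input-head position on $\cent 1^{n}\# x\dollar$. As in Lemma \ref{QTM-to-QFA}, the inner states of $N_{n,l}$ record the data $(q,k,y,t)$, and I would augment them with a two-valued \emph{mode flag} together with, in the prefix mode, a counter $j_{pre}\in[0,n+1]_{\integer}$ holding $M$'s head position within $\cent 1^{n}\#$. When $M$'s head lies in the $x$-block (positions $\geq n+2$), $N_{n,l}$ runs exactly like the machine of Lemma \ref{QTM-to-QFA}: its own input head directly tracks $M$'s head, shifted by the prefix length $n+1$, so that position $j$ on $M$'s tape corresponds to position $j-(n+1)$ on $N_{n,l}$'s tape. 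When $M$'s head lies in the prefix, $N_{n,l}$ parks its input head on $\cent$, keeps $j_{pre}$ in its state, and applies the transition $M$ would take while reading the known prefix symbol at position $j_{pre}$ (namely $\cent$, $1$, or $\#$). The two boundary crossings $\#\leftrightarrow x_{(1)}$ and the circular wrap $\cent\leftrightarrow\dollar$ are handled by switching modes; the circular tape structure makes the bookkeeping consistent, since moving left off $\cent$ lands on $\dollar$ on both tapes.

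For correctness, the point to stress is that each single step of $M$ is simulated by exactly one step of $N_{n,l}$, regardless of whether the head is in the prefix or the $x$-block. Consequently no timing discrepancy arises between distinct computation paths, and the map sending each surface configuration $(q,j,k,y,t,z)$ of $M$ to the corresponding (inner state, head position, garbage content) triple of $N_{n,l}$ is injective and faithful step-by-step. Exactly as in Lemma \ref{QTM-to-QFA}, this guarantees that whenever configurations of $M$ quantumly interfere, so do the associated configurations of $N_{n,l}$, which preserves unitarity, the bounded-error acceptance/rejection behaviour, and the expected runtime up to the claimed $O(p(n,|x|))$ factor.

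The main obstacle is the effect of the prefix simulation on both the state complexity and the interference analysis at the boundary. Tracking $j_{pre}$ costs an extra factor of $O(n)$, so the naive count becomes $O(n)\cdot O(r(l))\cdot 2^{O(\ell(n,l))}$; to match the asserted bound $O(r(l))\cdot 2^{O(\ell(n,l))}$ I would absorb this $O(n)$ into $2^{O(\ell(n,l))}$, which is legitimate in every regime where $\ell(n,l)=\Omega(\log n)$, in particular in the logarithmic-space setting in which the lemma is invoked, where $2^{O(\ell(n,l))}=n^{O(1)}$. The more delicate verification is that the mode-switches at $\#\leftrightarrow x_{(1)}$ and $\cent\leftrightarrow\dollar$ keep the configuration correspondence injective, so that amplitudes that ought to combine still land on the same $N_{n,l}$-configuration; this is precisely the reversibility/timing concern that must be checked, and it follows once one confirms that the two modes occupy disjoint sets of $N_{n,l}$-configurations and that the crossing transitions are the faithful images of $M$'s own head moves.
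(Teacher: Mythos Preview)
Your approach is essentially the same as the paper's: the paper's proof is a three-sentence sketch saying only that $N_{n,|x|}$ ``generates both $1^n\# x$ and $h(|x|)$ and then runs $M$ on $(1^n\# x,h(|x|))$,'' relying entirely on the construction of Lemma~\ref{QTM-to-QFA}, whereas you spell out concretely how the prefix $1^n\#$ is simulated inside the finite control via a position counter and mode flag. Your observation about the extra $O(n)$ factor in the state count is a genuine point that the paper glosses over; as you note, it is harmless in the intended logarithmic-space application since $2^{O(\ell(n,l))}=n^{O(1)}$ there, but the paper's stated bound $O(r(l))\cdot 2^{O(\ell(n,l))}$ tacitly relies on this absorption.
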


\begin{proof}
Let $r$, $h$, $p$, and $M$ be given as in the premise of the lemma. Let $\Sigma$ denote an input alphabet used by $M$.
Slightly different from the proof of Lemma  \ref{QTM-to-QFA}, we construct the desired family $\{N_{n,l}\}_{n,l\in\nat}$ of 2qfa's to work as follows.
Given an index $n\in\nat$ and an input string $x\in\Sigma^*$, $N_{n,|x|}$  generates both $1^n\# x$ and $h(|x|)$ and then runs $M$ on $(1^n\# x,h(|x|))$ to produce an outcome. This behavior of $N_{n,|x|}$ is possible because $n$ and $|x|$ are fixed for $N_{n,|x|}$ and we can store $h(|x|)$ in the form of inner states.
Clearly, $N_{n,|x|}$ correctly outputs $M(1^{n}\# x, h(|x|))$ on all inputs $x$ in expected $O(p(n,|x|))$ time since $M$  halts on $(1^{n}\# x,h(|x|))$ in expected $p(n,|x|)$ time.
\end{proof}

\begin{lemma}\label{para-promise-2qfa}
Let $r$ be any function and let $\{N_{n,l}\}_{n,l\in\nat}$ denote any  nonuniform family of 2qfa's with $r(n,l)$ inner states. There is an advised QTM $M$ and an $O(nr(n,|x|)^9\log^2{p(n,|x|)})$-bounded advice function $h$ such that, for any $x$, if $N_{n,|x|}$ accepts (resp., rejects) $x$ with bounded-error probability in expected  $p(n,|x|)$ time, then $M$ accepts (resp., rejects) $(1^{n}\# x,h(|x|))$ with bounded-error probability in expected $O(nr(n,|x|)^9 \tilde{p}(n,|x|))$ time using $O(\log{r(n,|x|)})$ space, where $\tilde{p}(n,l) = p(n,l)\log^2p(n,l)$.
\end{lemma}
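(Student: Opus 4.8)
The plan is to follow the proof of Lemma \ref{QFA-to-QTM} essentially line for line, the only structural change being that the parameter $n$ is now handed to the simulating machine \emph{directly}, through the unary prefix $1^{n}$ of the input $1^{n}\# x$, rather than being recovered as $m(x)$ from a size parameter. Consequently no bound of the form $m(x)\le b(|x|)$ is needed: the machine $M$ simply reads $n$ off the input tape. First I would fix, for each pair $(n,l)$, the transition function $\delta_{n}$ of $N_{n,l}$ and encode it as the transition table $\pair{T_{n,l}}$ of Section \ref{sec:transition-table}. By the length estimate established there, $|\pair{T_{n,l}}|\in O(r(n,l)^{9}\log^{2}p(n,l))$, and the table lets one regenerate, for each row $(q,\sigma)$, the approximating circuit $C^{(n,l)}_{q,\sigma}$ that produces $\sum_{(p,d,\xi)}\delta_{n}(q,\sigma|p,d,\xi)\qubit{p,d,\xi}$ to within inaccuracy $2^{-cp(n,l)}$.

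Next I would assemble the advice. For each length $l$, set $h(l)$ to be the concatenation $1^{1}\#\pair{T_{1,l}}\#^{2}1^{2}\#\pair{T_{2,l}}\#^{2}\cdots$, the unary markers $1^{n'}$ serving as addresses of the individual blocks. On input $(1^{n}\# x,h(|x|))$ the machine $M$ matches the input prefix $1^{n}$ against these markers by a two-pointer sweep, thereby locating the block $\pair{T_{n,|x|}}$ without storing $n$ explicitly. The portion of $h(|x|)$ that $M$ needs to read, namely the prefix up to and including block $n$, has length $O(n\,r(n,|x|)^{9}\log^{2}p(n,|x|))$, which both matches the stated advice bound and becomes the per-step cost below.

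With the correct block located, $M$ simulates $N_{n,|x|}$ on $x$ exactly as in Lemma \ref{QFA-to-QTM}: it keeps the current inner state $q$ of $N_{n,|x|}$ on its work tape using $r_{1}=\log r(n,|x|)$ bits (whence the $O(\log r(n,|x|))$ space bound), sweeps the advice to find the $(q,\sigma)$-row for the currently scanned input symbol $\sigma$, reads $\pair{C^{(n,|x|)}_{q,\sigma}}$ symbol by symbol while applying the corresponding universal gates to generate the superposition of next moves, and then updates the stored inner state, moves its input-tape head in the chosen direction, and dumps the garbage symbol. Each simulated step costs one advice sweep, i.e.\ $O(n\,r(n,|x|)^{9}\log^{2}p(n,|x|))$ elementary steps; multiplying by the expected number $p(n,|x|)$ of steps of $N_{n,|x|}$ yields the claimed expected runtime $O(n\,r(n,|x|)^{9}\tilde{p}(n,|x|))$. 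That the acceptance/rejection behaviour is reproduced up to bounded error follows because the accumulated circuit-approximation error over $O(p)$ steps stays below the margin $\alpha$, as quantified in Section \ref{sec:transition-table}.

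The delicate point, exactly the one flagged in the passage preceding Lemma \ref{QFA-to-QTM}, is the preservation of \emph{quantum interference}. Two computation paths of $N_{n,|x|}$ reaching a common surface configuration must be mapped to two computation paths of $M$ reaching a common $M$-configuration \emph{at the same time step}; otherwise the amplitudes fail to recombine and the simulated probabilities are wrong. I would enforce this by making every simulated step of $N_{n,|x|}$ cost an identical, path-independent number of $M$-steps: $M$ always sweeps the whole relevant advice prefix (rather than halting at the target row) and always applies a fixed number of gates, padding with the identity gate $I$ whenever the description $\pair{C^{(n,|x|)}_{q,\sigma}}$ is shorter, and it keeps its input-tape head stationary throughout each sweep. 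This rigid sweeping discipline removes the timing discrepancies and is the one place where care beyond the bookkeeping of Lemma \ref{QFA-to-QTM} is genuinely required.
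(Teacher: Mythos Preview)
Your proposal is correct and follows essentially the same approach as the paper: the paper's own proof of this lemma is a two-sentence reduction to Lemma~\ref{QFA-to-QTM}, noting only that $n$ is now read from the prefix $1^{n}$ of the input rather than recovered via a size parameter, and that the rest of the construction and resource analysis carries over verbatim. Your write-up simply unpacks that reduction in full, including the two-pointer matching of $1^{n}$ against the unary markers in $h(|x|)$ (which the paper leaves implicit) and the fixed-length padded sweep to preserve interference timing (which the paper flags before Lemma~\ref{QFA-to-QTM} but does not repeat here).
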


\begin{proof}
Let $r$ and $\{N_{n,l}\}_{n,l\in\nat}$ be given as in the premise of the lemma. With a construction similar to the proof of Lemma  \ref{QFA-to-QTM},  we define the desired advised QTM $M$ that behaves as  follows in the presence of an appropriate advice function $h$. On input $w$, we first check whether $w$ is of the form $1^n\# x$ for certain $n$ and $x$. If not, then we reject $w$ instantly. Assuming that $w = 1^n\# x$, we compute $|x|$, retrieve the description of $N_{n,|x|}$ from $h(|x|)$ and run $N_{n,|x|}$ on $x$. It is not difficult to see that $M$ accepts (resp., rejects) $(1^n\# x,h(|x|))$ with bounded-error probability iff $N_{n,|x|}$ accepts (resp., rejects) $x$ with bounded-error probability. The expected runtime and the space usage of $M$ can be estimated similarly to the proof of  Lemma  \ref{QFA-to-QTM}.
\end{proof}

The proof of Proposition \ref{characterize-PBQL/poly}(2) is in essence similar to the proof of Proposition \ref{characterize-PBQL/poly}(1) except for the use of $(K,m)$ that is induced from a given family  $\LL$ of promise decision problems.

\begin{proofof}{Proposition \ref{characterize-PBQL/poly}(2)}
Let $\LL=\{(L_n^{(+)},L_n^{(-)})\}_{n\in\nat}$ be any $\dl$-good family of promise decision problems and let $(K,m)$ denote the parameterized decision problem induced from $\LL$. For convenience, we write $\Theta$ for the set $\{1^n\# x\mid n\in\nat, x\in L_n^{(+)}\cup L_n^{(-)}\}$. Since $\Theta$ is in $\dl$, by the definition of $(K,m)$, $m$ is a log-space size parameter, and thus it is  polynomially bounded.
The following proof is meant for the case of $\AAA=\mathrm{BQ}$
and $\BB=\bql$ in accordance with the proof of Proposition \ref{characterize-PBQL/poly}(1).

(If--part) Assume that $\LL$ is in $\pt\twobq/\poly$. There exist two  polynomials $p$ and $r$ as well as a nonuniform family $\NN=\{N_n\}_{n\in\nat}$ of $O(r(n))$-state 2qfa's solving $\LL$ in expected $O(p(n,|x|))$ time with bounded-error probability.
By Lemma \ref{poly-honest-bound}(2), $m$ is polynomially honest, and thus the problem $(K,m)$ belongs to $\phsp$.

For the 2qfa family $\NN$, Lemma \ref{para-promise-2qfa} guarantees the existence of an advised QTM $M$ and an $O(nr(n)^9\log^2{p(n,|x|)})$-bounded advice function $h$ such that $M$ on input $(1^{n}\# x,h(|x|))$ simulates $N_{n,|x|}$ on $x$ in  expected $O(nr(n,|x|)^9\log^2{p(n,|x|)})$ time using $O(\log{r(n,|x|)})$ space.
We further modify $M$ so that it first checks whether an input $w$ of the form $1^{n}\# x$ belongs to $\Theta$ using logarithmic space; if not, it immediately rejects the input. This modified machine makes bounded-error probability on all inputs $w$ and thus recognizes $K$ with only additional $O(\log{|w|})$ space.
Since $O(\log{r(n,|x|)}+\log|w|)\subseteq O(\log{m(w)})$, $(K,m)$ belongs to $\para\pt\bql/\poly$.

(Only if--part)
Assuming that $(K,m)\in\para\pt\bql/\poly\cap \phsp$, we take an advised QTM $M$ and an $r(|w|)$-bounded advice function $h$ for which $M$ solves $(K,m)$ using $h$ with bounded-error probability in expected $p(m(w),|w|)$ time using at most $\ell(m(w))$ space for any input $w$, where $p$ and $\ell$ are respectively a polynomial and a logarithmic function. Additionally, we assume that $\ell$ is nondecreasing.
The polynomial honesty of $m$ comes from the assumption of $(K,m)\in\phsp$. Let us take a polynomial $q$ satisfying $|w|\leq q(m(w))$ for any $w$.
Note that, for any $x\in L_n^{(+)}\cup L_n^{(-)}$, $|x|\leq |1^n\# x|\leq q(m(1^n\# x))= q(n)$ since $m(1^n\# x)=n$.
Moreover, if $w=1^n\# x$, then it follows that
$p(m(w),|w|)\leq p(n,q(n))$ and $\ell(m(w),|w|)\leq \ell(n,q(n))$.

Lemma \ref{para-promise-QTM} then provides a nonuniform family $\NN=\{N_{n,l}\}_{n,l\in\nat}$ of 2qfa's having $O(r(|x|))\cdot 2^{O(\ell(n,q(n)))}$ inner states for which each $N_{n,|x|}$ on input $x$  simulates $M$ on $(1^{n}\# x,h(|x|))$ correctly
in expected $O(p(n,q(n)))$ time.
Thus, $\NN$ solves $\LL$ with bounded-error probability.
We note that $O(p(n,q(n)))\subseteq n^{O(1)}$ and $2^{O(\ell(n,q(n)))} \subseteq n^{O(1)}$. This immediately concludes that $\LL$ belongs to $\pt\twobq/\poly$.
\end{proofof}

\subsection{Proof of Theorem \ref{general-theorem}}\label{sec:proof-of-theorem}

Finally, we are ready to describe the proof of Theorem \ref{general-theorem}.
We have already proven a key claim, Proposition \ref{characterize-PBQL/poly}, in Section \ref{sec:role-of-advice}. For the intended proof of the theorem, however, we still need two more supporting claims regarding polynomial-size advice.

\begin{lemma}\label{NL-vs-NL/poly}
Let $(\AAA,\BB)\in\{(\nl,\dl), (\nl,\bpl), (\nl,\bql), (\bql,\bpl)\}$. It then follows that $\AAA/\poly \subseteq \BB/\poly$ iff $\AAA \subseteq \BB/\poly$. The same holds even if the expected runtime of underlying Turing machines is limited to polynomials.
\end{lemma}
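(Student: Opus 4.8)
The plan is to prove the two directions of the equivalence separately, with the forward implication being immediate and the reverse one carrying all the content. Since every machine of type $\AAA$ may be regarded as one that simply ignores its advice tape, we have $\AAA\subseteq\AAA/\poly$; hence if $\AAA/\poly\subseteq\BB/\poly$ then $\AAA\subseteq\BB/\poly$ at once. This works uniformly for all four pairs because in each case $\AAA$ is a logarithmic-space class.

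For the reverse direction I would use the standard \emph{advice-absorption} technique. Fix $L\in\AAA/\poly$, witnessed by an $\AAA$-machine $M_{\AAA}$ together with a $p(n)$-bounded advice function $h$ such that $M_{\AAA}$ on $x$ with advice $h(|x|)$ decides $L$. The first step is to fold the advice into the input: define $L'=\{x\#w : M_{\AAA}\text{ accepts }x\text{ when }w\text{ is supplied on its advice tape}\}$. A machine recognizing $L'$ runs $M_{\AAA}$ but, whenever $M_{\AAA}$ consults its advice tape, reads instead from the $w$-portion of its (read-only, two-way) input. Because this redirection adds no work space and reproduces the nondeterministic, probabilistic, or quantum branching structure of $M_{\AAA}$ verbatim, we obtain $L'\in\AAA$. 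Invoking the hypothesis $\AAA\subseteq\BB/\poly$ then yields a $\BB$-machine $M_{\BB}$ and a polynomially bounded advice function $h'$ deciding $L'$.

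The second step reinstalls the advice. For $|x|=n$ we have $L(x)=\bigl[\,x\#h(n)\in L'\,\bigr]=M_{\BB}\bigl(x\#h(n),h'(\ell(n))\bigr)$, where $\ell(n)=n+1+|h(n)|$ depends only on $n$ and is polynomially bounded. Accordingly I would set the new advice to $H(n)=\pair{h(n),h'(\ell(n))}$, which is polynomially bounded since both $h$ and $h'$ are and $\ell(n)$ is polynomial in $n$. On input $x$ with advice $H(n)$, the new $\BB$-machine simulates $M_{\BB}$ on the \emph{virtual} input $x\#h(n)$ with advice $h'(\ell(n))$: it keeps the virtual input-head position $j$ as an $O(\log n)$-bit counter on its work tape and supplies the symbol at position $j$ by reading from the real input tape when $j$ lies in the $x$-block, emitting $\#$ at the separator, and reading from the first component of $H(n)$ when $j$ lies in the $h(n)$-block, while $M_{\BB}$'s own advice reads are served from the second component of $H(n)$. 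Since the virtual input has polynomial length, the counter costs only $O(\log n)$ space, so together with the $O(\log n)$ work space of $M_{\BB}$ the whole simulation stays logarithmic, placing $L$ in $\BB/\poly$.

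The delicate point, and the step I expect to require the most care, is verifying that this virtual-input simulation preserves the defining resource of $\BB$ in the probabilistic and, especially, the quantum cases, together with the expected-polynomial-time refinement asserted in the last sentence. The key observation is that the rewiring is purely deterministic bookkeeping: maintaining the invariant that the real heads track $j$, each transition of $M_{\BB}$ is reproduced by a constant-length, reversible block of transitions that reads the symbol under the virtual head, updates $M_{\BB}$'s work tape, and then adjusts the counter $j$ and the real head positions by $\pm1$. Thus the configuration space of the simulator maps bijectively — unitarily, in the quantum case — onto that of $M_{\BB}$, so interference and hence the bounded-error acceptance probabilities are reproduced exactly, and the runtime is inflated only by a constant factor, which is precisely what keeps the simulation within expected polynomial time for the $\pt\bql$ and $\pt\bpl$ versions.
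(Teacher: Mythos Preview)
Your proposal is correct and follows essentially the same advice-absorption argument as the paper: fold the original advice into the input to obtain a language in $\AAA$, apply the hypothesis to place it in $\BB/\poly$, and then pair the old and new advice strings into a single polynomial-size advice for $L$. The paper writes the auxiliary language as $K=\{(x,s)\mid |s|\le p(|x|),\ N\text{ accepts }(x,s)\}$ and glosses over the simulation of $M$ on the virtual input $(x,h(|x|))$, whereas you spell out the counter-based head-tracking and the unitarity considerations more explicitly; otherwise the two proofs coincide.
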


\begin{proof}
In what follows, we intend to show the lemma only for
the case of $\AAA=\nl$ and $\BB=\bql$ since the other cases are similarly proven.

(Only If--part) The implication of $\nl/\poly\subseteq \bql/\poly$ to $\nl\subseteq \bql/\poly$ is obviously true because $\nl$ is included in $\nl/\poly$.

(If--part) Assume that $\nl\subseteq \bql/\poly$. Our goal is to verify that $\nl/\poly\subseteq \bql/\poly$. Let us focus on an arbitrary language $L$ in $\nl/\poly$ and take a polynomial $p$ and a log-space NTM $M$ satisfying the following two conditions for every input $x$: (i) $|h(|x|)|\leq p(|x|)$ and (ii) $N$ accepts $(x,h(|x|))$ iff $x$ is in $L$.
We define a new language $K$ as $K=\{(x,s)\mid |s|\leq p(|x|), \text{ $N$ accepts $(x,s)$}\}$. Since  $K$ clearly belongs to $\nl$, our assumption implies that $K \in \bql/\poly$.
We then take a log-space advised QTM $M$ that recognizes $K$ with bounded-error probability using an appropriate polynomially-bounded advice function $g$; that is, $M(x,s,g(|x|,|s|))$ computes $K(x,s)$ with bounded-error probability for any pair $(x,s)$.
Let us define a new advice function $r$ by setting $r(n)= \pair{h(n),g(n,|h(n)|)}$ and design a new QTM $\tilde{N}$ so that it starts with $(x,r(|x|))$ and simulates $M$ on $(x,h(|x|),g(|x|,|h(|x|)|))$.
Clearly, $\tilde{N}$ with $r$ recognizes $L$
with bounded-error probability.
Take a polynomial $k$ satisfying $|g(n,n')|\leq k(n,n')$ for any pair $n,n'\in\nat$.
Note that $|r(n)|$ is $O(|h(n)|+|g(n,|h(n)|)|)$, which is included in $O(p(n)k(n,p(n)))$. Since $\tilde{N}$ is a bounded-error advised QTM for $L$ with the polynomially-bounded advice function $r$, $L$  belongs to $\bql/\poly$. Therefore, we conclude that $\nl/\poly \subseteq \bql/\poly$, as requested.
\end{proof}

Another claim stated below connects between parameterized complexity classes and standard complexity classes under the presence of advice.

\begin{lemma}\label{para-NL-vs-NL}
Let $(\AAA,\BB)\in\{(\nl,\dl), (\nl,\bpl), (\nl,\bql), (\bql,\bpl)\}$. It then follows that $\para\AAA/\poly \cap \phsp \subseteq \para\BB/\poly$ iff $\AAA/\poly \subseteq \BB/\poly$. The same holds even if the expected runtime of underlying Turing machines is limited to polynomials.
\end{lemma}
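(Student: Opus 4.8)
The plan is to reduce, exactly as in the preceding Lemma~\ref{NL-vs-NL/poly}, to the single representative case $(\AAA,\BB)=(\nl,\bql)$, since the remaining three pairs are handled by the identical argument with only the name of the machine model changed. The crux is the observation that once a size parameter $m$ is simultaneously polynomially bounded (which it always is, being a log-space size parameter) and polynomially honest, the two logarithmic scales collapse: there are polynomials $p,q$ with $m(x)\leq p(|x|)$ and $|x|\leq q(m(x))$, whence $O(\log m(x))=O(\log|x|)$, and ``polynomial in $m(x)|x|$'' and ``polynomial in $|x|$'' describe the same growth. This is precisely what identifies $\para\nl/\poly$ restricted to $\phsp$ with the standard advised class $\nl/\poly$ on the underlying language, and likewise for $\bql$.

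For the \emph{only-if} direction I would argue directly. Given $L\in\nl/\poly$, I attach the binary size parameter $m_{bin}(x)=|x|$; this is a log-space size parameter and trivially polynomially honest, so $(L,m_{bin})\in\phsp$. Any witnessing log-space NTM for $L$ with advice polynomial in $|x|$ is, reading $m_{bin}(x)=|x|$, already a witness that $(L,m_{bin})\in\para\nl/\poly$, since its space is $O(\log m_{bin}(x))$ and its advice is polynomial in $m_{bin}(x)|x|$. Thus $(L,m_{bin})\in\para\nl/\poly\cap\phsp$, and the hypothesis places it in $\para\bql/\poly$; unfolding that definition at $m_{bin}(x)=|x|$ yields a bounded-error QTM for $L$ using $O(\log|x|)$ space and polynomial advice, i.e.\ $L\in\bql/\poly$.

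For the \emph{if} direction, I would suppose $\nl/\poly\subseteq\bql/\poly$ and take $(L,m)\in\para\nl/\poly\cap\phsp$, fixing $m(x)\leq p(|x|)$ and $|x|\leq q(m(x))$. The first inequality turns advice polynomial in $m(x)|x|$ into advice polynomial in $|x|$ and $O(\log m(x))$ space into $O(\log|x|)$ space, so the witnessing NTM shows $L\in\nl/\poly$; the hypothesis then gives $L\in\bql/\poly$ via a QTM using $O(\log|x|)$ space and $|x|^{O(1)}$ advice. I would then re-account the \emph{same} QTM against the (unchanged, still log-space) parameter $m$: by polynomial honesty $O(\log|x|)=O(\log m(x))$, and advice polynomial in $|x|\leq q(m(x))$ is polynomial in $m(x)$, hence in $m(x)|x|$. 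This exhibits $(L,m)\in\para\bql/\poly$. The expected-polynomial-time variants follow verbatim, because under the two inequalities an expected running time polynomial in $(m(x),|x|)$ and one polynomial in $|x|$ are interchangeable.

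I expect the only delicate bookkeeping---and hence the main obstacle---to be the bidirectional matching of the three resource bounds (space, advice length, and, in the probabilistic and quantum cases, expected runtime) across the two descriptions. Polynomial honesty is exactly the ingredient that upgrades a standard $O(\log|x|)$-space, $|x|^{O(1)}$-advice computation into a parameterized $O(\log m(x))$-space, $(m(x)|x|)^{O(1)}$-advice one, while polynomial boundedness supplies the reverse passage. Keeping these two inequalities straight in each direction, for each of the four pairs, is essentially the entire content of the argument.
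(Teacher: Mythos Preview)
Your proof is correct, and the only-if direction via $m_{bin}$ matches the paper exactly. Your if direction, however, is genuinely simpler than the paper's. The paper does not argue directly that $L\in\nl/\poly$; instead it introduces an auxiliary padded language $K'=\{(x,1^t)\mid x\in L,\ m(x)\leq t\}$, shows $K'\in\nl/\poly$, applies the hypothesis to get $K'\in\bql/\poly$ via some QTM $M$, and then builds a new machine that on input $x$ first computes $n=m(x)$ and runs $M$ on $(x,1^n)$, bounding the space by $O(\log|(x,1^n)|)\subseteq O(\log m(x))$ via polynomial honesty. You bypass this detour entirely: polynomial boundedness of $m$ already converts the $O(\log m(x))$-space, $(m(x)|x|)^{O(1)}$-advice NTM into a standard $\nl/\poly$ witness for $L$, and polynomial honesty converts the resulting $\bql/\poly$ witness back into a $\para\bql/\poly$ witness for $(L,m)$, with no auxiliary language and no recomputation of $m(x)$ inside the final machine. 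Both arguments rest on the same two inequalities; yours simply applies them directly to $L$ rather than threading them through a padded version.
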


Recall that, by our convention stated in Section \ref{sec:main-contribution},  $\para\pt\nl/\poly$ and $\pt\twon/\poly$ mean $\para\nl/\poly$ and $\twon/\poly$, respectively.

\begin{proofof}{Lemma \ref{para-NL-vs-NL}}
Hereafter, we are focused only on the case where $\AAA=\nl$ and $\BB=\bql$ in accordance with the proof of Lemma \ref{NL-vs-NL/poly}.

(Only If--part) We begin with assuming that $\para\nl/\poly \cap \phsp \subseteq \para\bql/\poly$.
Let us take the binary size parameter $m_{bin}(x)=|x|$ defined for all strings  $x$ and consider any parameterized decision problem $(L,m_{bin})$ satisfying $L\in\nl/\poly$.
Notice that $m_{bin}$ is polynomially honest.
Since $(L,m_{bin})\in\para\nl/\poly\cap \phsp$, our assumption concludes that $(L,m_{bin})\in \para\bql/\poly$. This is logically equivalent to $L\in\bql/\poly$ by the definition of $m_{bin}$.
Since $L$ is arbitrary, we conclude that
$\nl/\poly \subseteq \bql/\poly$.

(If--part) On the contrary, assume that $\nl/\poly\subseteq \bql/\poly$. Let $(L,m)$ be any parameterized decision problem in $\para\nl/\poly\cap \phsp$.
By the definition, $m$ is a log-space size parameter, and thus
we can deterministically compute $1^{m(x)}$ from $x$ using $O(\log|x|)$ space.
Since $m$ is polynomially bounded, an appropriately chosen polynomial $q$ can satisfy $m(x)\leq q(|x|)$ for all strings $x$. The polynomial honesty of $m$ also guarantees the existence of a constant $c\geq0$ satisfying $|x|\leq m(x)^c+c$ for all $x$.
Since $(L,m)\in\para\nl/\poly$, there exists an NTM $N$ and a polynomially-bounded advice function $k$ such that $N$ with $k$ solves $L$ using $O(\ell(m(x)))$ space for a certain logarithmic function $\ell$.
Without loss of generality, we can assume that $\ell$ is nondecreasing.

Let us define $K'=\{(x,1^t)\mid x\in L, t\in\nat, m(x)\leq t\}$.
We wish to claim that $K'$ belongs to $\nl/\poly$.
To verify this claim, let us consider a new NTM that behaves as follows.
On input $w$ of the form $(x,1^t)$, firstly check whether $m(x)\leq t$. If so, then output the value $L(x)$; otherwise, immediately reject $x$.
With the help of the advice string $k(|x|)$, we can check that ``$x\in L$?'' using space $O(\ell(m(x))) \subseteq O(\ell(q(|w|)))$
and also check that ``$m(x)\leq t$?'' using space $O(\log|x|+\log{t})\subseteq O(\log{|w|})$.
Hence, this NTM requires only $O(\log{|w|})$ space. From this result, we conclude that $K'$ belongs to $\nl/\poly$.

Since $K'\in\nl/\poly$, our assumption further implies that $K'\in\bql/\poly$.
This provides us with a logarithmic function $\ell'$, a polynomially-bounded advice function $h$,  and an advised QTM $M$ that recognizes $\{(x,h(|x|))\mid x\in K'\}$ with bounded-error probability using at most $\ell'(|x|)$ space.
Finally, we design a new machine $N$ that behaves as follows: on input $x$, compute $n=m(x)$ and run $M$ with $h$ on $(x,1^n)$. Note that $N$ is also an advised QTM and runs using space $O(\ell'(|x|)+\log|x|) \subseteq O(\log{m(x)})$ since $|x|\leq m(x)^c+c$.
It thus follows that  $(L,m)$ is in $\para\bql/\poly$.
\end{proofof}

Let us present the proof of Theorem \ref{general-theorem}, which is now a relatively easy consequence of Lemmas \ref{NL-vs-NL/poly}--\ref{para-NL-vs-NL} and Proposition \ref{characterize-PBQL/poly}.

\vs{-2}
\begin{proofof}{Theorem \ref{general-theorem}}
In what follows, we intend to show only the case of $\AAA=\mathrm{N}$ and $\BB=\mathrm{BQ}^{\dagger}$ because the other cases can be proven in a similar manner.

(If--part) Firstly, we assume that $\nl\subseteq \pt\bql/\poly$.
Since this assumption is, by Lemma \ref{NL-vs-NL/poly}, logically equivalent to $\nl/\poly\subseteq \pt\bql/\poly$,
Lemma \ref{para-NL-vs-NL} implies that $\para\nl/\poly\cap\phsp \subseteq \para\pt\bql/\poly$. Using this last inclusion, it suffices for us to verify that $\twon/\poly \subseteq \pt\twobq^{\dagger}$.

For our purpose, let us consider an arbitrary family $\LL= \{(L_n^{(+)},L_n^{(-)})\}_{n\in\nat}$ of promise decision problems in $\twon/\poly$. Since $\twon/\poly$ is $\dl$-good, we can take an $\dl$-good extension $\hat{\LL}=\{(\hat{L}_n^{(+)},\hat{L}_n^{(-)})\}_{n\in\nat}$ of $\LL$ in $\twon/\poly$.
We write $\Sigma_n$ for $\hat{L}_n^{(+)}\cup \hat{L}_n^{(-)}$ for each index $n\in\nat$.
There exists a nonuniform family $\{M_n\}_{n\in\nat}$ of 2nfa's with polynomially many inner states such that, for any index $n\in\nat$, $M_n$ solves $(\hat{L}_n^{(+)},\hat{L}_n^{(-)})$ on all inputs in $\Sigma_n$ with bounded-error probability. Consider the parameterized decision problem $(K,m)$ induced from $\hat{\LL}$. Since $\hat{\LL}\in\twon/\poly$, by Lemma \ref{poly-honest-bound}(2), $m$ is polynomially honest.
We apply Proposition \ref{characterize-PBQL/poly}(2) and then obtain $(K,m)\in \para\nl/\poly \cap \phsp$, from which our assumption further places $(K,m)$ in $\para\pt\bql/\poly$. Proposition \ref{characterize-PBQL/poly}(2) then concludes that $\hat{\LL}$ belongs to $\pt\twobq/\poly$.
Since $\hat{\LL}$ is an $\dl$-good extension of $\LL$, $\LL$ must belong to $\pt\twobq^{\dagger}/\poly$.
Because $\pt\twobq^{\dagger}/\poly \subseteq \pt\twobq^{\dagger}$, the desired inclusion $\twon/\poly \subseteq \pt\twobq^{\dagger}$ follows immediately.

(Only If--part) On the contrary, we assume that $\twon/\poly \subseteq \pt\twobq^{\dagger}$. This implies that $\twon/\poly\subseteq \pt\twobq^{\dagger}/\poly$.
It therefore suffices to prove that $\para\nl/\poly\cap\phsp \subseteq \para\pt\bql/\poly$ because, once this is proven, Lemma \ref{para-NL-vs-NL} implies that $\nl/\poly \subseteq \pt\bql/\poly$ and Lemma \ref{NL-vs-NL/poly} further concludes that $\nl\subseteq \pt\bql/\poly$.

Let us take any parameterized decision problem $(L,m)$ in $\para\nl/\poly\cap \phsp$. Let $\LL$ denote the family of promise decision problems induced from $(L,m)$.
Proposition \ref{characterize-PBQL/poly}(1) implies that $\LL\in\twon/\poly$.
Our assumption then leads to the conclusion that $\LL\in\pt\twobq^{\dagger}/\poly$. It then follows by Proposition \ref{characterize-PBQL/poly}(1) that $(L,m)$ belongs to $\para\pt\bql/\poly$.
\end{proofof}

From Theorem \ref{general-theorem}, Corollary \ref{NL-equivalence} follows immediately. This theorem  also leads to the main result of
\cite[Theorem 1.1]{Kap14} (shown here as Corollary \ref{N-D-case}) whose proof relies on the property of a particular $\nl$-complete problem, called the two-way liveness problem \cite{SS78}. Our key argument in the proof of
Proposition \ref{characterize-PBQL/poly}, however, uses the parameterized complexity classes $\para\dl$ and $\para\nl$ as well as their properties, which are not directly relying on any particular $\nl$-complete problem.

\begin{corollary}\emph{\cite{Kap14}}\label{N-D-case}
$\twon/\poly \subseteq \twod$ iff $\nl\subseteq \dl/\poly$.
\end{corollary}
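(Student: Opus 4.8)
The plan is to obtain the corollary as the single instance $(\AAA,\BB)=(\mathrm{N},\mathrm{D})$ of Theorem \ref{general-theorem}, so that no genuinely new argument is required beyond unfolding the runtime conventions. The pair $(\mathrm{N},\mathrm{D})$ is one of the four admissible pairs listed in Theorem \ref{general-theorem}, hence the theorem's biconditional $\pt2\mathrm{N}/\poly \subseteq \pt2\mathrm{D}$ iff $\pt\mathrm{NL}\subseteq \pt\mathrm{DL}/\poly$ applies verbatim. The entire task is then to translate each of the four class names through the notational conventions fixed in Section \ref{sec:main-contribution}.

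Concretely, I would invoke that for deterministic and nondeterministic machines the expected-polynomial-time prefix ``$\pt$'' is vacuous, since their runtime is by convention the length of the shortest accepting path (or the shortest rejecting path otherwise) and is therefore automatically linear. Thus $\pt2\mathrm{N}/\poly = \twon/\poly$, $\pt2\mathrm{D}=\twod$, and $\pt\mathrm{NL}=\nl$; and, reading the abbreviation $\mathrm{DL}$ as the usual $\mathrm{L}$ ($=\dl$) exactly as prescribed in the theorem, $\pt\mathrm{DL}/\poly=\dl/\poly$. Substituting these four equalities into the theorem's biconditional yields precisely $\twon/\poly\subseteq\twod$ iff $\nl\subseteq\dl/\poly$, which is the asserted statement.

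The one point I would verify explicitly is that no dagger decoration intervenes here. The superscripts in $\twobq^{\dagger}$ and $\twobp^{\dagger}$ were introduced only because $\twobq$ and $\twobp$ need not be $\dl$-good; by contrast $\twod$ and $\twon$ \emph{are} $\dl$-good, since their underlying 2dfa's and 2nfa's can be forced to either accept or reject on every input. Consequently the $(\mathrm{N},\mathrm{D})$ case of Theorem \ref{general-theorem} carries no dagger and the specialization is clean. I expect no real obstacle at this stage: all the substance already lives inside Theorem \ref{general-theorem}, namely in the characterization Proposition \ref{characterize-PBQL/poly} (which trades $\twod/\poly$ and $\twon/\poly$ for the advised parameterized classes $\para\dl/\poly$ and $\para\nl/\poly$) together with the advice-transfer Lemmas \ref{NL-vs-NL/poly} and \ref{para-NL-vs-NL}.

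If instead a self-contained derivation were wanted, I would specialize the same chain directly to the deterministic/nondeterministic setting: apply Proposition \ref{characterize-PBQL/poly} with $(\mathrm{N},\nl)$ and with $(\mathrm{D},\dl)$ to pass between the state-complexity classes $\twon/\poly$, $\twod/\poly$ and the parameterized classes $\para\nl/\poly\cap\phsp$, $\para\dl/\poly$; then invoke Lemma \ref{para-NL-vs-NL} to reduce $\para\nl/\poly\cap\phsp\subseteq\para\dl/\poly$ to $\nl/\poly\subseteq\dl/\poly$; and finally apply Lemma \ref{NL-vs-NL/poly} to collapse $\nl/\poly\subseteq\dl/\poly$ to $\nl\subseteq\dl/\poly$. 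The only care needed along this route is the polynomial-honesty bookkeeping supplied by Lemma \ref{poly-honest-bound}, handled exactly as in the proof of Theorem \ref{general-theorem}.
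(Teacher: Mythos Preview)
Your proposal is correct and matches the paper's approach exactly: the paper presents Corollary~\ref{N-D-case} immediately after proving Theorem~\ref{general-theorem} and simply remarks that the theorem ``leads to'' it, i.e., the corollary is nothing but the $(\AAA,\BB)=(\mathrm{N},\mathrm{D})$ instance with the $\pt$-prefixes dropped via the deterministic/nondeterministic runtime convention. Your additional observations about the absence of the dagger and the alternative self-contained chain through Proposition~\ref{characterize-PBQL/poly} and Lemmas~\ref{NL-vs-NL/poly}--\ref{para-NL-vs-NL} are accurate but go beyond what the paper spells out.
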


\section{Quantum Advice and Quantum Transition Tables}\label{sec:quantum-advice}

Since Theorem \ref{two-way-equivalence} concerns \emph{quantum advice}, for the  proof of the theorem, we first examine the basic properties of quantum advice. It was shown in \cite[Lemma 3.1]{NY04} that a polynomial-time QTM with quantum advice can be translated into an ``equivalent'' polynomial-size quantum circuit family starting with additional quantum states.
In the case of quantum finite automata, nevertheless, we rather intend to quantize transition tables into ``superpositions'' of such transition tables
and feed them to quantum finite automata  so that different transition tables  may regulate different behaviors of quantum finite automata.
Firstly, we need to clarify the notion of a \emph{superposition of transition tables} or, more succinctly, a \emph{quantum transition table}.
As discussed in Section \ref{sec:transition-table}, a transition table $T$ of a 2qfa can be encoded into a binary string $\pair{T}$, from which an appropriately designed 2qfa can generate from $(q,\sigma)$ a quantum circuit $C^{(n,l)}_{q,\sigma}$ representing a target transition  and approximately execute this transition by simply running $C^{(n,l)}_{q,\sigma}$ on the quantum state $\qubit{q,\sigma}$.
To handle various transition tables $T_1,T_2,\ldots,T_k$ at once, we can create a superposition $\qubit{\phi}$ of the form $\sum_{i=1}^{k}\alpha_i\qubit{T_i}$ with the use of appropriately chosen amplitudes $\{\alpha_i\}_{i\in[k]}$ satisfying $\sum_{i\in[k]} |\alpha_i|^2=1$.
For technical reason, we further ``encode'' such a superposition $\qubit{\phi}$ of transition tables
into another superposition $\qubit{\phi^{(code)}} = \sum_{i\in[k]}\alpha_i\qubit{\pair{T_i}}$. For convenience, we define the \emph{encoded length} of  $\qubit{\phi}$ to be $\max_{i\in[k]} |\pair{T_i}|$.

To execute a quantum transition table,
we also need to expand our underlying 2qfa's to ``super'' 2qfa's.
A \emph{2-way super quantum finite automaton with a flexible garbage tape} (abbreviated as a 2sqfa) is an octuple $(Q,\Sigma,\{\cent,\dollar\},\Xi, \qubit{\phi},q_0,Q_{acc},Q_{rej})$, where $\qubit{\phi}$ is a quantum transition table and the rest is similar to a 2qfa.
A 2sqfa takes an input and, starting with the inner state $\qubit{q_0}$, makes  transitions described by the quantum transition table $\qubit{\phi}$.
Similar to 2qfa's, at every step, the 2sqfa must observe its inner state to check whether or not it is in a halting state.
We further consider a nonuniform family $\{M_n\}_{n\in\nat}$ of such 2sqfa's.

The notation $\pt\twosbq$ denotes the collection of all nonuniform families of promise decision problems, each family of which is solved  with  bounded-error probability in expected polynomial time by a certain nonuniform family of 2sqfa's having polynomially many inner states. By restricting our interest only on inputs having polynomial ceilings, we obtain $\pt\twosbq/\poly$ from $\pt\twosbq$ in a way similar to getting
$\pt\twobq/\poly$ from $\pt\twobq$.

Our purpose is to give the proof of Theorem \ref{two-way-equivalence}; for this purpose, we wish to prepare two key statements, Proposition \ref{character-2sPBQ} and Lemma \ref{parameter-PBQL-eliminate}.

\begin{proposition}\label{character-2sPBQ}
\renewcommand{\labelitemi}{$\circ$}
Let $(L,m)$ and $(K,m')$ be two parameterized decision problems and let $\LL=\{(L_n^{(+)},L_n^{(-)})\}_{n\in\nat}$ denote a nonuniform family of promise decision problems.
\begin{enumerate}\vs{-1}
  \setlength{\topsep}{-2mm}%
  \setlength{\itemsep}{1mm}%
  \setlength{\parskip}{0cm}%

\item Assume that $\LL$ is induced from $(L,m)$. It then follows that  $(L,m)\in\para\pt\bql/\qpoly\cap \phsp$ iff $\LL\in\pt\twosbq/\poly$.

\item Assume that $\LL$ is $\dl$-good and $(K,m')$ is induced from $\LL$. It then follows that $(K,m')\in\para\pt\bql/\qpoly\cap \phsp$ iff  $\LL\in\pt\twosbq/\poly$.
\end{enumerate}
\end{proposition}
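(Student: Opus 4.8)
The plan is to mirror the proof of Proposition \ref{characterize-PBQL/poly} essentially line for line, replacing deterministic advice by quantum advice and ordinary 2qfa's by super 2qfa's (2sqfa's). The one genuinely new ingredient is a dictionary between the quantum advice of a QTM and the quantum transition table of a 2sqfa: writing a quantum advice state as a superposition $\sum_i \alpha_i\qubit{a_i}$ over classical advice strings, each classical branch $a_i$ yields, via the classical-advice constructions underlying Lemmas \ref{QFA-to-QTM} and \ref{para-promise-2qfa}, an encoded transition table $\pair{T_i}$; conversely, a quantum transition table $\sum_i\alpha_i\qubit{T_i}$ of a 2sqfa gives, through its encoded form $\sum_i\alpha_i\qubit{\pair{T_i}}$, precisely a quantum advice state. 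I would therefore first establish two simulation lemmas that are the quantum-advice analogues of Lemmas \ref{QTM-to-QFA}/\ref{QFA-to-QTM} (for part (1)) and of Lemmas \ref{para-promise-QTM}/\ref{para-promise-2qfa} (for part (2)), and then assemble the biconditionals exactly as before.

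For the \emph{if} direction of each part (2sqfa $\Rightarrow$ QTM), I start from a polynomial-size expected-polynomial-time family of 2sqfa's solving $\LL$, take the quantum transition table $\qubit{\phi_n}$ of $M_n$, and hand its encoded superposition $\sum_i\alpha_i\qubit{\pair{T_i}}$ to a log-space QTM as quantum advice. The QTM sweeps the read-only quantum advice register to recover each encoded circuit $\pair{C^{(n,l)}_{q,\sigma}}$ and performs the corresponding transition exactly as in Lemma \ref{QFA-to-QTM}; the encoded length of $\qubit{\phi_n}$ is polynomially bounded by the same counting from Section \ref{sec:transition-table}, so the advice lies in $\qpoly$. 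Polynomial honesty follows from Lemma \ref{poly-honest-bound} unchanged, placing $(L,m)$ (resp.\ $(K,m')$) in $\phsp$, so that the target class is $\para\pt\bql/\qpoly\cap\phsp$.

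For the \emph{only if} direction (QTM $\Rightarrow$ 2sqfa), I start from an expected-polynomial-time log-space QTM $M$ with quantum advice $\qubit{h(n)}$, decompose $\qubit{h(n)}=\sum_i\alpha_i\qubit{a_i}$, and fold $M$'s work-tape contents into inner states exactly as in Lemma \ref{QTM-to-QFA}, obtaining for each classical branch $a_i$ an ordinary 2qfa with transition table $T_i$ of polynomially many states; the superposition $\sum_i\alpha_i\qubit{T_i}$ is then the quantum transition table of the desired 2sqfa, whose state complexity is $2^{O(\log m(x))}=m(x)^{O(1)}$. The approximation analysis of Section \ref{sec:transition-table} lifts to the superposition by linearity: since every branch is approximated to within the same inaccuracy and the transition-table registers of distinct branches are orthogonal, the overall deviation stays below the chosen bound and the bounded-error criterion is preserved. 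Part (2) is then handled through the $\dl$-good inducing family of $(K,m')$ exactly as in Proposition \ref{characterize-PBQL/poly}(2).

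The main obstacle I expect is quantum rather than combinatorial: I must argue that the quantum advice can be consulted once per simulated step of the 2sqfa \emph{without decohering it}. In the classical-advice proofs this was automatic, since advice is read-only; here each access must be a controlled unitary that leaves the advice register entangled with, but never measures, the evolving computation, and the sweeping discipline of Lemma \ref{QFA-to-QTM} must synchronize the timing across all superposition branches so that the quantum interference of the 2sqfa is reproduced faithfully. Once this coherence-preserving, interference-faithful correspondence between repeated advice access and repeated consultation of the quantum transition table is in place, both parts of Proposition \ref{character-2sPBQ} follow by the same bookkeeping as Proposition \ref{characterize-PBQL/poly}.
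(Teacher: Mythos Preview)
Your proposal is correct and follows essentially the same route as the paper: the paper first isolates the two simulation lemmas you describe (stated there as Lemmas \ref{QTM-quantum-advice} and \ref{2sqfa-to-advice}, the quantum-advice analogues of Lemmas \ref{QTM-to-QFA}--\ref{QFA-to-QTM} and \ref{para-promise-QTM}--\ref{para-promise-2qfa}), using exactly your dictionary $\sum_a\alpha_a\qubit{a}\leftrightarrow\sum_a\alpha_a\qubit{T_{n,l,a}}$ between quantum advice and quantum transition tables, and then assembles both biconditionals via Lemma \ref{poly-honest-bound} just as in Proposition \ref{characterize-PBQL/poly}. Your identification of the coherence issue---that the advice must be swept by a controlled unitary with synchronized timing across branches so that distinct transition-table registers remain orthogonal---is the right technical point, and the paper relies on the same sweeping discipline inherited from Lemma \ref{QFA-to-QTM} without making it any more explicit than you do.
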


For readability, we postpone the proof of Proposition \ref{character-2sPBQ} until the end of this section. Meanwhile, we demonstrate another useful lemma concerning classical and quantum advice.

\begin{lemma}\label{parameter-PBQL-eliminate}
Given any $\AAA\in\{\bql,\pt\bql\}$, it follows that $\para\AAA/\poly \cap \phsp = \para\AAA/\qpoly\cap\phsp$ iff $\AAA/\poly = \AAA/\qpoly$.
\end{lemma}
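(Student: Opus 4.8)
The plan is to reuse, almost verbatim, the template already established for classical advice in Lemmas \ref{NL-vs-NL/poly} and \ref{para-NL-vs-NL}, simply replacing one occurrence of $/\poly$ by $/\qpoly$. I would fix $\AAA=\bql$ (the case $\AAA=\pt\bql$ being identical once expected runtimes are tracked, exactly as in the ``same holds'' clauses of those two lemmas). Since classical (deterministic) advice is the special case of quantum advice in which each advice state is a fixed computational basis state, the inclusions $\AAA/\poly\subseteq \AAA/\qpoly$ and $\para\AAA/\poly\cap\phsp \subseteq \para\AAA/\qpoly\cap\phsp$ hold unconditionally. Hence each equality in the statement reduces to a single inclusion: $\para\AAA/\poly\cap\phsp = \para\AAA/\qpoly\cap\phsp$ is equivalent to $\para\AAA/\qpoly\cap\phsp \subseteq \para\AAA/\poly$ (the target already lies in $\phsp$), and $\AAA/\poly=\AAA/\qpoly$ is equivalent to $\AAA/\qpoly \subseteq \AAA/\poly$. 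It therefore suffices to prove that $\para\AAA/\qpoly\cap\phsp \subseteq \para\AAA/\poly$ iff $\AAA/\qpoly \subseteq \AAA/\poly$.

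For the only-if direction I would argue as in the only-if part of Lemma \ref{para-NL-vs-NL}: take any $L\in\AAA/\qpoly$ and attach the binary size parameter $m_{bin}(x)=|x|$, which is polynomially honest, so that $(L,m_{bin})\in\para\AAA/\qpoly\cap\phsp$. The hypothesis places $(L,m_{bin})$ into $\para\AAA/\poly$, and since the parameterization by $m_{bin}$ collapses to the ordinary logarithmic-space, polynomial-advice setting, this says precisely $L\in\AAA/\poly$. As $L$ was arbitrary, $\AAA/\qpoly\subseteq\AAA/\poly$.

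The if direction carries the real content, and I would follow the $K'$-construction of Lemma \ref{para-NL-vs-NL}. Given $(L,m)\in\para\AAA/\qpoly\cap\phsp$, I fix a polynomial $q$ with $m(x)\leq q(|x|)$ (polynomial boundedness of the log-space size parameter $m$) and a constant $c$ with $|x|\leq m(x)^c+c$ (polynomial honesty from $\phsp$), and I set $K'=\{(x,1^t)\mid x\in L,\ t\in\nat,\ m(x)\leq t\}$. First I show $K'\in\AAA/\qpoly$ in the ordinary binary-size sense: on $w=(x,1^t)$ a machine checks $m(x)\leq t$ in space $O(\log|w|)$ via the log-space computation of $m$, rejects on failure, and otherwise simulates the $\para\AAA/\qpoly$ machine for $(L,m)$ on $x$ in space $O(\log m(x))\subseteq O(\log|w|)$. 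I then invoke the hypothesis $\AAA/\qpoly\subseteq\AAA/\poly$ to get $K'\in\AAA/\poly$, and finally reconstruct a machine for $(L,m)$: on input $x$ it computes $n=m(x)$, forms $w=(x,1^n)$ (for which $w\in K'$ iff $x\in L$), and runs the classical-advice machine for $K'$ on $w$; polynomial honesty gives $\log|w|=O(\log m(x))$, so the $\para\AAA/\poly$ space bound is met and the classical advice stays polynomially bounded.

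The one genuinely new point, and the step I expect to be the main obstacle, is the advice-indexing bookkeeping in the claim $K'\in\AAA/\qpoly$. The quantum advice for $(L,m)$ is indexed by $|x|$, whereas the advice for $K'$ must be indexed by $|w|$, and many pairs $(x,1^t)$ of a common length $|w|$ carry different values of $|x|\leq|w|$. I would resolve this, as is implicit in Lemma \ref{para-NL-vs-NL}, by letting the advice for length $|w|$ be the tensor product $\bigotimes_{j\leq |w|}k(j)$ of the original quantum advice states $k(j)$; this is a legitimate quantum advice string, being a state on $\sum_{j\leq|w|}\poly(j)\subseteq\poly(|w|)$ qubits that depends only on $|w|$, and the machine, having computed $|x|$ from $w$, addresses exactly the $k(|x|)$ register while leaving the remaining registers unused (so no-cloning is never an issue, since each register is consumed at most once). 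The analogous concatenation needed in the reconstruction step uses only classical advice and is routine. Tracking the extra polynomial factors in the expected runtime then yields the $\AAA=\pt\bql$ case with no essential change.
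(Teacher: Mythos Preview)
Your argument is correct, but it is considerably more elaborate than what the paper actually does for the if direction. The paper avoids the auxiliary language $K'$ altogether by observing that the $\phsp$ restriction, combined with the automatic polynomial boundedness of any log-space size parameter $m$, forces $\log m(x)=\Theta(\log|x|)$. Concretely: from $(L,m)\in\para\bql/\qpoly\cap\phsp$ and $m(x)\leq |x|^c+c$ one immediately gets that the underlying $O(\log m(x))$-space QTM is an $O(\log|x|)$-space QTM, so $L\in\bql/\qpoly$ directly; the hypothesis then gives $L\in\bql/\poly$, and polynomial honesty converts the resulting $O(\log|x|)$ bound back to $O(\log m(x))$, yielding $(L,m)\in\para\bql/\poly\cap\phsp$. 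No padding language, no tensor-product advice bundling, and no advice re-indexing is needed.

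Your route via $K'$ genuinely mirrors Lemma \ref{para-NL-vs-NL}, but that lemma needed the padding trick precisely because its conclusion $\para\BB/\poly$ carries no $\phsp$ restriction, so one could not assume honesty when passing from the parameterized to the unparameterized world. Here both sides are intersected with $\phsp$, which is exactly what makes the direct argument go through. Your tensor-product handling of the quantum advice indexing is sound (each $k(j)$ is used at most once, so no cloning issue arises, and classical separators between the blocks let the machine locate $k(|x|)$), so nothing is wrong---it is simply machinery that the extra hypothesis renders unnecessary.
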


\begin{proof}
A key idea of the following proof is similar to that of Lemma \ref{para-NL-vs-NL}. In what follows, we target the case of $\AAA=\bql$ and skip the other case of $\AAA = \pt\bql$.

(Only if--part) Assume that $\para\bql/\poly\cap\phsp$ equals $\para\bql/\qpoly\cap \phsp$. Since the inclusion $\bql/\poly \subseteq \bql/\qpoly$ obviously holds, it suffices  to verify that $\bql/\qpoly \subseteq \bql/\poly$. Let $L$ be any language in $\bql/\qpoly$ over
an appropriate alphabet $\Sigma$.
By setting $m_{bin}(x)=|x|$ for all $x\in\Sigma^*$, we instantly obtain the membership of $(L,m_{bin})$ to $\para\bql/\qpoly \cap\phsp$. By our assumption, $(L,m_{bin})$ falls into $\para\bql/\poly$.
By the definition of $m_{bin}$, $L$ must belong to $\bql/\poly$.

(If--part) We start with assuming $\bql/\poly = \bql/\qpoly$. Let us consider any parameterized decision problem $(L,m)$ in $\para\bql/\qpoly\cap\phsp$.
We remark that $m$ is polynomially honest.
Take a polynomially-bounded quantum advice function $h$ and an advised QTM $M$ that solves $L$ on inputs of the form $(x,h(|x|))$ with bounded-error probability using $O(\log{m(x)})$ space.
Our goal is to demonstrate that $(L,m)$ belongs to $\para\bql/\poly$.
Notice that, since $m$ is a log-space size parameter, $m$ must be polynomially bounded.
This fact ensures the existence of a constant $c>0$ that forces $m(x)\leq|x|^c+c$ to hold for all $x\in\Sigma^*$. Thus, the space usage of $M$ is at most   $O(\log{m(x)})\subseteq O(\log|x|)$  on all inputs of the form $(x,h(|x|))$. Since $h$ is also polynomially bounded, we conclude that $L$ belongs to $\bql/\qpoly$.

Our assumption then yields $L\in\bql/\poly$. Take an advised QTM $N$ and a polynomially-bounded classical-advice function $k$ such that $N$ solves $L$ on all inputs of the
form $(x,k(|x|))$ with bounded-error probability using $O(\log|x|)$ space.
By the polynomial honesty of $m$, there exists another constant $e>0$ satisfying $|x|\leq m(x)^e+e$ for all $x$.
It then follows that $O(\log|x|) \subseteq O(\log(m(x)^e+e)) \subseteq O(\log{m(x)})$. Thus, the space usage of $N$ is upper-bounded by $O(\log{m(x)})$. We therefore conclude that $(L,m)$ is actually in $\para\bql/\poly\cap\phsp$.
\end{proof}

Theorem \ref{two-way-equivalence} follows from Propositions \ref{characterize-PBQL/poly} and \ref{character-2sPBQ} by an additional application of Lemma  \ref{parameter-PBQL-eliminate}.

\begin{proofof}{Theorem \ref{two-way-equivalence}}
(If--part) Assume that $\pt\bql/\poly = \pt\bql/\qpoly$.
Lemma \ref{parameter-PBQL-eliminate} then implies that  $\para\pt\bql/\poly\cap\phsp$ coincides with  $\para\pt\bql/\qpoly\cap\phsp$. Let $\LL$ denote any family of promise decision problems in $\pt\twosbq^{\dagger}/\poly$ and take its $\dl$-good extension $\hat{\LL}$ in $\pt\twosbq^{\dagger}/\poly$.
Let $(K,m)$ denote the parameterized decision problem induced from $\hat{\LL}$. By Proposition \ref{character-2sPBQ}(2), it follows that  $(K,m)$ belongs to $\para\pt\bql/\qpoly \cap\phsp$. By our assumption,  $(K,m)$ falls in $\para\pt\bql/\poly$. Proposition \ref{characterize-PBQL/poly}(2) then implies that $\hat{\LL}\in\pt\twobq/\poly$. By the property of $\dl$-good extension, we obtain $\LL\in\pt\twobq^{\dagger}/\poly$.
Therefore, we conclude that $\pt\twosbq^{\dagger}/\poly \subseteq \pt\twobq^{\dagger}$.

(Only If--part)
Assume that $\pt\twosbq^{\dagger}/\poly\subseteq \pt\twobq^{\dagger}$. Owing to Lemma \ref{parameter-PBQL-eliminate}, it suffices to verify that $\para\pt\bql/\poly\cap\phsp$ equals $\para\pt\bql/\qpoly\cap\phsp$.
It is obvious that $\para\pt\bql/\poly \subseteq \para\pt\bql/\qpoly$.
To show the converse inclusion, let $(L,m)$ denote any parameterized decision problem in $\pt\para\bql/\qpoly\cap \phsp$.
Additionally, take the family $\LL$ of promise decision problems induced from $(L,m)$. By Proposition \ref{character-2sPBQ}(1),  $\LL$ falls in $\pt\twosbq/\poly$.
Since $\LL$ is $\dl$-good, it belongs to $\pt\twosbq^{\dagger}/\poly$. Our assumption further places $\LL$ in $\pt\twobq^{\dagger}/\poly$. Proposition \ref{characterize-PBQL/poly}(1) then  concludes that  $(L,m)\in\para\pt\bql/\poly\cap\phsp$, as requested.
\end{proofof}

The proof of Proposition \ref{character-2sPBQ}(1) requires one more crucial lemma, Lemma \ref{QTM-quantum-advice}, which is analogous to Lemmas \ref{QTM-to-QFA}--\ref{QFA-to-QTM}; however, due to the use of 2sqfa's, we need to deal with ``transition tables'' in Lemma \ref{QTM-quantum-advice} instead of ``transition functions'' used in Lemmas \ref{QTM-to-QFA}--\ref{QFA-to-QTM}.

\begin{lemma}\label{QTM-quantum-advice}
Assume that $m$ is a size parameter, which is polynomially honest. Let $b$, $r$, $s$,  $t$, and $\ell$ denote arbitrary functions.
\renewcommand{\labelitemi}{$\circ$}
\begin{enumerate}\vs{-2}
  \setlength{\topsep}{-2mm}%
  \setlength{\itemsep}{1mm}%
  \setlength{\parskip}{0cm}%

\item Let $M$ be an advised QTM $M$ and, for each index $l\in\nat$, define $h(l)$ to be a superposition $\qubit{\phi_{l}}$ of advice strings in $\Theta^{r(l)}$, where $\Theta$ denotes a fixed advice alphabet. There exist a nonuniform family $\NN= \{N_{n,l}\}_{n,l\in\nat}$ of 2sqfa's with $O(r(l))\cdot 2^{O(\ell(n,l))}$ inner states and a family $\{\qubit{\psi_{n,l}}\}_{n,l\in\nat}$ of superpositions $\qubit{\psi_{n,l}}$ of $\NN$'s transition tables of encoded length $O(r(n,l)^9 \log^2t(n,l))\cdot  2^{O(\ell(n,l))}$  satisfying the following: for any string $x$, if $M$ accepts (resp., rejects) $(x,h(|x|))$ with bounded-error probability in expected $t(m(x),|x|)$ time using  at most $\ell(m(x),|x|)$ space, then $N_{m(x),|x|}$  accepts (resp., rejects) $x$ following the quantum transition table $\qubit{\psi_{m(x),|x|}}$ with bounded-error probability in expected  $O(t(m(x),|x|))$ time.

\item Let $\NN= \{N_{n,l}\}_{n,l\in\nat}$ be a nonuniform family of $r(n,l)$-state 2sqfa's and let $\{\qubit{\psi_{n,l}}\}_{n,l\in\nat}$ be a family of superpositions $\qubit{\psi_{n,l}}$ of $\NN$'s transition tables of encoded length at most $s(n,l)$. Assume that $m(x)\leq b(|x|)$ for all $x$. There exist an advised QTM $M$ and an $O(b(|x|)s(m(x),|x|))$-bounded quantum advice function $h$ such that, for any string $x$, if $N_{m(x),|x|}$  accepts (resp., rejects) $x$ following $\qubit{\psi_{m(x),|x|}}$ with bounded-error probability in expected $t(m(x),|x|)$ time, then $M$ accepts (resp., rejects) $(x,1^{m(x)},h(|x|))$ with bounded-error probability in expected $O(b(|x|)s(m(x),|x|) t(m(x),|x|) \log{r(m(x),|x|)})$ time using  $O(\log{r(m(x),|x|)})$ space.
\end{enumerate}
\end{lemma}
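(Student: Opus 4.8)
The plan is to obtain Lemma~\ref{QTM-quantum-advice} by ``quantizing'' the two classical simulations of Lemmas~\ref{QTM-to-QFA} and~\ref{QFA-to-QTM}. The crucial observation is that both of those constructions are carried out \emph{advice-string by advice-string} (resp.\ \emph{transition-table by transition-table}): in Lemma~\ref{QTM-to-QFA} a single fixed classical advice string $a$ is hardwired into a $2$qfa whose transition function is encoded, as in Section~\ref{sec:transition-table}, into one transition table $\pair{T^{(a)}}$; in Lemma~\ref{QFA-to-QTM} a single transition table $\pair{T}$ is placed on the advice tape and read off by the QTM to reproduce each step. Since a quantum transition table is by definition a superposition $\sum_i\alpha_i\ket{\pair{T_i}}$ and quantum advice is a superposition $\sum_i\alpha_i\ket{a_i}$, I would run these per-string constructions coherently and appeal to the linearity of the time-evolution operators, taking care that the superposed branches stay synchronized.

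For Part~(1), fix an index pair $(n,l)$ and write $h(l)=\sum_i\alpha_i\ket{a_i}$ with each $a_i\in\Theta^{r(l)}$. For each classical string $a_i$ I would invoke Lemma~\ref{QTM-to-QFA} to obtain the $O(r(l))\cdot 2^{O(\ell(n,l))}$-state $2$qfa $N^{(a_i)}_{n,l}$ simulating $M$ on $(x,a_i)$, together with its encoded transition table $\pair{T^{(a_i)}_{n,l}}$. Because all the $a_i$ share the length $r(l)$, and the state set $Q'$ of Lemma~\ref{QTM-to-QFA} (built from surface configurations of $M$) does not depend on the \emph{content} of $a_i$, all the $N^{(a_i)}_{n,l}$ share one underlying state machine whose transition tables differ only in their amplitude-generating circuits; I would pad the encodings to a common encoded length, bounded by the Section~\ref{sec:transition-table} analysis by $O(r(n,l)^9\log^2 t(n,l))\cdot 2^{O(\ell(n,l))}$. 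Taking $N_{n,l}$ to be this common $2$sqfa and setting $\ket{\psi_{n,l}}=\sum_i\alpha_i\ket{\pair{T^{(a_i)}_{n,l}}}$, the evolution of $N_{n,l}$ under $\ket{\psi_{n,l}}$ drives the table branch $\ket{\pair{T^{(a_i)}_{n,l}}}$ exactly as $M$ reading the classical advice $a_i$, so the joint configuration--table state reproduces $M$'s joint configuration--advice state step for step and the halting measurements agree; linearity then yields the claimed acceptance and rejection probabilities within expected $O(t(m(x),|x|))$ time.

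For Part~(2), I would run Lemma~\ref{QFA-to-QTM} in superposition. The quantum advice is assembled as a tensor product $h(l)=\bigotimes_{n\le \bar n_l}\ket{\psi_{n,l}}$ over all $n\le\bar n_l=\max\{m(x):|x|=l\}\le b(l)$, with the blocks placed in fixed classical positions framed by separators as in Lemma~\ref{QFA-to-QTM}; its total encoded length is $O(b(l)\cdot\max_n s(n,l))=O(b(|x|)s(m(x),|x|))$. Given $(x,1^{m(x)},h(|x|))$, the QTM reads $1^{m(x)}=1^n$, navigates to the $n$th block $\ket{\psi_{n,l}}$ by position, and simulates $N_{n,|x|}$ on $x$ by the sweeping procedure of Lemma~\ref{QFA-to-QTM}: it stores the surface configuration of $N_{n,|x|}$ on its work tape in $O(\log r(m(x),|x|))$ space, and executes one logical step by sweeping that block, reading the encoded circuits $C^{(n,l)}_{q,\sigma}$ and applying their gates. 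The point is that this read is performed \emph{reversibly}, using the advice register as a coherent control rather than measuring it, so each table branch $\ket{\pair{T_i}}$ drives the work tape exactly as $N_{n,|x|}$ under $T_i$; the spectator blocks $\ket{\psi_{n',l}}$ with $n'\ne n$ are never touched and, being a product state, do not disturb the computation. Counting one sweep of the length-$O(b(l)s)$ advice plus $O(\log r)$ bookkeeping per logical step gives the stated time bound $O(b(|x|)s(m(x),|x|)t(m(x),|x|)\log r(m(x),|x|))$.

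The main obstacle I expect is the faithful preservation of quantum interference across the superposed branches, exactly the difficulty already confronted in the timing discussion preceding Lemma~\ref{QFA-to-QTM}. Two points require care. First, the superposition over transition tables (resp.\ advice strings) must be processed \emph{coherently}: were the advice register measured during a sweep, interference among the $\ket{\pair{T_i}}$ would be destroyed, so I must realize the simulation as a unitary controlled on the advice, consistent with the fact that the $2$sqfa itself only ever measures its \emph{inner state}. Second, the branches must reach the halting measurements \emph{in lockstep}: each logical step of $N_{n,l}$ must cost the same number of QTM micro-steps in every branch, which is precisely why the encoded length is taken as $\max_i|\pair{T_i}|$ and all encodings are padded to this common value in both parts. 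Once reversibility and synchronization are secured, the correctness of both simulations reduces, branch by branch, to the already-established Lemmas~\ref{QTM-to-QFA} and~\ref{QFA-to-QTM}.
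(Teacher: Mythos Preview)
Your proposal is correct and follows essentially the same approach as the paper: for each classical advice string $a$ (resp.\ transition table $T$) you invoke the construction of Lemma~\ref{QTM-to-QFA} (resp.\ Lemma~\ref{QFA-to-QTM}) to obtain $N_{n,l,a}$ with table $T_{n,l,a}$ (resp.\ the sweeping QTM reading $\pair{T}$), and then take the superposition $\qubit{\psi_{n,l}}=\sum_a\alpha_a\qubit{T_{n,l,a}}$ (resp.\ assemble all the $\qubit{\psi^{(code)}_{n,l}}$ blocks with unary markers and separators into $h(l)$). You are in fact more explicit than the paper about the padding-to-common-encoded-length and lockstep-timing issues needed for coherence to be preserved, which the paper's proof largely leaves implicit.
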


\begin{proof}
(1) Let $m$, $r$, $t$, $\ell$, $M$, $h$, and $\Theta$ satisfy the premise of the lemma.
Assume that $M$ is of the form $(Q,\Sigma,\{\cent,\dollar\}, \Gamma, \Theta,\Xi,\delta,q_0, Q_{acc},Q_{rej})$.
Similarly to the proof of Lemma \ref{QTM-to-QFA}, we want to construct from $M$ a nonuniform family $\NN= \{N'_{n,l}\}_{n,l\in\nat}$ of 2sqfa's.
For each index $l\in\nat$, we assume that $h(l)$ has the form $\sum_{a:|a|=r(l)} \alpha_a\qubit{a}$ with $\sum_{a:|a|=r(l)}|\alpha_a|^2=1$, where $a$ ranges over $\Theta^{r(l)}$.

Let $a$ denote any advice string in $\Theta^{r(l)}$. We write $M_a$ to denote the QTM $M$ whose advice tape consists of $a$.
We first construct a ``2qfa''  $N_{n,|x|,a}$ from $M_a$ in a way similar to the proof of Lemma \ref{QTM-to-QFA} except that we translate its transition function $\delta$ into a transition table $T_{n,l,a}$ partly composed of the descriptions of quantum circuits, as discussed in Section \ref{sec:transition-table}.
This new machine $N_{n,l,a}$ has $O(r(l))\cdot 2^{O(\ell(n,l))}$
inner states and
$T_{n,l,a}$ can be expressed as its encoded string $\pair{T_{n,l,a}}$ of length $O(|Q|^9|\Xi|^8\log^2{t(n,l)})$, which is further bounded by $O(r(l)^9  \log^2t(n,l))\cdot 2^{O(\ell(n,l))}$.
From such transition tables, we define
$\qubit{\psi_{n,l}}$ to be the superposition $\sum_{a:|a|=r(l)}\alpha_a\qubit{T_{n,l,a}}$, which is used as a quantum transition table for $N_{n,l,a}$.

Finally, we define a 2sqfa $N'_{n,|x|}$ as follows. With the use of the  quantum transition table $\qubit{\psi_{n,|x|}}$, we start with an input $\qubit{x}$ and, for each transition table $T_{n,|x|,a}$ in  $\qubit{\psi_{n,|x|}}$,
we run $N_{n,|x|,a}$ on $x$ by following the transition instructions of  $T_{n,|x|,a}$.
This new machine $N'_{n,|x|}$ is the desired 2sqfa.

(2) From the premise of the lemma, we take $\{N_{n,l}\}_{n,l\in\nat}$ and $\{\qubit{\psi_{n,l}}\}_{n,l\in\nat}$.
For brevity, let $e=\ceilings{\log|Q||\Xi_{\lambda}|}+2$.
Assume that each superposition $\qubit{\psi_{n,l}}$ has the form $\sum_{a:|a|=r(n,l)}\alpha_a\qubit{T_{n,l,a}}$, where $T_{n,l,a}$ is $N_{n,l}$'s transition table of encoded length at most $s(n,l)$.
The encoding of $\qubit{\psi_{n,l}}$, denoted by $\qubit{\psi^{(code)}_{n,l}}$, is the superposition $\sum_{a:|a|=r(n,l)} \alpha_a\qubit{\pair{T_{n,l,a}}}$, where $\pair{T_{n,l,a}}$ is the encoded string of $T_{n,l,a}$.

Our goal is to construct the desired advised QTM $M$ and the desired  quantum advice function $h$. The function $h$ is simply defined by setting $h(l)$ to be $\qubit{1^1}\qubit{\#} \qubit{\psi^{(code)}_{1,l}}\qubit{\#^2} \qubit{1^2}\qubit{\#} \qubit{\psi^{(code)}_{2,l}} \qubit{\#^2} \cdots \qubit{1^{b(l)}}\qubit{\#}\qubit{\psi^{(code)}_{b(l),l}}$ for any number $l\in\nat$, where $\#$ is a new symbol used as a separator.
Similarly to the proof of Lemma \ref{QFA-to-QTM}, $M$ works roughly as follows. On input $(x,1^{m(x)},h(|x|))$, $M$ first sweeps its advice tape to locate  $\qubit{\psi^{(code)}_{m(x),|x|}}$ and prepares the quantum state $\qubit{0^e}$ in the form of inner states.
When $N_{m(x),|x|}$ is in inner state $q$ scanning $\sigma$, $M$ executes a quantum circuit $C^{(m(x),|x|)}_{q,\sigma}$ described by a series of quantum gates specified inside $\qubit{\psi_{m(x),|x|}}$ and approximately generates a quantum state $\sum_{(p,d,\xi)}\alpha_{(q,\sigma,p,d,\xi)} \qubit{p,d,\xi}$ from $\qubit{0^{e}}$, where $\alpha_{q,\sigma,p,d,\xi}$ is an appropriate transition amplitude incurred by $M$.
The expected runtime of $M$ is bounded by $O(t(m(x),|x|) |h(|x|)| \log{r(m(x),|x|)})$, which is at most $O(b(|x|) s(m(x),|x|) t(m(x),|x|) \log{r(m(x),|x|)})$.
\end{proof}

Proposition \ref{character-2sPBQ}(1) directly follows from Lemmas \ref{poly-honest-bound}(1) and \ref{QTM-quantum-advice}.

\begin{proofof}{Proposition \ref{character-2sPBQ}(1)}
Take any parameterized decision problem $(L,m)$ and let $\LL=\{(L_n^{(+)},L_n^{(-)})\}_{n\in\nat}$ denote the family of promise decision problems induced from $(L,m)$.

(Only if--part)
Let us assume that $(L,m)$ is in $\para\pt\bql/\qpoly\cap \phsp$.
Take an advised QTM $M$ together with an $r(|x|)$-bounded quantum advice function $h$ for which $M$ with $h$ solves $(L,m)$ with bounded-error probability in expected $p(m(x),|x|)$ time using at most $\ell(m(x))$ space, where $p$ and $r$ are fixed polynomials and $\ell$ is a fixed logarithmic function.
Since $m$ is polynomially honest, there is an absolute constant $c>0$ such that,  for any $n$ and $x$, $x\in L_n^{(+)}\cup L_n^{(-)}$ implies $|x|\leq n^c+c$. This concludes that $\LL$ has a polynomial ceiling.

By Lemma \ref{QTM-quantum-advice}(1), we can convert the pair $(M,h)$ into a nonuniform family $\{N_{n,l}\}_{n,l\in\nat}$ of 2sqfa's with $O(r(l))\cdot 2^{O(\ell(n,l))} \subseteq (nl)^{O(1)}$ inner states together with a family $\{\qubit{\psi_{n,l}}\}_{n,l\in\nat}$ of quantum transition tables of encoded length $O(r(l)^9\log^2{p(n,l)})\cdot 2^{O(\ell(n,l))} \subseteq (nl)^{O(1)}$ since $2^{O(\ell(n,l))}\subseteq (nl)^{O(1)}$.
Since $|x|\leq n^c+c$ for any input $x\in L_n^{(+)}\cup L_n^{(-)}$, it suffices to consider the case of $l\leq n^c+c$.
Let us define $N'_n$ that works as follows.
We prepare a new quantum transition table $\qubit{\psi'_n} = \qubit{1^1}\qubit{\#}\qubit{\psi_{n,1}}\qubit{\#^2} \qubit{1^2}\qubit{\#} \qubit{\psi_{n,2}} \qubit{\#^2} \cdots
\qubit{1^{n^c+c}}\qubit{\#} \qubit{\psi_{n,n^c+c}}$.
On input $x$,  using $n^{O(1)}$ extra inner states, $N'_n$ first sets $l=|x|$ if $|x|\leq n^c+c$ and $l=n^c+c$ otherwise, and $N'_n$ then simulates $N_{n,l}$ on $x$ following the quantum transition table $\qubit{\psi_{n,l}}$ chosen from $\qubit{\psi'_n}$.
The expected runtime of $N'_n$ is bounded by $(n|x|)^{O(1)}$.
Finally,  we set $\NN'=\{N'_{n}\}_{n\in\nat}$ and $\Psi'=\{\qubit{\psi'_n}\}_{n\in\nat}$. Note that $\Psi'$ has encoded length of $n^{O(1)}$.

It thus follows that $M$ accepts (resp., rejects) $(x,h(|x|))$ with bounded-error probability iff $N'_{m(x)}$ accepts (resp., rejects) $x$ following $\qubit{\psi'_{m(x)}}$ with bounded-error probability. Since $M$ with $h$ solves $(L,m)$ with bounded-error probability, $\NN'$ solves $\LL$ following $\Psi'$ with bounded-error probability as well. This implies that $\LL$ is in $\pt\twosbq/\poly$.

(If--part)
For the converse implication, we assume that $\LL$ is in $\pt\twosbq/\poly$. Note that $|x|=n^{O(1)}$ holds for all $x\in L_n^{(+)}\cup L_n^{(-)}$. Take a family $\Psi = \{\qubit{\psi_n}\}_{n\in\nat}$ of quantum transition tables of encoded length at most $s(n)$ and a nonuniform family $\NN= \{N_n\}_{n\in\nat}$ of $r(n)$-state 2sqfa's that solves $\LL$ in expected $t(n,|x|)$ time with bounded-error probability by following  the transition tables embedded in  $\Psi$, where $r$, $s$, and $t$ are suitable polynomials.
Since $\{x\mid m(x)=n\} = L_n^{(+)}\cup L_n^{(-)}$, $m$ is polynomially bounded. Let us take a polynomial $b$ satisfying $m(x)\leq b(|x|)$ for all $x$.
Since $\LL$ is induced from $(L,m)$, Lemma \ref{poly-honest-bound}(1)   implies that $m$ is polynomially honest; thus, $(L,m)$ belongs to $\phsp$. By Lemma \ref{QTM-quantum-advice}(2), there exist an advised QTM $M$ and a polynomially-bounded quantum advice function $h$ such that $M$ with $h$ simulates $\NN$ with $\Psi$ in expected time $O(b(|x|)s(m(x))t(m(x),|x|) \log{r(m(x))}) \subseteq (|x|m(x))^{O(1)}$ using space $O(\log{r(m(x))}) \subseteq O(\log{m(x)})$. We therefore conclude that $(L,m)$ belongs to $\para\pt\bql/\qpoly$.
\end{proofof}

For the remaining proof of Proposition \ref{character-2sPBQ}(2), we need to claim the second crucial lemma, Lemma \ref{2sqfa-to-advice}, which can be shown in parallel to  Lemmas \ref{para-promise-QTM}--\ref{para-promise-2qfa}.

\begin{lemma}\label{2sqfa-to-advice}
Let $m$ denote a size parameter and let $b$, $r$, $s$, $t$, and $\ell$ be
arbitrary functions.
\renewcommand{\labelitemi}{$\circ$}
\begin{enumerate}\vs{-2}
  \setlength{\topsep}{-2mm}%
  \setlength{\itemsep}{1mm}%
  \setlength{\parskip}{0cm}%

\item Take an advised QTM $M$ and a quantum advice function $h$ with an advice alphabet $\Theta$. Assume that $h(l)$ produces a superposition of strings in $\Theta^{r(l)}$. There exist a nonuniform family $\NN= \{N_{n,l}\}_{n,l\in\nat}$ of 2sqfa's with $O(r(l))\cdot 2^{O(\ell(n,l))}$ states and a family $\Psi = \{\qubit{\psi_{n,l}}\}_{n,l\in\nat}$ of $\NN$'s quantum transition tables of encoded length $O(r(l)^9 \log^2t(n,l))\cdot 2^{O(\ell(n,l))}$ such that, for any string $x$, if $M$ accepts (resp., rejects) $(1^{n}\#x,h(|x|))$ with bounded-error probability in expected  $t(n,|x|)$ time using at most $\ell(n,|x|)$  space, then $N_{n,|x|}$  accepts (resp., rejects) $x$ following the quantum transition table $\qubit{\psi_{n,|x|}}$ with bounded-error probability in expected $O(t(n,|x|))$ time.

\item Let $\NN= \{N_{n,l}\}_{n,l\in\nat}$ be a nonuniform family of $r(n,l)$-state 2sqfa's and a family $\{\qubit{\psi_{n,l}}\}_{n,l\in\nat}$ of superpositions of $\NN$'s transition tables of encoded length $s(n,l)$. There exist an advised QTM $M$ and an  $O(ns(n,|x|))$-bounded quantum advice function $h$ such that, for any string $x$, if $N_{n,|x|}$ with the quantum transition table $\qubit{\psi_{n,|x|}}$ accepts (resp., rejects) $x$ with bounded-error probability in expected $t(n,|x|)$ time, then $M$ accepts (resp., rejects) $(1^{n}\#x,h(|x|))$ with bounded-error probability in expected $O(ns(n,|x|)t(n,|x|)\log{r(n,|x|)})$ time using $O(\log{r(n,|x|)})$ space.
\end{enumerate}
\end{lemma}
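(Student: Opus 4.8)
The plan is to prove both parts by transporting, almost verbatim, the constructions of Lemmas \ref{para-promise-QTM} and \ref{para-promise-2qfa} into the quantum-advice / super-2qfa setting, using the transition-table encoding of Section \ref{sec:transition-table} in exactly the way Lemma \ref{QTM-quantum-advice} adapts Lemmas \ref{QTM-to-QFA}--\ref{QFA-to-QTM}. Indeed, Lemma \ref{2sqfa-to-advice} is to Lemma \ref{QTM-quantum-advice} what Lemmas \ref{para-promise-QTM}--\ref{para-promise-2qfa} are to Lemmas \ref{QTM-to-QFA}--\ref{QFA-to-QTM}: the sole structural change is that the QTM now operates on an input of the form $1^{n}\#x$ rather than on $x$ together with a separate unary track, so the index $n$ is recovered directly from the $1^{n}$-prefix. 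This same bookkeeping difference is what lets me drop the external bound $b$ that appears in the advice length of Lemma \ref{QTM-quantum-advice}(2) and replace it by the factor $n$ in the advice bound of part~(2).

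For part (1) I would first write the quantum advice as $h(l)=\sum_{a\colon|a|=r(l)}\alpha_a\qubit{a}$ and freeze a single classical branch $a$. On that branch the advice tape of $M$ carries the fixed string $a$, so $M_a$ is an ordinary advised QTM, and I can apply the Lemma \ref{para-promise-QTM}-style simulation to obtain a 2qfa $N_{n,|x|,a}$ that internally regenerates the prefix $1^{n}\#$ (legitimate since $n$ and $|x|$ are fixed for this machine) and thereafter tracks a surface configuration of $M_a$ on $(1^{n}\#x,a)$ inside an inner state of size $O(r(l))\cdot 2^{O(\ell(n,l))}$. I would then encode the transition function of $N_{n,|x|,a}$ as a transition table $T_{n,l,a}$ through the quantum-circuit scheme of Section \ref{sec:transition-table}, whose encoded length is $O(r(l)^{9}\log^{2}t(n,l))\cdot 2^{O(\ell(n,l))}$ by the length analysis there (using Lemma \ref{Solovay-Kitaev}), and finally reassemble the branches into the single quantum transition table $\qubit{\psi_{n,l}}=\sum_{a}\alpha_a\qubit{T_{n,l,a}}$. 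Running the resulting 2sqfa $N_{n,|x|}$ under $\qubit{\psi_{n,|x|}}$ reproduces $M$'s computation on $(1^{n}\#x,h(|x|))$ branch by branch, so the acceptance/rejection probabilities and the expected $O(t(n,|x|))$ runtime are inherited.

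For part (2) I would run the construction in reverse: encode each $\qubit{\psi_{n,l}}$ as $\qubit{\psi^{(code)}_{n,l}}=\sum_{a}\alpha_a\qubit{\pair{T_{n,l,a}}}$ and define the quantum advice $h(l)$ to be the concatenation over a separator $\#$ of the blocks $1^{n}\#\,\qubit{\psi^{(code)}_{n,l}}$, giving advice length $O(n\,s(n,|x|))$. The advised QTM $M$ on $(1^{n}\#x,h(|x|))$ first verifies the $1^{n}\#x$ format and reads $n$ from the prefix, sweeps the advice tape to locate the block $\qubit{\psi^{(code)}_{n,|x|}}$, and then simulates $N_{n,|x|}$ on $x$ by executing, gate by gate, the quantum circuits $C^{(n,|x|)}_{q,\sigma}$ named inside the superposition, in the manner of Lemma \ref{QFA-to-QTM}. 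Storing only the current inner state of $N_{n,|x|}$ costs $O(\log r(n,|x|))$ space, and each simulated step requires a bounded number of advice- and work-tape sweeps, which yields the stated $O(n\,s(n,|x|)\,t(n,|x|)\log r(n,|x|))$ expected runtime.

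The hard part will be verifying that quantum \emph{coherence} survives both translations, since the advice and the transition tables are genuine superpositions rather than classical data. In part~(1) I must ensure that the per-branch 2qfa's $N_{n,|x|,a}$ are fused into one \emph{coherent} super-machine whose amplitudes are exactly the $\alpha_a$, so that any interference among $M$'s configurations over distinct advice branches is mirrored by interference of the 2sqfa under $\qubit{\psi_{n,l}}$; this demands that the branches advance in lockstep, which I would enforce through the sweeping discipline introduced before Lemma \ref{QFA-to-QTM}, together with padding so that every $T_{n,l,a}$ shares a common encoded length (the maximum, as in the definition of encoded length). Symmetrically, in part~(2) the QTM must read and apply the superposed circuit descriptions without collapsing $\qubit{\psi^{(code)}_{n,|x|}}$, and must keep the inaccuracy of each $C^{(n,|x|)}_{q,\sigma}$ (bounded by Lemma \ref{Solovay-Kitaev}) uniformly below the per-step budget across the entire superposition, so that the error accumulated over the expected $t(n,|x|)$ steps stays within the bounded-error margin.
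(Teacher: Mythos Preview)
Your proposal is correct and follows essentially the same route as the paper's own proof, which is explicitly described as ``a combination of the proofs of Lemma~\ref{QTM-quantum-advice} and Lemmas~\ref{para-promise-QTM}--\ref{para-promise-2qfa}.'' In fact you supply more detail than the paper's sketch: the paper simply freezes each classical branch $a$ to form $M_a$, invokes the Lemma~\ref{para-promise-QTM} construction to get $N_{n,l,a}$ and $T_{n,l,a}$, superposes to $\qubit{\psi_{n,l}}=\sum_a\alpha_a\qubit{T_{n,l,a}}$, and for part~(2) reverses via the Lemma~\ref{para-promise-2qfa} construction with $h(l)$ carrying the encoded superpositions---exactly your plan, and your added remarks on padding and lockstep sweeping to preserve coherence are appropriate elaborations left implicit in the paper.
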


\begin{proof}
A basic idea of the proof of the lemma is a combination of the proofs of Lemma \ref{QTM-quantum-advice} and Lemmas \ref{para-promise-QTM}--\ref{para-promise-2qfa}.
Therefore, we give only the sketch of the proof.

(1) Take an advised QTM $M$ and a quantum advice function $h$ as in the premise of the lemma. Assume that $h(l)$ has the form $\sum_{a:|a|=r(l)}\alpha_a\qubit{a}$ over an advice alphabet $\Theta$.
For each advice string  $a\in\Theta^{r(l)}$,
we define $M_a$ from $M$ in a way similar to the proof of
Lemma \ref{QTM-quantum-advice}(1).
By a construction similar to the one used in the proof of Lemma \ref{para-promise-QTM}, we first build a family $\{N_{n,l,a}\}_{n,l,a}$ of 2sqfa's and a family $\{T_{n,l,a}\}_{n,l,a}$ of transition tables of encoded length $O(r(l)^9\log^2p(n,l))\cdot 2^{O(\ell(n,l))}$. From these $N_{n,l,a}$'s and $T_{n,l,a}$'s, we define  a quantum transition table $\qubit{\psi_{n,l}}$ as $\sum_{a:|a|=r(l)}\alpha_a\qubit{T_{n,l,a}}$ and a new 2sqfa $N'_{n,l}$ so that, for all $a\in\Theta^{r(|x|)}$, $N'_{n,l}$ simulates $N_{n,l,a}$ according to the quantum transition table $\qubit{T_{n,l,a}}$.

(2) Starting with a given family $\NN=\{N_{n,l}\}_{n,l\in\nat}$ of 2sqfa's and also a given family $\Psi = \{\qubit{\psi_{n,l}}\}_{n,l\in\nat}$ of quantum transition tables, we construct the desired advised QTM $M$ and the desired quantum advice function $h$ in a way similar to the proof of Lemma \ref{para-promise-2qfa} except that $h(l)$ holds the information on the superpositions of encoded transition tables of $N_{n,l}$ and that $M$ first retrieves a transition table from $h(|x|)$ and simulates $N_{n,|x|}$ on $x$.
\end{proof}

With the help of Lemmas \ref{poly-honest-bound}(2) and \ref{2sqfa-to-advice}, we intend to verify Proposition \ref{character-2sPBQ}(2).

\begin{proofof}{Proposition \ref{character-2sPBQ}(2)}
Let $\LL=\{(L_n^{(+)},L_n^{(-)})\}_{n\in\nat}$ be any $\dl$-good family of promise decision problems and let $(K,m)$ denote the parameterized decision problem induced from $\LL$.

(Only If--part) Assume that $(K,m)$ is in $\para\pt\bql/\qpoly\cap \phsp$.  There exist an advised QTM $M$ and a polynomially-bounded quantum advice function $h$ for which $M$ with $h$ solves $(K,m)$ with bounded-error probability in expected $t(m(x),|x|)$ time using at most $\ell(m(x))$ space, where $p$ is a fixed polynomial and $\ell$ is a fixed logarithmic function.
Take a suitable polynomial $r$ satisfying $|h(l)|\leq r(l)$ for all $l\in\nat$.
By Lemma \ref{2sqfa-to-advice}(1), we can take an appropriate nonuniform family $\NN= \{N_{n,l}\}_{n,l\in\nat}$ of expected-$O(t(n,l))$-time 2sqfa's having $O(r(l))\cdot 2^{O(\ell(n,l))}$ inner states for which $\NN$ with $\Psi$ solves $\LL$ with bounded-error probability following a certain family $\Psi = \{\qubit{\psi_{n,l}}\}_{n,l\in\nat}$ of  quantum transition tables of encoded length $O(r(l)^9\log^2{t(n,l)})\cdot 2^{O(\ell(n,l))}$.
Since $m$ is polynomially honest, there is a polynomial $s$ satisfying $|x|\leq s(m(x))$ for all $x$. This implies that $\LL$ has a polynomial ceiling.

We first define $\qubit{\psi'_n}$ as $\qubit{1^1}\qubit{\#} \qubit{\psi_{n,1}}\qubit{\#^2} \qubit{1^2}\qubit{\#} \qubit{\psi_{n,2}}\qubit{\#^2} \cdots \qubit{1^{s(n)}}\qubit{\#} \qubit{\psi_{n,s(n)}}$. Note that $\qubit{\psi'_n}$ has encoded length $n^{O(1)}$ by the choice of $r$ and $\ell$.
We then make $N'_n$ simulate $N_{n,|x|}$ as follows: on input $x$, $N'_n$ sets $l=|x|$ if $|x|\leq s(n)$ and $l=s(n)$ otherwise, and follows the quantum transition table $\qubit{\psi_{n,l}}$ chosen from $\qubit{\psi'_n}$.
It then follows that $N'_n$ uses $n^{O(1)}$ inner states and solves $(L_n^{(+)},L_n^{(-)})$ in expected $n^{O(1)}$ time.
Therefore, $\LL$ belongs to $\pt\twosbq/\poly$.

(If--part) On the contrary, we assume that $\LL$ is in $\pt\twosbq/\poly$. Lemma \ref{poly-honest-bound}(2) implies that $m$ is polynomially honest; thus, $(K,m)$ belongs to $\phsp$. Take a polynomial $b$ satisfying $m(x)\leq b(|x|)$ for all $x$.
We thus aim at verifying that $(K,m)$ belongs to  $\para\pt\bql/\poly$.
Take a nonuniform family $\NN= \{N_n\}_{n\in\nat}$ of $r(n)$-state 2sqfa's solving $\LL$ with bounded-error probability in expected $t(n,|x|)$ time following a certain family $\Psi = \{\qubit{\psi_n}\}_{n\in\nat}$ of  quantum transition tables of encoded length $s(n,l)$, where $r$, $s$, and $t$ are suitable polynomials.
Since $m(x)\leq b(|x|)$ for all $x$,
Lemma \ref{2sqfa-to-advice}(2) guarantees the existence of an advised QTM $M$ and an $O(b(|x|)s(m(x),|x|))$-bounded quantum advice function $h$ for which $M$ with $h$ simulates $\NN$ in expected time $O(b(|x|)s(m(x),|x|)t(m(x),|x|)\log{r(m(x))}) \subseteq (|x|m(x))^{O(1)}$  using space $O(\log{r(m(x))}) \subseteq O(\log{m(x)})$. We therefore conclude that $(K,m)$ indeed falls in $\para\pt\bql/\qpoly$.
\end{proofof}

\section{Challenging Open Questions}

\emph{Finite automata} are generally regarded as one of the simplest models of computation because their behaviors are simple enough to describe and easy to execute. Throughout this exposition, we have aimed at establishing bridges between such simple models and space-bounded advised computations by way of utilizing parameterized decision  problems.
Our effort in this exposition has earned a partial success in making a limited progress in the \emph{theory of nonuniform state complexity}, which was initiated by Berman and Lingas \cite{BL77} and Sakoda and Sipser \cite{SS78} and lately revitalized by Kapoutsis \cite{Kap09,Kap12,Kap14}, Kapoutsis and Pighizzini \cite{KP15}, and Yamakami \cite{Yam18}.

To direct fruitful future research, we intend to provide a short list of challenging open questions to the avid reader.

\renewcommand{\labelitemi}{$\circ$}
\begin{enumerate}\vs{-2}
  \setlength{\topsep}{-2mm}%
  \setlength{\itemsep}{1mm}%
  \setlength{\parskip}{0cm}%

\item Computational models of one-way finite automata are relatively easier to analyze than two-way head models; however, we have not determined all inclusion and separation relationships among the nonuniform state complexity classes appearing in Figure \ref{fig:relationship}. An important task is to complete this figure by establishing all missing relationships in the figure; for instance, prove or disprove that $\onen\nsubseteq\twod$ and $2^{\onebp}\nsubseteq \onenq$.

\item In computational complexity theory, many intriguing complexity classes have been introduced and extensively studied over the  years.  In comparison, the \emph{theory of nonuniform state complexity} is still in a cradle stage of its development. Develop that to a full fledged theory by, e.g., further introducing natural, intriguing state complexity classes and studying their structural properties. A simple example of such complexity classes is quantum interactive proof systems with quantum finite automata verifiers \cite{NY09}. Even for  structural properties of the existing classes, there still remain numerous open questions of whether or not, e.g., $\twon$ is closed under complementation.

\item Since finite automata are simple in structure, they may be suitable to be used for numerous applications outside of automata theory. In this exposition, we have shown only a direct application of nonuniform state complexity to the precise characterizations of several main-stream advised space complexity classes. We need to discover more intriguing applications of nonuniform state complexity in other fields of computer science.

\item The choice of transition amplitudes and transition probabilities seems to endow underlying quantum and probabilistic finite automata with quite different computational power. For example, we experience a great difficulty in showing that $\twoq=\co\twoq$ and $\twop = \co\twop$ because of the lack of our understanding of complex transition amplitudes and real transition probabilities. A much better understanding of complex and real numbers are definitely needed in further analyses of various computations. See also \cite[Section 5.3]{Yam03} for relevant subjects.

\item We have studied automata families of polynomially many inner states as well as exponentially many inner states. Since there is an enormous gap between them, it seems natural to study finite automata families of ``sub-exponentially many'' inner states. One of such  sub-exponential state complexities is $2^{\ell(n)}$ with $\ell(n) = \Theta(\log^k{n})$ for a fixed constant $k\geq2$. The nonuniform state complexity class defined by automata families of such state complexity will naturally fill the gap between polynomial state complexity  and exponential state complexity and will surely replenish our study of nonuniform state complexity.

\item Another important consideration is \emph{unary languages} (or \emph{tally sets}). In general, finite automata working on unary input strings are succinctly called \emph{unary automata}. With the use of such unary automata, from any nonuniform state complexity class $\CC$ (such as $\onebq$ and $\pt\twobq$), we can define $\CC/\mathrm{unary}$ by restricting all underlying finite automata to be unary automata.
    For those restrictive classes, we need to prove various inclusion and separation relationships. As for the (non)deterministic cases, refer to, e.g., \cite{Kap09,Kap12,Kap14,KP15}.

\item Our main concern in this exposition has been \emph{nonuniform}  families of various types of finite automata. As another direction of the current research, we suggest to study a ``uniform'' variant of those families. A family $\{M_n\}_{n\in\nat}$ of automata is \emph{uniform} if there exists a DTM $N$ that, on input $1^n$, produces an appropriate encoding $\pair{M_n}$ of $M_n$ on its write-once output tape. Along this line of study, there are only a few interesting results found in  \cite{BL77,Kap14,Yam18}. We need to explore this subject further in connection to, e.g., main-stream complexity classes.
\end{enumerate}

\let\oldbibliography\thebibliography
\renewcommand{\thebibliography}[1]{%
  \oldbibliography{#1}%
  \setlength{\itemsep}{-1pt}%
}
\bibliographystyle{alpha}

\end{document}